\documentclass{llncs}
\usepackage{fullpage}
\usepackage[T1]{fontenc}
\usepackage{amsmath,amsfonts,amssymb,amstext}
\usepackage[usenames,dvipsnames]{xcolor}
\definecolor{DarkGreen}{rgb}{0.0,0.5,0.0}
\usepackage{url}
\usepackage{booktabs}
\usepackage{enumerate}
\usepackage{tikz}
\usepackage[colorlinks=true,linkcolor=MidnightBlue,citecolor=DarkGreen,urlcolor=Maroon]{hyperref}
\makeatletter
\g@addto@macro\appendix{}
\makeatother

\newcommand{\naturals}{\mathbb{N}}
\newcommand{\integers}{\mathbb{Z}}

\renewcommand{\d}[1]{\ensuremath{\operatorname{d}\!{#1}}}

\newcommand{\ignore}[1]{}

\newcommand{\E}{\mathbb{E}}

\title{Oblivious Self-Distance Symmetric Rendezvous on the Integer Line\thanks{This paper is the full version of a work accepted to the 37th International Symposium on Algorithms and Computation (ISAAC 2026), to be held December~6--9,~2026, in Hangzhou, China~\cite{GeorgiouGravelMandicKapusinISAAC26}.}}
\author{
Konstantinos Georgiou\orcidID{0009-0007-9677-341X}\thanks{Research supported in part by an NSERC Discovery Grant and by the Toronto Metropolitan University Faculty of Science Dean's Research Fund. \texttt{konstantinos@torontomu.ca}}
\inst{1}
\and
Claude Gravel\thanks{\texttt{gravel@torontomu.ca}}
\inst{2}
\and
Joey Kapusin\thanks{\texttt{jkapusin@torontomu.ca}}
\inst{1}
\and
Lazar Mandic\thanks{\texttt{lazar.mandic@torontomu.ca}}
\inst{1}
}

\institute{
Department of Mathematics, Toronto Metropolitan University, Toronto, Ontario, Canada
\and
Department of Computer Science, Toronto Metropolitan University, Toronto, Ontario, Canada
}

\date{}
\begin{document}
\hypersetup{pageanchor=false}
\maketitle
\thispagestyle{empty}

\begin{abstract}

Symmetric rendezvous on the line is a classical search problem in which two agents, initially placed at distance $2d$, must follow the same randomized strategy in order to meet as quickly as possible. In the standard model, agents may condition their actions on the entire execution history, and both the known- and unknown-distance variants admit expected rendezvous time $\Theta(d)$. We study the role of memory by introducing oblivious self-distance strategies, in which an agent's decision depends only on her current position relative to her own starting location.

For an initial separation of $2d$, let $R_d$ denote the optimal oblivious expected rendezvous time in the known-distance setting. We develop two complementary finite-state frameworks based on absorbing Markov chains. Truncated chains give computable upper bounds through finite-support strategies, while weak-peek chains give lower bounds through a revealed-information relaxation. Together, they provide a mechanism for certifying optimality. Using that mechanism, we determine $R_1$ exactly and prove that it is attained by a finite-support strategy. For $d=2,\ldots,6$, extensive numerical optimization repeatedly gives the same truncation structure and objective values, yielding rigorous upper bounds below $7.83d^2$. We do not prove that the computed weak-peek minimizers are global, but based on the stability of the computations we conjecture that they are, in which case the corresponding truncated strategies are optimal.

We also prove that $R_d=\Theta(d^2)$. In the unknown-distance setting, we construct a universal strategy, independent of $d$, with expected rendezvous time $O(d^{2+\eta})$ for every fixed $\eta>0$. Thus, under the memory restriction, the known-distance rendezvous time becomes quadratic, while near-quadratic performance remains possible even without knowing $d$. The asymptotic analysis uses birth-death Markov chains and their electrical-network interpretation.

\keywords{Rendezvous search, Oblivious strategies, Symmetric rendezvous on the line, Random walks, Effective resistance, Expected meeting time, Markov chains}
\end{abstract}

\subsubsection*{Acknowledgements}
The authors used ChatGPT as an auxiliary tool for exploratory discussion, code generation and debugging for preliminary experiments, and consistency checks of intermediate arguments. All mathematical claims, proofs, computational results, and final verification are the sole responsibility of the authors.

\newpage
\thispagestyle{empty}
\tableofcontents

\newpage
\pagenumbering{arabic}
\hypersetup{pageanchor=true}

\section{Introduction}

Symmetric rendezvous on the line, introduced by Alpern~\cite{alpern1995rendezvous}, is one of the most extensively studied rendezvous problems in search theory. 
Two agents are initially placed on the infinite line at distance $2d$, but neither agent knows on which side the other resides. In every step, the agents execute synchronously the same randomized strategy and independently realize her random choices, with the objective of minimizing the expected rendezvous time. This symmetry requirement fundamentally distinguishes the problem from its asymmetric counterpart, where different roles may be assigned to the agents and optimal rendezvous is known.

In the known-distance setting, the value of $d$ is known to both agents. Han et al.~\cite{han2008improved} prove that optimal strategies may be restricted to movements in steps of length $d$. This justifies the standard normalization $d=1$, where the agents start at distance $2$.
Despite decades of research, the exact optimal expected rendezvous time remains unknown. The best lower and upper bounds are due to Han et al.~\cite{han2008improved}, who proved
$
4.152d \leq OPT \leq 4.2574d,
$
and conjectured that the true relative value is close to $4.25$. In the unknown-distance setting, the agents know neither the direction nor the value of $d$.
The best bounds currently known imply
$
4.152d \leq OPT \leq 13.93d,
$
where the upper bound is due to Klimm et al.~\cite{klimm2022competitive}.

Thus, both the known-distance and unknown-distance variants admit optimal expected rendezvous time linear in the initial separation. However, the exact optimal constants remain unknown. Moreover, the best known strategies rely on increasingly sophisticated randomized movement patterns. In the known-distance setting, the strongest upper bounds are obtained through block strategies of increasing support size, while in the unknown-distance setting the best algorithms exploit growing excursions together with nontrivial dependence on previous decisions. This raises a natural question: \emph{what role does memory play in rendezvous on the line?}

\subsection{Model Motivation} 
One natural way to investigate the role of memory is to progressively restrict the information available to the agents. In the classical rendezvous problem, an agent may condition her decisions on the entire execution history, including all previous movements and random choices. In contrast, we consider a substantially more restrictive setting. At every step, an agent knows only her current distance from her own origin together with the direction of that origin. She does not remember the sequence of previous movements that produced her current position. 
Conceptually, this can also be understood as the agent observing her local state (depending only on her relative location to the initial placement) and applying the same fixed stochastic decision rule.
We refer to such strategies as \emph{oblivious self-distance strategies}.

Our model may be viewed as complementary to the revealed-information framework of Alpern~\cite{Alpern2007Revealed}. There, additional information is provided to the agents during the execution of the strategy, leading to lower bounds for the classical rendezvous problem. Here, we move in the opposite direction, asking how much rendezvous performance can be achieved when information is removed rather than added. In this sense, the oblivious model provides a natural framework for quantifying the value of memory in symmetric rendezvous on the line.

The restriction is also motivated by the structure of the known algorithms. As discussed above, the best known strategies rely on increasingly sophisticated movement patterns and increasingly large supports. Consequently, both the analysis and the computation of optimal strategies become rapidly intractable. By restricting attention to oblivious self-distance strategies, we obtain a model that remains rich enough to exhibit nontrivial rendezvous behaviour, while admitting a finite-state representation that can be analyzed systematically.

\subsection{Discussion on Results \& Significance}
For an initial agent separation of $2d$, with $d\in\naturals$, let $R_d$ denote the optimal expected rendezvous time in the known-distance setting. Since the known-distance problem provides additional information to the agents, $R_d$ is a lower bound on the expected rendezvous time at distance $2d$ of every strategy admissible in the unknown-distance setting. Our results provide a detailed picture of both settings, revealing several fundamental differences between classical and oblivious rendezvous on the line.

First, unlike the classical known-distance problem, the oblivious model does not admit a scaling reduction. Indeed, agents' movements are of step-size 1. However, in the unrestricted setting, agents can employ their memory in order to apply step-size $d$ movements, effectively reducing the distance-$2d$ problem to the distance-$2$ problem. Equivalently, the classical known-distance problem is invariant under scaling, and therefore the optimal value is determined by a single constant, independently of $d$. In the oblivious setting, however, movements remain of step-size 1, and decisions depend explicitly on an agent's current distance from her origin. Consequently, changing $d$ changes the underlying optimization problem, and results for one value of $d$ do not extend to another. In particular, determining the exact value of $R_1$ does not determine the value of $R_d$ for larger values of $d$.

A second contribution concerns the structure of optimal strategies. In the classical rendezvous problem, the best known upper bounds are obtained through strategies of increasingly large support, and there is substantial evidence that optimal behaviour is not captured by finite-support strategies. In contrast, our results suggest a fundamentally different picture in the oblivious setting. We determine the exact value of $R_1$ and prove that it is attained by an optimal finite-support strategy. For $d=2,\ldots,6$, extensive numerical optimization repeatedly identifies the same finite-support structure and essentially the same objective values. The resulting truncated strategies give rigorous upper bounds, while global optimality of the corresponding weak-peek minimizers remains unproved. Based on the numerical stability, we conjecture that the reported strategies are optimal.
Thus, while restricting memory increases the expected rendezvous time from linear to quadratic, it appears to simplify the structure of optimal strategies. 
Beyond the exact computations, we construct bounded-support strategies with expected rendezvous time below $9d^2$ for every $1\leq d\leq500$, and also for $d=1000$.

A central contribution of the paper is a finite-state framework for studying oblivious rendezvous. The framework consists of two complementary components. The first produces computable upper bounds by restricting attention to finite-support strategies, while the second produces computable lower bounds through a suitable relaxation of the rendezvous problem. Together, the two frameworks provide a certification mechanism capable of proving optimality. More broadly, the framework converts an infinite-state rendezvous problem into a finite-state optimization problem, yielding new lower-bound techniques, rigorous computational methods, and a systematic approach for studying memory-constrained rendezvous.

Finally, we establish an asymptotic characterization of oblivious rendezvous on the line. We prove that $R_d=\Theta(d^2)$, and propose a universal rendezvous strategy with expected rendezvous time $O(d^{2+\eta})$, for every fixed $\eta>0$. Thus, restricting agents to oblivious self-distance strategies changes the complexity of rendezvous from linear to quadratic. Perhaps more surprisingly, the unknown-distance problem still admits rendezvous in nearly quadratic time through an infinite-support oblivious strategy, showing that the known-distance and unknown-distance settings remain nearly asymptotically equivalent even under severe memory restrictions. These results establish a qualitative separation between unrestricted-memory and oblivious rendezvous on the line, identifying memory as a fundamental resource governing both the complexity of rendezvous and the structure of optimal strategies.

\subsection{Related Work}

Rendezvous search is a classical topic in search theory and search games. Early foundational work includes the rendezvous problem on discrete locations~\cite{AW90-Discrete} and the general rendezvous-search framework introduced by Alpern~\cite{alpern1995rendezvous}. The area has since developed into a broad research direction, with surveys and monographs given in~\cite{A02-Perspective},~\cite{alpern2013search} and~\cite{alpern2006theory}. Since the literature is extensive, we mention only a few representative results and directions most relevant to the present work.

A variety of rendezvous models have been studied under different assumptions regarding the search domain, the available information, and the capabilities of the agents. Representative examples include rendezvous on 
complete graphs~\cite{cembrano2026faster}, rendezvous in rings~\cite{KDM10-RingBook}, rendezvous with advice~\cite{georgiou2019symmetric}, multi-robot rendezvous on the line~\cite{lim1997rendezvous,ozsoyeller2021multi}, rendezvous with marks left at starting locations~\cite{BG01-Marks}, rendezvous with marks dropped at chosen times~\cite{leone2018rendezvous}, and rendezvous with revealed information~\cite{Alpern2007Revealed}. Computational and symbolic approaches to rendezvous search have also been investigated~\cite{AL22-Symbolic}.

Among continuous rendezvous models, the line is one of the most extensively studied domains. 
Classical line rendezvous models distinguish between distinguishable players, who may employ different (asymmetric) strategies~\cite{alpern1995rendezvous-dist}, and indistinguishable players, who must employ the same (symmetric) randomized strategy~\cite{anderson1995rendezvous}.
Asymmetric rendezvous on the line was later related to double linear search~\cite{AG99-AsymmetricDouble}. Other line variants include rendezvous on the labeled line~\cite{CT04-Labeled}, rendezvous with revealed information~\cite{Alpern2007Revealed}, and rendezvous with marks~\cite{BG01-Marks,leone2018rendezvous}. Variants where the initial distance is known~\cite{anderson1995rendezvous,han2008improved,B99-TwoRend} and unknown~\cite{beveridge2013symmetric,klimm2022competitive} have both received considerable attention. Representative improvements for the symmetric known-distance problem were obtained by Anderson and Essegaier~\cite{anderson1995rendezvous}, Baston~\cite{B99-TwoRend}, and Han et al.~\cite{han2008improved}, while the unknown-distance setting was studied by Beveridge et al.~\cite{beveridge2013symmetric} and later improved by Klimm et al.~\cite{klimm2022competitive}. Despite substantial progress, the exact symmetric rendezvous value on the line remains unknown.

The present paper remains within the classical symmetric rendezvous on the line framework, but focuses on oblivious strategies and considers both the known-distance and unknown-distance settings.
Similar notions of oblivious agents appear in the mobile-robot literature, e.g. in~\cite{flocchini2026universal}, where agents act solely on the basis of the current observation and retain no memory of previous executions. Our work investigates how this memory restriction affects the complexity and performance of symmetric rendezvous on the line.

\subsection{Paper Organization and Summary of Results}

We begin in Section~\ref{sec:DefResults} by introducing the oblivious self-distance rendezvous model, defining the known-distance and unknown-distance settings, and presenting our main results. Section~\ref{sec:MarkovianReformulation} then reformulates the rendezvous process as an absorbing Markov chain, providing the probabilistic framework used throughout the paper.

Section~\ref{sec:SmallDFrameworks} develops our finite-state certification framework for small values of $d$. In Section~\ref{sec:TruncatedChains}, we introduce truncated chains and show how they yield computable upper bounds through finite-support strategies. Section~\ref{sec:PeekChains} introduces weak-peek chains, which provide computable lower bounds through a finite aided relaxation of the original problem. Section~\ref{sec:AllBoundsForSmallD} combines the two frameworks through a certificate mechanism that allows matching upper and lower bounds to be established whenever an optimal weak-peek solution exhibits certain structural properties. We then apply this framework in Section~\ref{sec:UpperBoundsSmallD1} to determine the exact value of $R_1$ and prove optimality of the corresponding truncated strategy.
Section~\ref{sec:NumericalComputations} presents extensive computations for $d=2,\ldots,6$. The resulting truncated strategies give rigorous finite-support upper bounds, while the numerical computations lead us to conjecture that the corresponding strategies are optimal.

Section~\ref{sec:AsymptoticBounds} turns to asymptotic values of $d$. Section~\ref{sec:ElectricalRepresentationInducedChains} develops the electrical-network machinery used in the remainder of the paper. Section~\ref{sec:QuadraticLowerBounds} establishes a universal quadratic lower bound for all oblivious self-distance strategies, including the known-distance setting. Section~\ref{sec:QuadraticUpperBound} proves an $O(d^{2+\eta})$ upper bound in the unknown-distance setting by constructing a single infinite-support strategy that is independent of $d$. Section~\ref{sec:AbsQuadraticUpperBound} establishes a pure quadratic upper bound in the known-distance setting through a family of truncated strategies. Finally, Section~\ref{sec:NumEstimatesQuadraticConstants} presents numerical evidence on the asymptotic truncation ratio and the associated quadratic rendezvous constant.
Finally, our paper concludes with a discussion on future directions in Section~\ref{sec:Conclusion}.
For a quick reference, we end this section with a summary of our main results presented in Table~\ref{tab:main-results}.

\begin{table}[!h]
\centering
\small
\begin{tabular}{p{0.78\linewidth}c}
\hline
Result & Reference \\
\hline
Truncated-chain upper-bound framework
&
Lemma~\ref{lem:UpperBoundFrame}
\\
Weak-peek chain lower-bound framework
&
Lemma~\ref{lem:WeakPeekLowerBound}
\\
Finite-state optimality certification mechanism
&
Lemma~\ref{lem:WeakPeekCertificate}
\\
Known-distance optimal result:
$R_1\approx6.467098
				$
&
Theorem~\ref{thm:optR1}
\\
Finite-support upper bounds for $d=2,\ldots,6$ (conjectured optimality)
&
Section~\ref{sec:NumericalComputations},
Table~\ref{tab:weakpeek-summary}
\\
Known-distance asymptotics:
$R_d=\Theta(d^2)$
&
Theorems~\ref{thm:QuadraticLowerBound}
and~\ref{thm:PureQuadraticTruncatedUniform}
\\
Unknown-distance asymptotics:
There exists a universal strategy, independent of $d$, with expected rendezvous time
$O(d^{2+\eta})$ for every fixed $\eta>0$
&
Theorems~\ref{thm:QuadraticLowerBound}
and~\ref{thm:NearQuadraticUpperBound}
\\
Numerical estimates of asymptotic truncation constants
&
Section~\ref{sec:NumEstimatesQuadraticConstants},
Table~\ref{tab:uniform-truncation-summary}
\\
\hline
\end{tabular}
\caption{Summary of the main results.}
\label{tab:main-results}
\end{table}

\section{Definitions and Preliminary Observations}
\label{sec:DefResults}

\subsection{Problem Definition}
\label{sec:ProblemDefinition}

We study symmetric rendezvous problems on the infinite path $\mathbb Z$ under a restricted class of randomized strategies. We focus on \emph{oblivious self-distance strategies}, where each agent selects her move based only on her current distance from her own starting position, and not on the history of the process.

We consider two agents moving synchronously on the integer line $\mathbb Z$. Their initial positions are at distance $2d$, for some integer $d\geq 1$. For the analysis, we place their initial positions at $-d$ and $d$.
Time proceeds in discrete synchronous rounds $t=0,1,2,\dots$. Let $X_t, Y_t \in \integers$ denote the displacements of the two agents from their respective starting positions. Thus their actual positions at time $t$ are
$-d+X_t$
and 
$d+Y_t$, 
with
$X_0=0$, and $Y_0=0$.
These coordinates are used only for the analysis. The agents do not share a common orientation of the line, that is, they cannot distinguish the two global directions ``left'' and ``right''. The only directional distinction available to an agent away from her starting position is whether a move increases or decreases her distance from that starting position.

At each time step, each agent moves by one unit, and hence
$X_t-X_{t-1}, Y_t-Y_{t-1}\in\{-1,+1\}$, 
for every $t\geq 1$.
Rendezvous occurs when the two agents occupy the same vertex at the same time, i.e.
$
-d+X_t=d+Y_t,
$
or
$
X_t=Y_t+2d.
$
The corresponding \emph{rendezvous time} is defined as the random variable
\begin{equation}
\label{equa:RendOriginal}
\mathcal T_d:=\inf\left\{t\geq 0:X_t=Y_t+2d\right\}.
\end{equation}

An \emph{admissible strategy} is a single randomized algorithm that is used by both agents.
At each time $t$, the next move of an agent is determined by a probability distribution that depends only on her current distance from her own starting position. Thus the algorithm assigns, for each $i\geq 0$, a probability of moving away from the starting point when the current distance is $i$.
The strategy is \emph{symmetric} in the sense that both agents use the same distribution. Conditional on their current distances from their respective starting positions, the two agents sample their moves independently according to this distribution.
Accordingly, an admissible strategy is identified with a sequence
$
(a_i)_{i\geq 0},
$
where $0\leq a_i\leq 1$ denotes the probability that an agent at distance $i\geq 1$ from her starting point moves away from that starting point. Consequently, $1-a_i$ is the probability that the agent moves toward her starting point.  
At distance $0$, both possible moves increase the distance from the
starting point from $0$ to $1$. Since the two directions are indistinguishable
under the symmetry of the model, the move from the origin must be unbiased.
For notational convenience, and to allow a uniform description of the
transition probabilities in the Markovian reformulation of Section~\ref{sec:MarkovianReformulation} below, we set
$
a_0=\frac12.
$

The objective is to minimize the \emph{expected rendezvous time}. For a fixed admissible strategy $a=(a_i)_{i\geq 0}$, let $\mathcal T_d=\mathcal T_d(a)$ denote the rendezvous time defined in~\eqref{equa:RendOriginal}. 
There are two natural versions of the problem with respect to the set of admissible strategies. In the \emph{known-distance} setting, the value of $d$ is known in advance, and the admissible strategy may depend on $d$. In the \emph{unknown-distance} setting, a single admissible sequence $a=(a_i)_{i\geq0}$ must be chosen independently of $d$ and used for all initial distances.
Let $\mathcal A$ denote the set of admissible oblivious self-distance strategies.
For each $d\geq 1$, we define
$$
R_d:=\inf_{a \in \mathcal A} \E[\mathcal T_d(a)]
$$
where the infimum is taken over all admissible oblivious self-distance strategies. 
Thus, $R_d\leq \E[\mathcal T_d(a)]$ for every $a\in\mathcal A$. In particular, $R_d$ is a lower bound on the expected rendezvous time at distance $2d$ of every strategy admissible in the unknown-distance setting.

Both settings are related to versions of symmetric rendezvous on the line studied in the literature. The difference in the present model is that the agents move one edge at a time on $\mathbb Z$, while their decisions are restricted to depend only on the distance from their own starting point. Thus the randomized strategy is specified by a self-distance sequence $a$. In particular, each move depends only on the agent's current distance from her own starting point, and not on the realized history of the walk.


\subsection{Markovian Reformulation}
\label{sec:MarkovianReformulation}

The purpose of this section is to reformulate the rendezvous problem in a Markovian framework. This allows us to express the expected rendezvous time as a hitting time in an absorbing Markov chain, which will later be used for both computation and optimization.

We first define the Markov chain associated with a fixed admissible sequence $(a_i)_{i\geq 0}$ and a fixed initial distance $2d$.
It is convenient to extend the sequence $(a_i)_{i\geq 0}$ to a \emph{signed sequence} $(b_i)_{i\in\mathbb Z}$ by setting
$$
b_i=
\begin{cases}
a_i, & i\geq 0,\\
1-a_{-i}, & i<0.
\end{cases}
$$
Thus $b_i$ represents the probability that an agent whose displacement from her own starting point is $i$ increases this displacement by one in the analysis coordinate system. This is only a notational device for the analysis and does not imply that the agents share a common orientation.

\begin{lemma}
\label{lem:MarkovianReformulation}
Fix $d\geq 1$ and an admissible sequence $(a_i)_{i\geq 0}$. Let
$$
\Omega_d=
\left\{(x,y)\in\mathbb Z^2: x-y\leq 2d,\ x-y\equiv 0 \pmod 2\right\}.
$$
For each $(x,y)\in\Omega_d$, let $S_{x,y}$ be a state, and define
$
\mathcal S_d=\left\{S_{x,y}:(x,y)\in\Omega_d\right\}.
$
The initial state is $S_{0,0}$.
Define
$
\mathcal A_d=
\left\{S_{x,y}\in\mathcal S_d:x-y=2d\right\}.
$
The states in $\mathcal A_d$ are absorbing, that is,
$P\left[S_{x,y},S_{x,y}\right]=1$
for every $S_{x,y}\in\mathcal A_d$.
For every state $S_{x,y}\in\mathcal S_d\setminus\mathcal A_d$, the transition probabilities are
$$
P\left[S_{x,y},S_{u,v}\right]=
\begin{cases}
b_x b_y, & (u,v)=(x+1,y+1),\\
b_x(1-b_y), & (u,v)=(x+1,y-1),\\
(1-b_x)b_y, & (u,v)=(x-1,y+1),\\
(1-b_x)(1-b_y), & (u,v)=(x-1,y-1),\\
0, & \text{otherwise}.
\end{cases}
$$

Then this Markov chain is the chain induced by the distance-$d$ oblivious self-distance symmetric rendezvous problem on the line.
\end{lemma}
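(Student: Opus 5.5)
The plan is to show that the joint displacement process $(X_t,Y_t)$, stopped at $\mathcal T_d$, is a time-homogeneous Markov chain on $\Omega_d$ with exactly the stated kernel. The lemma then amounts to identifying the state $S_{x,y}$ with the event $(X_t,Y_t)=(x,y)$. The argument has three steps: check that the state space is invariant, derive the one-step transition law of a single agent in analysis coordinates, and combine the two agents through conditional independence.

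\textbf{Invariance of $\Omega_d$.} First I will check that $\Omega_d$ contains every pair reachable before absorption. Since $X_0=Y_0=0$ and each coordinate changes by $\pm1$ per step, the difference $X_t-Y_t$ changes by an element of $\{-2,0,2\}$. Hence $X_t-Y_t$ stays even. It starts at $0\leq 2d$ and can only increase by at most $2$ per step, so it cannot jump past $2d$ without first equalling $2d$. Therefore, for $t\leq\mathcal T_d$ we have $X_t-Y_t\leq 2d$, with equality exactly when $t=\mathcal T_d$, by~\eqref{equa:RendOriginal}. Stopping the process at $\mathcal T_d$ thus corresponds to making the states of $\mathcal A_d$ absorbing.

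\textbf{Single-agent step in analysis coordinates.} Next I will justify that an agent at displacement $x$ moves to $x+1$ with probability $b_x$. There are three cases.
\begin{itemize}
\item If $x>0$, the distance to her origin is $x$, and moving away means moving to $x+1$. This has probability $a_x=b_x$.
\item If $x<0$, the distance is $-x$, and moving away means moving to $x-1$, with probability $a_{-x}$. So the probability of moving to $x+1$ is $1-a_{-x}=b_x$.
\item If $x=0$, both moves are away from the origin and are unbiased, giving probability $\tfrac12=a_0=b_0$.
\end{itemize}
Here I must address the absence of a common orientation. Whatever the agent's private labelling of left and right, her decision depends only on whether a move increases or decreases $|x|$, except at the origin, where the choice is a fair coin. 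Hence the law of her step in analysis coordinates does not depend on that labelling. This is the step needing the most care, since it is the only place where the modelling assumptions (obliviousness, symmetry, unknown orientation) enter.

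\textbf{Combining the two agents.} Conditional on the whole history up to time $t$, each agent's next step depends only on her current displacement, because the strategy is oblivious. The two agents also sample independently given their current distances. Therefore, conditional on $\mathcal F_t$ and on $t<\mathcal T_d$, the pair of increments $(X_{t+1}-X_t,\;Y_{t+1}-Y_t)$ has the product law determined by $b_{X_t}$ and $b_{Y_t}$. This product law yields exactly the four displayed probabilities, and it depends on the past only through $(X_t,Y_t)$. That is the Markov property with the stated homogeneous kernel.

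\textbf{Conclusion.} Starting the chain from $S_{0,0}$, the law of $(X_{t\wedge\mathcal T_d},Y_{t\wedge\mathcal T_d})_{t\geq0}$ coincides with that of the chain. Moreover, $\mathcal T_d$ equals the absorption time into $\mathcal A_d$, which is the sense in which the chain is induced by the rendezvous problem.
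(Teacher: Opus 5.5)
Your proposal is correct and follows the same route as the paper's proof: identify $S_{x,y}$ with the displacement pair, note that meeting is exactly $x-y=2d$ and that simultaneous parity changes keep $x-y$ even, and obtain the four transition probabilities as products of the single-agent step probabilities $b_x$ and $b_y$, using the agents' independence. Your version also makes explicit two points the paper leaves implicit: that $X_t-Y_t$ cannot skip over $2d$, so $x-y\leq 2d$ before absorption, and the case check showing that the probability of stepping from $x$ to $x+1$ is indeed $b_x$.
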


\begin{proof}
In the distance-$d$ problem, the two agents start at positions $-d$ and $d$. If their displacements from their respective starting positions are $x$ and $y$, then their actual positions are
$
-d+x
$
and
$
d+y.
$
Thus the configuration is represented by the state $S_{x,y}$.

The two agents meet exactly when
$
-d+x=d+y,
$
which is equivalent to
$
x-y=2d.
$
These are precisely the states in $\mathcal A_d$, and they are made absorbing because the purpose of the chain is to compute the first hitting time of the meeting set.

Before meeting, we have $x-y<2d$. Since both agents move by one unit at every step, the parity of $x$ and $y$ changes simultaneously, so $x-y$ remains even. Together with the absorbing states, this gives the set $\Omega_d$.

Finally, from displacement $x \in \integers$, the first agent moves to $x+1$ with probability $b_x$ and to $x-1$ with probability $1-b_x$. Similarly, from displacement $y$, the second agent moves to $y+1$ with probability $b_y$ and to $y-1$ with probability $1-b_y$. The agents sample their moves independently, so the four transition probabilities are exactly the products listed above. Hence the stated Markov chain is precisely the Markovian representation of the problem.
\end{proof}

The preceding lemma identifies the original rendezvous process with a Markov chain. This chain is generally infinite, and the corresponding system of equations for expected hitting times is therefore infinite as well. In the constructions of Section~\ref{sec:SmallDFrameworks} below, we replace it by finite absorbing Markov chains obtained either by truncating the admissible strategies or by modifying the process at a boundary. 
The expected rendezvous time corresponds to the expected hitting time of the absorbing set $\mathcal A_d$ from the initial state $S_{0,0}$ of Lemma~\ref{lem:MarkovianReformulation}. Since the underlying chain is generally infinite, we next introduce finite modifications that will allow us to compute upper and lower bounds to $R_d$ for small values of $d$.

\section{Oblivious Rendezvous Bounds for Small Values of $d$}
\label{sec:SmallDFrameworks}

This section addresses the known-distance setting, where the initial separation parameter $d$ is given in advance and the admissible strategy may depend on $d$. We develop two complementary finite-state frameworks for this setting. The first yields computable upper bounds through truncated admissible strategies, while the second provides lower bounds through finite relaxations of the original rendezvous process. 
The following standard fact will be used in both cases.

\begin{lemma}
\label{lem:finite-hitting-system}
Let $\mathcal M$ be a finite Markov chain with state space $\mathcal S$ and transition probabilities $P(S,S')$. Let $\mathcal A\subseteq\mathcal S$ be an absorbing set, meaning that
$
P(S,S)=1
$
for every $S\in\mathcal A$. Let $(Z_t)_{t\geq 0}$ denote the chain, and define the hitting time
$
\tau_{\mathcal A}:=\inf\{t\geq 0:Z_t\in\mathcal A\}.
$
For each $S\in\mathcal S$, set
$
E_S:=\mathbb E[\tau_{\mathcal A}\mid Z_0=S].
$
Assume that $E_S<\infty$ for every $S\in\mathcal S$.
Then the values $(E_S)_{S\in\mathcal S}$ are the unique solution of the finite linear system
$$
E_S=
\begin{cases}
0, & \text{if } S\in\mathcal A,\\
1+\sum_{S'\in\mathcal S}P(S,S')E_{S'}, & \text{if } S\in\mathcal S\setminus\mathcal A.
\end{cases}
$$
In particular, if the initial state is $S_{\mathrm{init}}$, then the expected hitting time from the initial state is $E_{S_{\mathrm{init}}}$.
\end{lemma}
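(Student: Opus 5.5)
We prove two things separately: that the hitting times $(E_S)$ satisfy the system (existence, by first-step analysis), and that the system has only one solution (uniqueness). The second part carries the real content.

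\textbf{Existence.} If $S\in\mathcal A$, then $\tau_{\mathcal A}=0$ almost surely, so $E_S=0$. If $S\notin\mathcal A$, then $\tau_{\mathcal A}\geq 1$. By the Markov property at time $1$, conditionally on $Z_1=S'$ the remaining time $\tau_{\mathcal A}-1$ has the law of $\tau_{\mathcal A}$ started from $S'$. Taking expectations and summing over the finitely many $S'$ gives $E_S=1+\sum_{S'}P(S,S')E_{S'}$. Since all $E_{S'}$ are finite by assumption, no $\infty-\infty$ issue arises.

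\textbf{Uniqueness.} Let $F$ be any other finite solution and set $D:=E-F$. Then $D_S=0$ on $\mathcal A$, and $D_S=\sum_{S'}P(S,S')D_{S'}$ off $\mathcal A$. So $D$ is harmonic off $\mathcal A$ and vanishes on $\mathcal A$.

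Iterating this identity $n$ times, or equivalently using the martingale $D(Z_{t\wedge\tau_{\mathcal A}})$, gives
$$D_S=\mathbb E_S\bigl[D(Z_{n\wedge\tau_{\mathcal A}})\bigr].$$
The state space is finite, so $D$ is bounded by $\max|D|$. Because $E_S<\infty$ for every $S$, we have $\tau_{\mathcal A}<\infty$ almost surely. Letting $n\to\infty$ and applying bounded convergence yields
$$D_S=\mathbb E_S\bigl[D(Z_{\tau_{\mathcal A}})\bigr]=0.$$
Hence $F=E$.

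An alternative, purely linear-algebraic route writes the system as $(I-Q)E=\mathbf 1$, where $Q$ is the transient-to-transient block of $P$. Finiteness of the hitting times forces $Q^n\to 0$, so $I-Q$ is invertible with inverse $\sum_n Q^n$. I would mention this as a remark.

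The last sentence of the lemma, that the expected hitting time from $S_{\mathrm{init}}$ is $E_{S_{\mathrm{init}}}$, is immediate from the definition of $E_S$.

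\textbf{Main obstacle.} The delicate point is uniqueness. The hypothesis that every $E_S$ is finite is essential: it is exactly what guarantees almost-sure absorption, which in turn rules out nonzero harmonic functions on closed classes that avoid $\mathcal A$. Without it, uniqueness fails, since $D$ could equal an arbitrary constant on such a class. The argument therefore has to invoke this hypothesis explicitly at the limit step.
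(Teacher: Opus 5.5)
Your proposal is correct. The existence step is the same first-step analysis the paper uses, but your main uniqueness argument takes a different route.

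\textbf{The paper's route.} The paper restricts to the nonabsorbing block and writes $(I-Q)E=\mathbf 1$. It then asserts, without further justification, that almost-sure absorption forces every eigenvalue of $Q$ to have modulus strictly less than $1$, so $I-Q$ is invertible.

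\textbf{Your route.} You show directly that the homogeneous system has only the trivial solution. A difference $D$ of two solutions is harmonic off $\mathcal A$ and vanishes on $\mathcal A$. Optional stopping together with bounded convergence, using $\tau_{\mathcal A}<\infty$ almost surely, then gives $D\equiv 0$. Since the system is square and finite-dimensional, injectivity is equivalent to invertibility, so you reach the paper's conclusion.

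\textbf{Comparison.} Your argument is self-contained, avoids spectral theory, and shows exactly where finiteness of the $E_S$ is used. The paper's formulation ties directly to the matrix form $(I-Q)E=\mathbf 1$ that is solved numerically later. Your linear-algebra remark in fact supplies the step the paper leaves implicit: $\sum_{S'\notin\mathcal A}Q^n(S,S')=\mathbb P_S[\tau_{\mathcal A}>n]\to 0$ gives $Q^n\to 0$. Hence the spectral radius is below $1$ and the Neumann series converges.

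\textbf{A small imprecision.} This is in your closing paragraph and does not affect the proof. Suppose a closed class $C$ avoids $\mathcal A$. The homogeneous system then does have nonzero solutions. However, the inhomogeneous system has no solution at all: averaging $E_S=1+\sum_{S'\in C}P(S,S')E_{S'}$ against a stationary distribution of $C$ yields $0=1$. So what breaks without the hypothesis is existence, not uniqueness in the sense of several competing solutions.
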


\begin{proof}
If $S\in\mathcal A$, then $\tau_{\mathcal A}=0$, and hence $E_S=0$.
Let
$
\mathcal S_0=\mathcal S\setminus\mathcal A.
$
For $S\in\mathcal S_0$, one transition is taken before the process is evaluated again. Since $E_{S'}=0$ for $S'\in\mathcal A$, the Markov property and the law of total expectation give
$$
E_S
=
1+\sum_{S'\in\mathcal S_0}P(S,S')E_{S'},
\qquad S\in\mathcal S_0.
$$

Let $Q$ be the matrix indexed by $\mathcal S_0$ with entries
$
Q(S,S')=P(S,S')
$,
where $S,S'\in\mathcal S_0$.
Writing $E=(E_S)_{S\in\mathcal S_0}$, the system on the nonabsorbing states is
$$
(I-Q)E=\mathbf 1.
$$
Since the chain is finite and $E_S<\infty$ for every $S\in\mathcal S$,
, starting from any non-absorbing state the process reaches an absorbing state with probability $1$. It follows that every eigenvalue of $Q$ has modulus strictly smaller than $1$. Therefore $1$ is not an eigenvalue of $Q$, and hence $I-Q$ is invertible, so the system has a unique solution on $\mathcal S_0$. 
\end{proof}

\subsection{Truncated Chains: An Upper Bounds Framework}
\label{sec:TruncatedChains}

We first consider a finite family of admissible strategies obtained by forbidding agents from moving beyond a prescribed distance from their starting positions.
The associated Markov chains will be used to derive upper bounds to $R_d$ in the known-distance setting, and for small values of $d$. 

\begin{definition}[Level-$n$ Truncated Chain]
\label{def:n-TruncatedChain}
Fix $d\geq 1$, $n\geq d$, and
$
a=(a_1,\ldots,a_{n-1})\in[0,1]^{n-1}.
$
Set
$a_0=\frac12$, and set $a_i=0$ for all $i\geq n$. Let $(b_i)_{i\in\integers}$ be the corresponding signed sequence of Section~\ref{sec:MarkovianReformulation}. 

The \emph{level-$n$ truncated chain} is the Markov chain from Lemma~\ref{lem:MarkovianReformulation}, restricted to the finite state space
$$
\mathcal S^{\mathrm{tr}}_{d,n}
=
\left\{S_{x,y}:-n\leq x,y\leq n,\ x-y\leq 2d,\ x-y\equiv 0 \pmod 2\right\}.
$$
Its absorbing set is
$
\mathcal A^{\mathrm{tr}}_{d,n}
=
\left\{S_{x,y}\in\mathcal S^{\mathrm{tr}}_{d,n}:x-y=2d\right\}.
$
The initial state is $S_{0,0}$, and the transition probabilities are those of Lemma~\ref{lem:MarkovianReformulation}.
\end{definition}
In other words, the level-$n$ truncated chain agrees with the Markov chain of Lemma~\ref{lem:MarkovianReformulation} at all states with self-distances strictly smaller than $n$. The only modification is at self-distance at least $n$: at such a state, the next move is forced toward the starting point by setting $a_i=0$ for all $i\geq n$ (and setting only $a_n=0$ would be equivalent). Thus a coordinate can never leave the interval $[-n,n]$, but after it moves back inside the interval the transition probabilities are again those prescribed by the corresponding self-distance distribution $(a_i)_{i\geq 0}$. 
Let $(Z_t)_{t\geq0}$ denote the level-$n$ truncated chain, and let
$
\tau^{\mathrm{tr}}_{d,n}
=
\inf\left\{t\geq 0:Z_t\in\mathcal A^{\mathrm{tr}}_{d,n}\right\}.
$
We define
$$
U_{d,n}(a)
=
\mathbb E\left[\tau^{\mathrm{tr}}_{d,n}\mid Z_0=S_{0,0}\right],
$$
whenever this expectation is finite. Otherwise, we set $U_{d,n}(a)=\infty$.
The next lemma shows how the chain of Definition~\ref{def:n-TruncatedChain} gives an upper bound to $R_d$. 

\begin{lemma}
\label{lem:UpperBoundFrame}
For every $d\geq 1$, 
every $a=(a_1,\ldots,a_{n-1})\in[0,1]^{n-1}$, 
and every $n\geq d$, we have
$$R_d \leq U_{d,n}(a).$$
\end{lemma}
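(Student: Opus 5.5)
The plan is to exhibit an admissible strategy whose expected rendezvous time in the original (infinite) chain equals $U_{d,n}(a)$. Since $R_d$ is an infimum over $\mathcal A$, this gives the inequality. If $U_{d,n}(a)=\infty$ there is nothing to prove, so we assume $U_{d,n}(a)<\infty$.

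The strategy is the extended sequence $\tilde a=(\tilde a_i)_{i\geq 0}$ with $\tilde a_0=\frac12$, $\tilde a_i=a_i$ for $1\leq i\leq n-1$, and $\tilde a_i=0$ for $i\geq n$. Every entry lies in $[0,1]$, so $\tilde a\in\mathcal A$.

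Consider the chain of Lemma~\ref{lem:MarkovianReformulation} for $\tilde a$, started at $S_{0,0}$. We show by induction on $t$ that both coordinates stay in $[-n,n]$.
\begin{itemize}
\item If $|x|<n$, a single step moves $x$ to a value of absolute value at most $n$.
\item If $x=n$, then $b_n=\tilde a_n=0$, so the next move is to $n-1$.
\item If $x=-n$, then $b_{-n}=1-\tilde a_n=1$, so the next move is to $-n+1$.
\end{itemize}
The same holds for $y$. Hence the set of states reachable from $S_{0,0}$ before absorption is contained in $\mathcal S^{\mathrm{tr}}_{d,n}$. On this set the transition probabilities coincide with those of the level-$n$ truncated chain, because both are given by the formulas of Lemma~\ref{lem:MarkovianReformulation} with the same $b_i$. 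Absorbing states with $|x|,|y|\leq n$ are the same in both chains.

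It follows that the two processes, started at $S_{0,0}$ and stopped at absorption, have the same law; one can also couple them to be identical path by path. Therefore $\mathcal T_d(\tilde a)$ and $\tau^{\mathrm{tr}}_{d,n}$ have the same distribution, and
\[
R_d\leq \E[\mathcal T_d(\tilde a)]=U_{d,n}(a).
\]

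The only delicate step is checking that the boundary behaviour really keeps the infinite chain inside the truncated state space, in particular the sign convention $b_{-n}=1-\tilde a_n=1$ for negative displacements. This is the point to verify carefully. Lemma~\ref{lem:finite-hitting-system} is not needed for the inequality itself; it only ensures that $U_{d,n}(a)$ is computable when it is finite.
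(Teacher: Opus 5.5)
Your proposal is correct and follows the same route as the paper: extend $a$ by $a_0=\frac12$ and $a_i=0$ for $i\geq n$, observe that this admissible strategy never leaves $[-n,n]$, so the original chain coincides with the level-$n$ truncated chain and $R_d\leq \E[\mathcal T_d(\tilde a)]=U_{d,n}(a)$. Your explicit boundary check, in particular $b_{-n}=1-\tilde a_n=1$, and your separate treatment of the case $U_{d,n}(a)=\infty$ simply spell out what the paper states in one line.
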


\begin{proof}
The strategy defined by $a_0=\frac12$ and $a_i=0$ for all $i\geq n$, together with the chosen values of $a_1,\ldots,a_{n-1}$, is an admissible self-distance strategy for the original problem. Since $a_n=0$, no displacement outside the interval $[-n,n]$ can occur. Hence the finite truncated chain agrees exactly with the original rendezvous process under this admissible strategy. Therefore its expected hitting time $U_{d,n}(a)$ from $S_{0,0}$ is the expected rendezvous time of that strategy, which cannot be smaller than the optimal value. 
\end{proof}

By Lemma~\ref{lem:finite-hitting-system}, 
the values $E_{x,y}$ on the states reachable from $S_{0,0}$ are the unique solution of the finite linear system restricted to those states with 
$E_{x,y}=0$, if $S_{x,y}\in\mathcal A^{\mathrm{tr}}_{d,n}$, and
$$
E_{x,y}
=
1+b_xb_yE_{x+1,y+1}
+b_x(1-b_y)E_{x+1,y-1}
+(1-b_x)b_yE_{x-1,y+1}
+(1-b_x)(1-b_y)E_{x-1,y-1}
$$
for every nonabsorbing reachable state $S_{x,y}$.
The upper bound obtained from this parameter vector is
$
U_{d,n}(a)=E_{0,0}.
$

Note, that the Markov chain of Definition~\ref{def:n-TruncatedChain} is indeed finite. 
The number of absorbing states is
$
|\mathcal A^{\mathrm{tr}}_{d,n}|=2n+1-2d
$,
while the number of nonabsorbing variables in the linear system is
$$
N^{\mathrm{tr}}_{d,n}
=
\sum_{q=-n}^{d-1}\left(2n+1-2|q|\right)=n^2+d(2n+1)-d(d-1).
$$

\subsection{Weak-Peek Chains: A Lower Bounds Framework}
\label{sec:PeekChains}

We next introduce a finite aided modification of the Markov chain of Lemma~\ref{lem:MarkovianReformulation}. 
The purpose of
this model is twofold. First, it provides computable lower bounds for the rendezvous
value $R_d$, with the goal of obtaining tight bounds for small values of $d$.
Second, it helps expose the support of optimal oblivious self-distance strategies.

The idea is to let the process evolve according to the original strategy until one of
the agents reaches distance $n$ from her starting point, and then modify the
continuation beyond this level. 
The resulting finite system is optimistic relative to the original process. 
Intuitively, whenever an agent reaches boundary distance at least $n$ from her starting point, it receives momentarily directional information that can only help the agents meet.
Consequently, the expected rendezvous time in the modified process is at most that of
the original process, and the resulting finite system produces a valid lower bound.

Recall that an admissible oblivious self-distance strategy is encoded by a sequence
$
(a_i)_{i\geq 0},
$
where $a_k$ denotes the probability that an agent at distance $k$ from her starting
point moves one step farther away. For some fixed $n\geq d$, the \emph{weak-peek} modification allows the optimization
to decide whether reaching distance $n$ is actually useful. Indeed, if the minimizer
sets some earlier parameter $a_k$ equal to zero, then no agent can ever move beyond
distance $k$. Consequently, the level-$n$ boundary is never reached, and the
weak-peek chain coincides on all reachable states with the corresponding truncated
chain. This is the mechanism by which the weak-peek construction certifies the support of an optimal strategy.

We now introduce the \emph{weak-peek model}. In this model, once an agent reaches distance
$n$ from her starting point, she is allowed to move one step toward the other agent, only for that step.
Note that only local directional information is revealed at the boundary, only to the agent who has reached that boundary, and only at the step in which the boundary is reached. The process then continues according to the modified transition rules.

\begin{definition}[Level-$n$ Weak-Peek Chain]
\label{def:n-WeakPeekChain}
Fix $d\geq 1$, $n\geq d$, and
$
a=(a_1,\ldots,a_{n-1})\in[0,1]^{n-1}.
$
Set $a_0=\frac12$, and let $(b_i)_{i\in\integers}$ be the corresponding signed sequence. The level-$n$ weak-peek chain is the Markov chain with state space
$$
\mathcal S^{\mathrm{wp}}_{d,n}
=
\left\{
S_{i,j}:
-n\leq i\leq n+2d,\ 
-n-2d\leq j\leq n,\ 
i-j\leq 2d,\ 
i-j\equiv 0 \pmod 2
\right\},\footnote{The asymmetric coordinate bounds are chosen so that absorption remains reachable after one coordinate reaches the truncation threshold. Beyond that threshold, motion is forced deterministically toward rendezvous.}
$$
and absorbing set
$$
\mathcal A^{\mathrm{wp}}_{d,n}
=
\left\{
S_{i,j}\in\mathcal S^{\mathrm{wp}}_{d,n}:i-j=2d
\right\}.
$$
The transition probabilities are defined as follows. Each state in
$\mathcal A^{\mathrm{wp}}_{d,n}$ has self loop probability $1$.
For every
$
S_{i,j}\in
\mathcal S^{\mathrm{wp}}_{d,n}
\setminus
\mathcal A^{\mathrm{wp}}_{d,n},
$
the first coordinate moves one step toward rendezvous, namely to $i+1$, with probability $1$ if $|i|\geq n$, and otherwise moves to $i+1$ with probability $b_i$ and to $i-1$ with probability $1-b_i$. 
Similarly, the second coordinate moves one step toward rendezvous, namely to $j-1$, with probability $1$ if $|j|\geq n$, and otherwise moves to $j+1$ with probability $b_j$ and to $j-1$ with probability $1-b_j$. The two coordinates move independently according to these rules. All transition probabilities not specified above are equal to $0$.
\end{definition}
An explanation of the differences between the finite chains of Definitions~\ref{def:n-TruncatedChain},~\ref{def:n-WeakPeekChain} is in place. 
Unlike the level-$n$ truncated chain of Definition~\ref{def:n-TruncatedChain}, reaching distance $n$ does not prevent an agent from subsequently attaining larger self-distances. The only modification is that whenever an agent is currently at distance at least $n$ from her starting point, her next move is forced one step toward rendezvous. After that step, if the agent is again at distance less than $n$, she resumes following the prescribed self-distance distribution of Section~\ref{sec:MarkovianReformulation}. 
Consequently, states with self-distance larger than $n$ may still be reached. However, whenever an agent is at self-distance at least $n$, its next move is forced toward rendezvous. Thus the first coordinate cannot fall below $-n$ and, once it reaches $n$, it moves deterministically upward until rendezvous; similarly, the second coordinate cannot exceed $n$ and, once it reaches $-n$, it moves deterministically until rendezvous. Since the initial separation is $2d$, rendezvous occurs before the first coordinate can exceed $n+2d$ or the second can fall below $-n-2d$. Consequently, the weak-peek chain remains finite.
Let $(Z_t)_{t\geq0}$ denote this chain, and let
$
\tau^{\mathrm{wp}}_{d,n}
=
\inf\left\{t\geq 0:Z_t\in\mathcal A^{\mathrm{wp}}_{d,n}\right\}
$.
We define
$$
L^{\mathrm{wp}}_{d,n}(a)
=
\mathbb E\left[
\tau^{\mathrm{wp}}_{d,n}
\mid
Z_0=S_{0,0}
\right].
$$
The weak-peek modification reveals additional directional information once an agent reaches distance $n$ from her starting point. The next lemma shows that this information can only help the agents, in the sense that the weak-peek chain can always be coupled to stay at least as close to rendezvous as the original process.

\begin{lemma}
\label{lem:weak-peekMonotoneCoupling}
Fix $d\geq 1$, $n\geq d$, and an admissible strategy
$
(a_i)_{i\geq 0}.
$
Let $(X_t,Y_t)$ be the chain of Lemma~\ref{lem:MarkovianReformulation}, and let $(X'_t,Y'_t)$ be the corresponding level-$n$ weak-peek chain of Definition~\ref{def:n-WeakPeekChain}. Let $\mathcal T_d$ and $\mathcal T_d'$ be their respective hitting times, and set
$
\tau=\min\{\mathcal T_d,\mathcal T_d'\}.
$
Then the two processes can be coupled so that, for every $0\leq t\leq\tau$,
$
X'_t\geq X_t
$
and
$
Y'_t\leq Y_t.
$
\end{lemma}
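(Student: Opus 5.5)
The plan is a synchronous coupling driven by shared uniform random variables, followed by induction on $t$. The key observation is that the gap $X'_t-X_t$ is always even, which lets us avoid assuming any monotonicity of the sequence $(b_i)$.

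\emph{Construction.} Let $(U_t)_{t\geq 0}$ and $(V_t)_{t\geq 0}$ be independent i.i.d.\ uniform random variables on $[0,1]$. Both processes start at $(0,0)$ and are driven by the same variables.
\begin{itemize}
\item In the original chain, set $X_{t+1}=X_t+1$ if $U_t<b_{X_t}$ and $X_{t+1}=X_t-1$ otherwise. Likewise, $Y_{t+1}=Y_t+1$ if $V_t<b_{Y_t}$ and $Y_{t+1}=Y_t-1$ otherwise.
\item In the weak-peek chain, set $X'_{t+1}=X'_t+1$ if $|X'_t|\geq n$. Otherwise, set $X'_{t+1}=X'_t+1$ if $U_t<b_{X'_t}$ and $X'_{t+1}=X'_t-1$ if not.
\item Symmetrically, set $Y'_{t+1}=Y'_t-1$ if $|Y'_t|\geq n$. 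Otherwise, $Y'$ uses $V_t$ and $b_{Y'_t}$ exactly as $Y$ does.
\end{itemize}
Since $U_t,V_t$ are independent and uniform, each marginal process has the correct law. The original chain has the transitions of Lemma~\ref{lem:MarkovianReformulation}. The weak-peek chain has the transitions of Definition~\ref{def:n-WeakPeekChain}, because for $|i|<n$ its signed probabilities $b_i$ coincide with those of the full strategy $(a_i)_{i\geq0}$. Both are evolved up to time $\tau$; after $\tau$ nothing needs to be verified.

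\emph{Induction.} We prove $X'_t\geq X_t$ for all $t\leq\tau$; the claim for $Y$ is the mirror image. The base case holds since $X'_0=X_0=0$. Every step changes each coordinate by $\pm1$, so $X'_t-X_t$ is always even. Two cases remain.
\begin{itemize}
\item If $X'_t\geq X_t+2$, then $X'_{t+1}\geq X'_t-1\geq X_t+1\geq X_{t+1}$, whatever moves are made.
\item If $X'_t=X_t=i$ and $|i|<n$, both processes compare the same $U_t$ with the same threshold $b_i$. They therefore make identical moves and remain equal.
\item If $X'_t=X_t=i$ and $|i|\geq n$, the weak-peek coordinate moves to $i+1$, which is at least $X_{t+1}\in\{i-1,i+1\}$.
\end{itemize}
The argument for $Y'_t\leq Y_t$ is identical with all inequalities reversed, using $V_t$ and the forced move $j\mapsto j-1$.

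\emph{Obstacle.} The only subtle point is that $b$ need not be monotone, so a naive monotone coupling fails when the two coordinates differ. The parity argument above is exactly what circumvents this: whenever the coordinates differ they differ by at least $2$, and a single step cannot close such a gap. The remaining care is bookkeeping. Up to time $\tau$ neither chain is absorbed, so the transition rules above apply at every step. The weak-peek coordinates also stay within the finite state space of Definition~\ref{def:n-WeakPeekChain}, since it was constructed precisely to contain all states reachable before absorption.
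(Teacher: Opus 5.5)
Your proposal is correct and follows essentially the same route as the paper: a coupling through shared uniform variables, the parity observation that a nonzero gap is at least $2$ and cannot close in one step, and, in the equality case, a comparison of thresholds. Where the paper handles the equality case in one stroke via $p_X(k)\ge b_k$, you split it into $|i|<n$ and $|i|\ge n$; your global construction with i.i.d.\ $(U_t,V_t)$ also makes the coupling slightly more explicit than the paper's step-by-step description.
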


\begin{proof}
We show by induction on $t=0,\ldots, \tau$ that,
$X'_t\geq X_t$. 
The proof of $Y'_t\leq Y_t$ is analogous, with the order reversed. In the equality case we use the same uniform random variable for the second-coordinate update in both chains and use that the weak-peek probability of increasing the second coordinate is at most the corresponding probability in the chain from Lemma~\ref{lem:MarkovianReformulation}. In the case $Y_t-Y'_t\geq 2$, the gap can decrease by at most two in one step, so the order is preserved.

For the base case, we have that $X'_0 = X_0=0$. 
In the inductive step, assume that $t<\tau$ and that $X'_t\geq X_t$. Since both $X_t$ and $X'_t$ have the same parity, there is an integer $m\geq 0$ such that
$
X'_t-X_t=2m.
$
If $m\geq 1$, then $X'_t-X_t\geq 2$. Since each coordinate changes by one unit at the next step,
$$
X'_{t+1}-X_{t+1}
\geq
(X'_t-1)-(X_t+1)
=
X'_t-X_t-2
\geq 0.
$$
Hence
$
X'_{t+1}\geq X_{t+1}
$. 
It remains to consider the case $m=0$, that is the case $X'_t-X_t=0$
So, assume that $X'_t=X_t=k$, for some $k\in \integers$. 

Let $(b_i)_{i\in\integers}$ be the signed move probabilities associated with $(a_i)_{i\geq0}$, as in Lemma~\ref{lem:MarkovianReformulation}, and recall that in the process from Lemma~\ref{lem:MarkovianReformulation}, a coordinate currently equal to $k$ increases to $k+1$ with probability $b_k$, and decreases to $k-1$ with probability $1-b_k$.
For the weak-peek chain, let $p_X(k)$ denote the probability that the first coordinate increases from $k$ to $k+1$ in one step, and let $p_Y(k)$ denote the corresponding probability for the second coordinate. By Definition~\ref{def:n-WeakPeekChain},
$$
p_X(k)=
\begin{cases}
b_k, & |k|<n,\\
1, & |k|\geq n,
\end{cases}
$$
Thus, for every $k\in\integers$, we have $p_X(k)\geq b_k$.
Let now $U$ be uniform on $[0,1]$, and set
$$
X_{t+1}-X_t=
\begin{cases}
+1, & U\leq b_k,\\
-1, & U>b_k,
\end{cases}
\qquad
X_{t+1}'-X_t'=
\begin{cases}
+1, & U\leq p_X(k),\\
-1, & U>p_X(k).
\end{cases}
$$
Since $p_X(k)\geq b_k$, we have $X_{t+1}'-X_t' \geq X_{t+1}-X_t$, that is 
$
X'_{t+1} - X_{t+1} \geq X_t' -X_t = 0.
$
This completes the inductive step. 
\end{proof}

The monotone coupling of Lemma~\ref{lem:weak-peekMonotoneCoupling} implies that the weak-peek chain reaches rendezvous no later than the corresponding original process. Consequently, optimizing over weak-peek chains yields a valid lower bound for the original rendezvous problem. This is formalized in the next lemma. 

\begin{lemma}
\label{lem:WeakPeekLowerBound}
For every $d\geq 1$ and every $n\geq d$, we have 
$
\inf_{a\in[0,1]^{n-1}} L^{\mathrm{wp}}_{d,n}(a)
\leq
R_d.
$
\end{lemma}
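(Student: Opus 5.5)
The plan is to show that every admissible strategy $a=(a_i)_{i\geq0}$ satisfies
$$
L^{\mathrm{wp}}_{d,n}(a_1,\ldots,a_{n-1})\leq \E[\mathcal T_d(a)].
$$
Taking the infimum over $a\in\mathcal A$ then gives the claim. If $\E[\mathcal T_d(a)]=\infty$ there is nothing to prove, so we fix $a$ with $\E[\mathcal T_d(a)]<\infty$. We truncate it to $(a_1,\ldots,a_{n-1})\in[0,1]^{n-1}$, which is the parameter vector of the level-$n$ weak-peek chain. Since the weak-peek transitions at self-distance $<n$ use only $b_i$ with $|i|<n$, this truncated vector produces exactly the weak-peek chain associated with $a$ in Lemma~\ref{lem:weak-peekMonotoneCoupling}.

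The core step is the coupling of Lemma~\ref{lem:weak-peekMonotoneCoupling}. Under that coupling, for all $t\leq\tau=\min\{\mathcal T_d,\mathcal T'_d\}$ we have $X'_t\geq X_t$ and $Y'_t\leq Y_t$, so $X'_t-Y'_t\geq X_t-Y_t$. We claim $\mathcal T'_d\leq\mathcal T_d$ almost surely. Suppose instead $\mathcal T_d<\mathcal T'_d$, so that $\tau=\mathcal T_d$. At time $\tau$ we get $X'_\tau-Y'_\tau\geq X_\tau-Y_\tau=2d$. The weak-peek state space only allows $X'_t-Y'_t\leq 2d$ before absorption, so $X'_\tau-Y'_\tau=2d$, i.e. the weak-peek chain is absorbed at time $\tau$. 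This contradicts $\mathcal T_d<\mathcal T'_d$.

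One technical point needs care here. The difference $X'-Y'$ changes by $0$ or $\pm2$ per step, and both chains start with difference $0$ and keep the same parity. Hence $X'-Y'$ cannot jump past $2d$, so the first time it reaches $\geq 2d$ it equals $2d$. The same parity argument shows the original chain cannot jump past $2d$ either, so the comparison at the first crossing is legitimate. Once $\mathcal T'_d\leq\mathcal T_d$ almost surely, taking expectations gives $L^{\mathrm{wp}}_{d,n}(a_1,\ldots,a_{n-1})=\E[\mathcal T'_d]\leq\E[\mathcal T_d(a)]$.

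The main obstacle is making sure the coupling really runs on a common probability space up to $\tau$, so that $\mathcal T'_d\leq\mathcal T_d$ holds pathwise. Lemma~\ref{lem:weak-peekMonotoneCoupling} gives the coordinate orderings, but the two coordinate couplings must be carried out jointly. Each coordinate uses its own independent uniform variable, shared between the two chains. This preserves the correct joint law of each chain, because within each chain the two agents' moves are independent.

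Finally, note that $\E[\mathcal T'_d]<\infty$ follows from the domination itself, so $L^{\mathrm{wp}}_{d,n}$ is finite here. Thus no appeal to Lemma~\ref{lem:finite-hitting-system} is needed for this inequality.
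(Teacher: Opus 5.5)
Your proposal is correct and follows the paper's proof: you restrict the strategy to $(a_1,\ldots,a_{n-1})$, use the monotone coupling of Lemma~\ref{lem:weak-peekMonotoneCoupling} to get $X'_t-Y'_t\geq X_t-Y_t$ up to $\tau$, derive $\mathcal T_d'\leq\mathcal T_d$ by contradiction at time $\mathcal T_d$, and then take expectations and the infimum. Your additional remarks (a joint coupling using one shared uniform per coordinate, the parity observation, and the trivial case $\E[\mathcal T_d(a)]=\infty$) only spell out details that the paper leaves implicit.
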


\begin{proof}
Fix an arbitrary admissible strategy
$
\alpha=(\alpha_i)_{i\geq 0}
$
with $\alpha_0=1/2$, and let
$
a=(\alpha_1,\ldots,\alpha_{n-1}).
$
Let $\mathcal T_d$ and $\mathcal T_d'$ be as in Lemma~\ref{lem:weak-peekMonotoneCoupling}. By that lemma, the two processes can be coupled so that, up to
$
\tau=\min\{\mathcal T_d,\mathcal T_d'\},
$
we have
$
X'_t\geq X_t
$
and
$
Y'_t\leq Y_t.
$
Hence
$
X'_t-Y'_t\geq X_t-Y_t
$, 
for all $0\leq t\leq \tau$.
We claim that $\mathcal T_d'\leq \mathcal T_d$. Indeed, suppose for contradiction that
$
\mathcal T_d'>\mathcal T_d.
$
Then
$
\tau=\min\{\mathcal T_d,\mathcal T_d'\}=\mathcal T_d.
$
Since $\mathcal T_d$ is the hitting time of the absorbing set of the process from Lemma~\ref{lem:MarkovianReformulation}, we have
$
X_{\mathcal T_d}-Y_{\mathcal T_d}=2d.
$
Since the inequalities from Lemma~\ref{lem:weak-peekMonotoneCoupling} hold at time $\mathcal T_d$, we have
$$
X'_{\mathcal T_d}-Y'_{\mathcal T_d}
\geq
X_{\mathcal T_d}-Y_{\mathcal T_d}
=
2d.
$$
Thus the weak-peek chain has already reached its absorbing set by time $\mathcal T_d$, contradicting the assumption that
$
\mathcal T_d'>\mathcal T_d.
$
Taking expectations, we conclude that 
$
L^{\mathrm{wp}}_{d,n}(a)
=
\mathbb E[\mathcal T_d']
\leq
\mathbb E_{\alpha}[\mathcal T_d].
$
Subsequently, taking the infimum over all admissible $\alpha$ gives
$$
\inf_{a\in[0,1]^{n-1}}L^{\mathrm{wp}}_{d,n}(a)
\leq
\inf_{\alpha \in \mathcal A}\mathbb E_{\alpha}[\mathcal T_d]
=
R_d.
$$
\end{proof}

\subsection{Established Bounds to $R_d$ for Small Values of $d$}
\label{sec:AllBoundsForSmallD}

In this section, we apply the upper-bound framework of Section~\ref{sec:TruncatedChains} together with the weak-peek framework of Section~\ref{sec:PeekChains} to small values of $d$. For $d=1$, this leads to a complete proof of optimality. For larger values of $d$, we use extensive numerical optimization to search for the structural condition required by the weak-peek certificate. The resulting truncated strategies give rigorous upper bounds, while their optimality would follow if the numerically obtained weak-peek solutions were global minimizers, a condition that we do not prove.

The certificate we use is a sandwich argument. The optimized weak-peek model is a lower bound on $R_d$, while every truncated strategy gives an upper bound. If a global weak-peek minimizer has $a_k=0$ for some $k<n$, then the weak-peek boundary is never reached and the two finite chains coincide; hence the lower and upper bounds are equal.

The key ingredient is the following structural lemma. It shows that whenever a global minimizer of the weak-peek problem has a sufficiently early vanishing parameter, the corresponding weak-peek chain coincides on all reachable states with a truncated admissible strategy, effectively resulting in an optimal solution to the oblivious rendezvous problem, with initial displacement $2d$.

\begin{lemma}
\label{lem:WeakPeekCertificate}
Fix $d\geq 1$ and $n>d$. Let
$
a=(a_1,\ldots,a_{n-1})\in[0,1]^{n-1}
$
be a global minimizer of the level-$n$ weak-peek problem. Suppose that
$
a_k=0
$
for some $d\leq k<n$.
Then
$
R_d
=
U_{d,k}(a_1,\ldots,a_{k-1})
$.
In particular, the corresponding level-$k$ truncated strategy is optimal for the distance-$d$ oblivious self-distance rendezvous problem.
\end{lemma}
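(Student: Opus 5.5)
The argument is a sandwich between Lemma~\ref{lem:WeakPeekLowerBound} and Lemma~\ref{lem:UpperBoundFrame}. The lower bound is
\[
R_d\ \geq\ \inf_{a'\in[0,1]^{n-1}}L^{\mathrm{wp}}_{d,n}(a')\ =\ L^{\mathrm{wp}}_{d,n}(a),
\]
where the equality holds because $a$ is a global minimizer. The upper bound, since $k\geq d$, is $R_d\leq U_{d,k}(a_1,\ldots,a_{k-1})$. So everything reduces to the identity
\[
L^{\mathrm{wp}}_{d,n}(a)=U_{d,k}(a_1,\ldots,a_{k-1}).
\]

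To prove the identity, I first show that under $a_k=0$ no coordinate of the weak-peek chain started at $S_{0,0}$ ever reaches absolute value greater than $k$. I would argue by induction on $t$ that $|X'_t|\leq k$ and $|Y'_t|\leq k$ for all $t$ before absorption.
\begin{itemize}
\item If a coordinate has absolute value $<k\leq n-1$, it moves by one unit, so its new absolute value is at most $k$.
\item If $|X'_t|=k<n$, the weak-peek rule uses $b_{X'_t}$. For $X'_t=k$ this is $b_k=a_k=0$, so the coordinate moves to $k-1$. For $X'_t=-k$ it is $b_{-k}=1-a_k=1$, so it moves to $-k+1$.
\item The second coordinate is handled identically.
\end{itemize}
In particular, since $k<n$, the forced-move rule for $|i|\geq n$ or $|j|\geq n$ is never triggered. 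Hence on all reachable states the weak-peek chain uses exactly the transitions of Lemma~\ref{lem:MarkovianReformulation}, with probabilities $b_x,b_y$ built from $a_0=\tfrac12,a_1,\ldots,a_{k-1},a_k=0$.

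The level-$k$ truncated chain of Definition~\ref{def:n-TruncatedChain} with parameters $(a_1,\ldots,a_{k-1})$ sets $a_i=0$ for $i\geq k$, lives on $[-k,k]^2$, and has the same transitions. Both chains start at $S_{0,0}$ and share the absorbing condition $x-y=2d$, so they have the same law up to absorption. Consequently their hitting times have the same distribution, and $L^{\mathrm{wp}}_{d,n}(a)=U_{d,k}(a_1,\ldots,a_{k-1})$, including the case where both are infinite. If one wants to work with the linear systems instead, Lemma~\ref{lem:finite-hitting-system} applied to the common reachable finite subchain gives the same conclusion.

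Combining the three displays gives $R_d\leq U_{d,k}=L^{\mathrm{wp}}_{d,n}(a)\leq R_d$, hence equality. The truncated strategy attains $U_{d,k}$ by the proof of Lemma~\ref{lem:UpperBoundFrame}, so it is optimal.

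The only step needing care is the reachability induction: checking that $|i|=k$ with $k<n$ uses the strategy's $b_{\pm k}$ rather than the boundary rule, and that the signed-sequence convention $b_{-k}=1-a_k$ indeed forces the move inward. I also need to note that finiteness of $L^{\mathrm{wp}}_{d,n}(a)$ is not assumed, but this does not matter: if it were infinite, both sides would be infinite, and in any case $R_d<\infty$ by some finite truncated strategy, so a global minimizer has finite value.
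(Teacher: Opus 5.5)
Your proposal is correct and follows the paper's proof essentially step for step. Both use the sandwich $R_d\le U_{d,k}(a_1,\ldots,a_{k-1})=L^{\mathrm{wp}}_{d,n}(a)=\inf_{a'}L^{\mathrm{wp}}_{d,n}(a')\le R_d$, and both get the middle identity from the fact that $a_k=0$ confines all reachable states to $[-k,k]^2$, so the weak-peek boundary rule is never triggered. Your explicit reachability induction (including the check $b_{-k}=1-a_k=1$) and your remark about infinite values spell out details that the paper states in a single line.
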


\begin{proof}
Since $a_k=0$, an agent at distance $k$ from her starting point moves toward her starting point with probability $1$. Hence no agent can move from distance $k$ to distance $k+1$. Starting from $S_{0,0}$, all reachable states satisfy
$
|x|,|y|\leq k.
$
Since $k<n$, the weak-peek rule is never used on any reachable state.

It follows that, under $a$, the level-$n$ weak-peek chain agrees on all reachable states with the level-$k$ truncated chain defined by $(a_1,\ldots,a_{k-1})$.
Therefore
$
L^{\mathrm{wp}}_{d,n}(a)
=
U_{d,k}(a_1,\ldots,a_{k-1}).
$
By Lemma~\ref{lem:WeakPeekLowerBound}, and since $a$ minimizes the level-$n$ weak-peek value,
$$
L^{\mathrm{wp}}_{d,n}(a)
=
\inf_{a\in[0,1]^{n-1}}L^{\mathrm{wp}}_{d,n}(a)
\leq
R_d.
$$
On the other hand, the level-$k$ truncated chain defines an admissible self-distance strategy for the original problem. Hence, by the truncated-chain upper bound,
$
R_d
\leq
U_{d,k}(a_1,\ldots,a_{k-1}).
$
Combining the relations gives
$$
U_{d,k}(a_1,\ldots,a_{k-1})
=
L^{\mathrm{wp}}_{d,n}(a)
\leq
R_d
\leq
U_{d,k}(a_1,\ldots,a_{k-1}),
$$
and equality follows.
\end{proof}

\subsubsection{The Optimal Oblivious Expected Rendezvous Time for $d=1$}
\label{sec:UpperBoundsSmallD1}


In this section we prove the following result.

\begin{theorem}
\label{thm:optR1}
We have
$
R_1\approx 6.4670988308.
$
Moreover, $R_1$ is attained by a level-$2$ truncated chain with parameter
$
a_1^\ast\approx 0.1992570042,
$
where $a_1^\ast$ is the unique root in $(0,1)$ of
$
-12+52x+39x^2+12x^3-3x^4.
$
Thus the exact value is
$
R_1=U_{1,2}(a_1^\ast).
$
\end{theorem}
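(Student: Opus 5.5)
The plan is to use the certificate mechanism of Lemma~\ref{lem:WeakPeekCertificate} with $d=1$ and $n=3$. We exhibit a global minimizer $(a_1,a_2)$ of the level-$3$ weak-peek problem with $a_2=0$. This immediately gives $R_1=U_{1,2}(a_1)$. The remaining work is then to compute $U_{1,2}$ explicitly as a rational function of $x=a_1$ and minimize it.

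\textbf{Step 1: the upper-bound function.} The level-$2$ truncated chain for $d=1$ has $N^{\mathrm{tr}}_{1,2}=4+5-0=9$ nonabsorbing states. We solve its linear system from Lemma~\ref{lem:finite-hitting-system} symbolically in $x$, with $b_0=\tfrac12$, $b_{\pm1}$ equal to $x$ or $1-x$, and $b_2=0$, $b_{-2}=1$. The reflection symmetry $(x,y)\mapsto(-y,-x)$ preserves the chain and roughly halves the unknowns. The result is $E_{0,0}=U_{1,2}(x)=P(x)/Q(x)$ for explicit polynomials. Setting $(U_{1,2})'=0$ and clearing the denominator should yield, up to a nonvanishing factor on $(0,1)$, the quartic $-12+52x+39x^2+12x^3-3x^4$. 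We then check that this quartic has a unique root in $(0,1)$: its value is $-12$ at $0$ and $88$ at $1$, and it is increasing on $[0,1]$ since its derivative $52+78x+36x^2-12x^3$ is positive there. We also check that the boundary values $U_{1,2}(0)$ and $U_{1,2}(1)$ (the latter possibly infinite) are larger than the interior value. This confirms $a_1^\ast\approx0.19926$ is the minimizer, and evaluating gives $U_{1,2}(a_1^\ast)\approx6.4670988$.

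\textbf{Step 2: the lower bound.} Using Definition~\ref{def:n-WeakPeekChain} with $d=1,n=3$, we write $L(a_1,a_2)=L^{\mathrm{wp}}_{1,3}(a_1,a_2)$ as a rational function by solving its finite system, which is again reduced by the same symmetry. We must show $\min_{[0,1]^2}L=L(a_1^\ast,0)$. Note that $L(x,0)=U_{1,2}(x)$, because with $a_2=0$ the boundary level $3$ is never reached. It therefore suffices to prove that $L(a_1,a_2)\geq U_{1,2}(a_1^\ast)$ for all $(a_1,a_2)\in[0,1]^2$.

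The approach I would try first is to show that $L(a_1,a_2)\geq L(a_1,0)$ for every fixed $a_1$, i.e.\ that $a_2=0$ is optimal for each $a_1$. Then $L(a_1,a_2)\geq U_{1,2}(a_1)\geq U_{1,2}(a_1^\ast)$ by Step~1. Concretely, we form $\Delta(a_1,a_2)=L(a_1,a_2)-L(a_1,0)$. Over the positive common denominator this becomes $a_2\cdot N(a_1,a_2)/D$, and we certify $N\geq0$ on the unit square. Possible tools are an expansion in the Bernstein basis with nonnegative coefficients, or subdivision of the square with interval arithmetic.

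If this fails, we fall back to a global certificate: show $D\cdot(L-U_{1,2}(a_1^\ast))\geq0$ on the square. For this we would split the square into a small neighborhood of $(a_1^\ast,0)$ and its complement. Near $(a_1^\ast,0)$ we use the first-order condition $\partial_{a_2}L>0$ together with the fact that $a_1^\ast$ minimizes $U_{1,2}$. On the complement we use interval bounds. Throughout, we must also handle degenerate corners where hitting times are infinite, such as $a_1=1$, in which case the bound holds trivially.

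\textbf{Main obstacle.} I expect the main difficulty to be this rigorous global minimization of the two-variable rational function $L^{\mathrm{wp}}_{1,3}$. The symbolic solve grows quickly in size, and the inequality must be certified over the whole square rather than only locally. Once it is established, Lemma~\ref{lem:WeakPeekCertificate} applied with $k=2<n=3$ gives $R_1=U_{1,2}(a_1^\ast)$, which completes the proof.
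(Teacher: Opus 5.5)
Your overall reduction matches the paper's. You use Lemma~\ref{lem:WeakPeekCertificate} with $n=3$, $k=2$ and the identity $L^{\mathrm{wp}}_{1,3}(x,0)=U_{1,2}(x)$, and the one-variable minimization through the quartic is exactly what differentiating $E_{0,0}(x,0)$ produces. The gap is in Step~2, which carries the real content of the theorem. The route you would try first rests on a false claim: $a_2=0$ is \emph{not} optimal for every fixed $a_1$.

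Take $a_1=0.8$ and the coefficients in Table~\ref{tab:d1ND}. There $N(0.8,0)=4.0864$, $D(0.8,0)=0.2656$, $\partial_{a_2}N(0.8,0)=-0.12352$ and $\partial_{a_2}D(0.8,0)=0.46144$, so $\partial_{a_2}E_{0,0}(0.8,0)<0$. Concretely, $E_{0,0}(0.8,0)\approx 15.39$, whereas $E_{0,0}(0.8,0.5)\approx 3.9776/0.3459\approx 11.50$. The reason is that for large $a_1$ the agents rarely return to their origins. Reaching level $3$ on the correct side then triggers the forced walk into rendezvous, so the relaxation rewards $a_2>0$. Hence the numerator of your $\Delta$ is negative somewhere in the square, and no Bernstein or interval certificate of its nonnegativity can exist. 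What is true is only $L(a_1,a_2)\geq L(a_1^\ast,0)$. For large $a_1$ this holds because $L$ is large there, not because $L$ is monotone in $a_2$.

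Your fallback, certifying $N-cD\geq 0$ with $c=U_{1,2}(a_1^\ast)$, is viable in principle but only sketched. It also overlooks that $N-cD$ vanishes at the corners $(1,0)$ and $(0,1)$, where $N=D=0$. At $(0,1)$ the hitting time is finite: $E_{0,0}(0,y)=7$ for all $y$, but $E_{0,0}(x,1)=\infty$ for $x>0$, so $E_{0,0}$ is discontinuous there. Both corners therefore need the same local analysis you reserve for $(a_1^\ast,0)$, not a remark that the bound holds trivially.

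The paper avoids any quantitative positivity certificate, in three steps:
\begin{enumerate}[(i)]
\item A minimizer exists because $E_{0,0}=\sup_m\sum_{t\le m}\Pr(\tau>t)$ is lower semicontinuous on the compact square.
\item The boundary is handled explicitly: the value is $7$ on $a_1=0$, $+\infty$ on $a_1=1$ and on $a_2=1$ with $a_1>0$, and $U_{1,2}(a_1)$ on $a_2=0$.
\item Interior minimizers are excluded by a cylindrical algebraic decomposition showing that $G_x=G_y=0$, $\widetilde D\neq 0$ has no solution in $(0,1)^2$.
\end{enumerate}
This existence-plus-no-interior-critical-point argument is the idea missing from your Step~2.
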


Applying Lemma~\ref{lem:finite-hitting-system} to the level-$3$ weak-peek chain of Definition~\ref{def:n-WeakPeekChain} yields a finite linear system for the expected absorption times $E_{i,j}$. Solving this system symbolically gives, whenever $D(a_1,a_2)\neq 0$,
$
E_{0,0}(a_1,a_2)=\tfrac{N(a_1,a_2)}{D(a_1,a_2)}.
$
The polynomials $N$ and $D$ are given in Table~\ref{tab:d1ND}. The left matrix corresponds to $N(a_1,a_2)$ and the right matrix corresponds to $D(a_1,a_2)$. In both matrices, the entry in row $i$ and column $j$ is the coefficient of $a_1^i a_2^j$.

\begin{table}[!h]
\centering

\begin{minipage}{0.48\textwidth}
\centering
\[
\begin{array}{c|rrrrrr}
& a_2^0 & a_2^1 & a_2^2 & a_2^3 & a_2^4 & a_2^5 \\ \hline
a_1^0 & 14 & -14 & -14 & 14 & 0 & 0 \\
a_1^1 & -5 & -7 & 68 & -42 & -21 & 7 \\
a_1^2 & -11 & 52 & -82 & -2 & 79 & -28 \\
a_1^3 & 3 & -26 & 4 & 88 & -115 & 42 \\
a_1^4 & -1 & -2 & 32 & -78 & 77 & -28 \\
a_1^5 & 0 & 1 & -8 & 20 & -20 & 7
\end{array}
\]

{\small (a) $N(a_1,a_2)$}

\end{minipage}
\hfill
\begin{minipage}{0.48\textwidth}
\centering
\[
\begin{array}{c|rrrrrr}
& a_2^0 & a_2^1 & a_2^2 & a_2^3 & a_2^4 & a_2^5 \\ \hline
a_1^0 & 2 & -2 & -2 & 2 & 0 & 0 \\
a_1^1 & 1 & -5 & 8 & -2 & -3 & 1 \\
a_1^2 & -7 & 14 & 6 & -18 & 7 & -2 \\
a_1^3 & 3 & 6 & -40 & 30 & 3 & -2 \\
a_1^4 & 1 & -16 & 30 & -2 & -21 & 8 \\
a_1^5 & 0 & 3 & 0 & -16 & 20 & -7 \\
a_1^6 & 0 & 0 & -2 & 6 & -6 & 2
\end{array}
\]

{\small (b) $D(a_1,a_2)$}

\end{minipage}

\caption{Coefficient matrices for the numerator $N(a_1,a_2)$ and denominator $D(a_1,a_2)$ of $E_{0,0}(a_1,a_2)$. Rows correspond to powers of $a_1$ and columns correspond to powers of $a_2$.}
\label{tab:d1ND}
\end{table}

By Lemma~\ref{lem:WeakPeekCertificate}, it is enough to show that the minimum of $E_{0,0}(a_1,a_2)$ over $[0,1]^2$ is attained at a point with $a_2=0$. Indeed, in that case the level-$3$ weak-peek certificate proves optimality of the corresponding level-$2$ truncated chain. Thus Theorem~\ref{thm:optR1} follows from the next lemma.

\begin{lemma}
\label{lem:d1level3Minimizer}
The function $E_{0,0}(a_1,a_2)$ is minimized over $[0,1]^2$ at
$
(a_1^\ast,0),
$
where $a_1^\ast \approx 0.199257$ is the unique root in $(0,1)$ of
$
-12+52x+39x^2+12x^3-3x^4.
$
Moreover,
$
E_{0,0}(a_1^\ast,0)\approx 6.467098.
$
\end{lemma}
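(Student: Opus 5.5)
The plan is to work directly with the explicit rational function $E_{0,0}=N/D$ of Table~\ref{tab:d1ND} and carry out a two-variable polynomial minimization over the closed square $[0,1]^2$, with every step reduced to a sign or root-isolation statement about integer polynomials.

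\textbf{Step 1: the denominator has a fixed sign.} First we verify that $D(a_1,a_2)\neq 0$ on $[0,1]^2$, or more precisely that it keeps a fixed sign there. Note $D(0,0)=2>0$, so the target is $D>0$. Boundary or degenerate points where the chain fails to absorb (if any, e.g.\ $a_1=1$ combined with particular $a_2$) should be checked separately. There we argue from the chain itself that $E_{0,0}=\infty$ or is large, so they cannot be minimizers. For the positivity itself, I would expand $D$ in Bernstein form on $[0,1]^2$ and check that all coefficients are positive, subdividing if necessary. This gives continuity of $E_{0,0}$ and existence of a minimizer.

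\textbf{Step 2: the edge $a_2=0$.} Restricting to $a_2=0$ gives
\[
E_{0,0}(x,0)=\frac{14-5x-11x^2+3x^3-x^4}{2+x-7x^2+3x^3+x^4}.
\]
We compute its derivative. The numerator of the derivative should factor as a positive multiple of $-12+52x+39x^2+12x^3-3x^4$, possibly times a factor of fixed sign. Since this quartic equals $-12$ at $0$ and $88$ at $1$, it has a root in $(0,1)$. Its derivative $52+78x+36x^2-12x^3$ is positive on $[0,1]$, so the quartic is increasing there and the root $a_1^\ast$ is unique. Certified numerics then give $a_1^\ast\approx0.199257$ and the value $6.467098$. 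The remaining three edges ($a_1=0$, $a_1=1$, $a_2=1$) are handled similarly, showing their one-variable minima exceed this value.

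\textbf{Step 3: global comparison, the main obstacle.} It remains to prove that
\[
N(a_1,a_2)-c\,D(a_1,a_2)\geq 0 \quad\text{on }[0,1]^2, \qquad c=E_{0,0}(a_1^\ast,0),
\]
with equality only at $(a_1^\ast,0)$. Because $c$ is algebraic, I would replace it with a rational $c^-$ slightly below $c$ for the interior region, and use a local argument near the edge.

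\emph{Near the edge.} We show $\partial_{a_2}E_{0,0}>0$ along a strip $a_2\in[0,\delta]$. Concretely, the polynomial $N_{a_2}D-ND_{a_2}$ should be positive on $[0,1]\times[0,\delta]$, which means the minimum over the strip is attained on $a_2=0$ and hence at $(a_1^\ast,0)$ by Step 2.

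\emph{Away from the edge.} On the rest of the square we show $N-c^+D>0$ for a rational $c^+$ slightly above $c$. I would first try Bernstein expansion with adaptive subdivision of the rectangle. As a fallback, one can show there is no interior critical point by computing the resultant of the two partial-derivative numerators with respect to $a_2$, then isolating the real roots of the resulting univariate polynomial in $a_1$ and checking each candidate.

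Most of the difficulty lies in this certified positivity near the degenerate equality point, since the minimum sits on the boundary with nonzero gradient in $a_2$; the strip-derivative argument is intended to handle exactly that.
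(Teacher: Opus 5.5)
\paragraph{Comparison with the paper.} Your main route is genuinely different from the paper's. The paper treats the four edges directly, gets existence of a minimizer from lower semicontinuity ($E_{0,0}=\sup_m\sum_{t\le m}\Pr(\tau>t)$), and rules out interior minimizers by a single CAD check that $G_x=G_y=0$, $\widetilde D\neq0$ has no solution in $(0,1)^2$. Your resultant fallback is essentially that argument.

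\paragraph{Step 1 fails as stated.} The positivity statements your main route is built on are false, because $N$ and $D$ degenerate on the boundary. Every row sum and every column sum of the coefficient matrix of $D$ in Table~\ref{tab:d1ND} is zero. So $D=(a_1-1)(a_2-1)\widetilde D$ vanishes on the whole edges $a_1=1$ and $a_2=1$, not at isolated points, and moreover $N(1,0)=N(0,1)=0$. Hence no Bernstein certificate of $D>0$ on $[0,1]^2$ can exist. The quotient $N/D$ has the form $(\text{positive})/0$ along those edges and $0/0$ at the corners $(1,0)$ and $(0,1)$, whose true values are $+\infty$ and $7$. So $E_{0,0}$ is not continuous on the closed square, and Step~1 does not give the existence of a minimizer that your fallback needs. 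For the same reason, $N-cD$ vanishes at $(1,0)$ and $(0,1)$ as well as at $(a_1^\ast,0)$.

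\paragraph{The strip step fails more seriously.} $N_{a_2}D-ND_{a_2}$ is identically zero on $a_1=0$, because $E_{0,0}(0,a_2)\equiv 7$, and also on $a_1=1$. Along $a_2=0$ it changes sign near $a_1\approx 0.56$; for example, at $a_1=3/4$ it equals $-542676/262144\approx-2.07$. So for larger $a_1$, increasing $a_2$ lowers the expected time. Intuitively, with $a_1$ near $1$ and $a_2=0$ the agents shuttle between distances $1$ and $2$, and a positive $a_2$ lets them reach the peek level. Consequently, monotonicity in $a_2$ cannot collapse the strip $[0,1]\times[0,\delta]$ onto its lower edge. Separately, your early remark about using $c^-<c$ goes the wrong way: $N\ge c^-D$ only gives $E_{0,0}\ge c^-$. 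The later choice $c^+>c$ is the right one.

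\paragraph{How to repair it.} The architecture can be rescued, but the repair is where the real work lies.
\begin{enumerate}
\item Run the monotonicity argument only on $[0,\beta]\times[0,\delta]$ with $\beta<0.56$. First divide $N_{a_2}D-ND_{a_2}$ by its factor $a_1$; the quotient equals $32>0$ at the origin, so nonnegativity of the quotient on the strip suffices.
\item On the complement, certify $N-c^+D\ge 0$ together with a sign certificate for $D$ in the open square.
\item Analyze the two corners locally. Near $(1,0)$ one has $N\approx 22(1-a_1)+4a_2$ while $D\approx 8(1-a_1)(1-a_1+a_2)$, so $E_{0,0}\to\infty$. Near $(0,1)$ both $N$ and $D$ vanish to second order. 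At these corners, subdivision can at best certify nonnegativity with a zero vertex coefficient.
\end{enumerate}
Done this way, your route proves $E_{0,0}\ge E_{0,0}(a_1^\ast,0)$ directly and needs neither an existence argument nor CAD. If you instead fall back on excluding interior critical points, you must supply existence as the paper does, via lower semicontinuity. You must also treat the edges $a_1=1$ and $a_2=1$ through the chain rather than through $N/D$.
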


\begin{proof}
First, we find the optimizer of
$$
E_{0,0}(x,0)
=
\frac{x^3-2 x^2+9 x+14}{(x-1) \left(x^2+5 x+2\right)}
$$
Differentiating gives
$$
\frac{\d{}}{\d x}E_{0,0}(x,0)
=
\frac{-12+52x+39x^2+12x^3-3x^4}
{\tfrac12 (x-1)^2(x^4+10x^3+29x^2+20x+4)}.
$$
The denominator is strictly positive for every $x\in(0,1)$. We now show that the numerator has a unique root $a_1^\ast$ in $(0,1)$. Let
$
p(x)=-12+52x+39x^2+12x^3-3x^4.
$
Then
$$
p'(x)
=
52+78x+36x^2-12x^3
\geq
52+78x+36x^2-12
>
0
$$
for every $x\in[0,1]$. Hence $p$ is strictly increasing on $[0,1]$. Since
$
p(0)=-12<0,
$
and 
$
p(1)=88>0,
$
there is a unique root $x^\ast\in(0,1)$. 
It follows that 
$
\frac{\d{}}{\d x}E_{0,0}(x,0)
$
is negative on $(0,x^\ast)$ and positive on $(x^\ast,1)$, and so $E_{0,0}(x,0)$ is minimized at $x^\ast$, which admits a closed formula, but here we only provide the numerical estimate 
$x^\ast \approx 0.199257$.

Next we observe that 
$E_{0,0}(0,y)=7$. 
Moreover, if $x=1$, or if $y=1$ and $x>0$, then the expected absorption time is infinite. Indeed, in either case the chain has a positive-probability event on which the agents move forever without reaching the absorbing set. Since
$
E_{0,0}(x^\ast,0)\approx 6.467098<7,
$
the only possible boundary minimizer is $(x^\ast,0)$.

It remains to exclude minimizers in the interior, but we first justify that a global minimum is attained. If $\tau$ denotes the absorption time, then
$
E_{0,0}(x,y)=\sup_{m\geq 0}\sum_{t=0}^{m}\Pr_{x,y}(\tau>t).
$
Each finite sum is continuous in $(x,y)$, since the transition probabilities are continuous in the parameters. Hence $E_{0,0}$ attains its minimum. Thus the lemma follows if we prove that $E_{0,0}$ has no minimizer in $(0,1)^2$. As we describe next, we verify this with symbolic calculations using Mathematica.

Since
$
E_{0,0}(x,y)=N(x,y)/D(x,y)
$
is differentiable at every point where $D(x,y)\neq 0$, any minimizer in $(0,1)^2$ must satisfy the first order KKT conditions 
$$
\frac{\partial}{\partial x}E_{0,0}(x,y)=0,
\qquad
\frac{\partial}{\partial y}E_{0,0}(x,y)=0.
$$
Writing
$
G_x(x,y):=(\partial_xD)N-(\partial_xN)D
$,
and
$
G_y(x,y):=(\partial_yD)N-(\partial_yN)D
$,
the above conditions are equivalent to
$$
G_x(x,y)=0,
\qquad
G_y(x,y)=0,
\qquad
D(x,y)\neq 0.
$$
Since
$
D(x,y)=(x-1)(y-1)\widetilde D(x,y)
$
and $(x-1)(y-1)\neq 0$ on $(0,1)^2$, the condition $D(x,y)\neq 0$ is equivalent to
$
\widetilde D(x,y)\neq 0.
$
Therefore it suffices to determine whether the semi-algebraic set
$$
\left\{
(x,y)\in\mathbb R^2:
G_x(x,y)=0,\,
G_y(x,y)=0,\,
\widetilde D(x,y)\neq 0,\,
0<x<1,\,
0<y<1
\right\}
$$
is empty.
We verified this using Mathematica's \texttt{Reduce} command based on
cylindrical algebraic decomposition~\cite{Collins-1975-Cylindrical};
see Appendix~\ref{sec:MathematicaCode}.
The computation is symbolic rather than numerical and returns \texttt{False}. 
Hence the corresponding semi-algebraic set is empty, and therefore $E_{0,0}$ has no critical point in $(0,1)^2$.
\end{proof}

\subsubsection{Empirical Rendezvous Upper Bounds for $d=1,\ldots,6$} \label{sec:NumericalComputations} 


For $d=1$, the weak-peek minimization problem was solved exactly in the previous section, yielding a rigorous optimality proof. For larger values of $d$, the corresponding weak-peek objective functions become too complicated for a direct symbolic analysis. We therefore turn to numerical computations. The goal remains the same: to identify weak-peek minimizers whose first vanishing parameter may certify optimality of a truncated strategy through Lemma~\ref{lem:WeakPeekCertificate}. While we do not prove global optimality of the computed weak-peek minimizers, the resulting truncated strategies still provide rigorous computable upper bounds for $R_d$. If the corresponding weak-peek minimizers are globally optimal, then Lemma~\ref{lem:WeakPeekCertificate} implies that these upper bounds are exact.

The computations were carried out in Python, using NumPy and SciPy.  For each
fixed pair $(d,n)$, we considered the level-$n$ weak-peek chain from
Definition~\ref{def:n-WeakPeekChain}.  For each fixed vector
$(a_1,\ldots,a_{n-1})\in[0,1]^{n-1}$, with $a_0=1/2$, we constructed the finite
linear system from Lemma~\ref{lem:finite-hitting-system} corresponding to the
expected hitting times of the weak-peek chain.  The value
$
L^{\mathrm{wp}}_{d,n}(a)
$
was then evaluated numerically from the corresponding sparse linear system using
\texttt{scipy.sparse.linalg.spsolve}.
The optimization over $[0,1]^{n-1}$ was performed using the L-BFGS-B routine~\cite{byrd1995limited} from
\texttt{scipy.optimize.minimize}, with box constraints $0\leq a_i\leq 1$.
For each pair $(d,n)$, the optimization was initialized from a collection of deterministic and randomized starting points. The deterministic points included profiles obtained from previous computations and several simple benchmark profiles. The remaining starting points were generated from a mixture of random vectors in $[0,1]^{n-1}$ and perturbations of the previously computed profiles.

For $d=1,\ldots,5$, we used 20 independent random initial points for each reported
pair $(d,n)$, and report the smallest value obtained among these runs.  These
computations were stable: the same candidate truncation indices and essentially the
same objective values were repeatedly obtained.
For $d\geq 6$, the computations become substantially harder. For the reported case
$d=6$, we ran the weak-peek optimization for
$
n=2d,2d+1,\ldots,10d.
$
For each pair $(d,n)$, we used 200 independent random initial points.  For each
fixed $d$, we recorded the smallest value found, and also tracked near-best
certificate candidates.  A parameter was treated as numerically zero when
$
a_k\leq 10^{-7}.
$
Among values within $10^{-8}$ of the best value found, we selected the candidate
with the earliest reported vanishing parameter.  We then looked for the same
near-best vanishing parameter to appear repeatedly for consecutive values of $n$.
Once such a candidate was detected, the corresponding value of $n$ was rerun with
500 independent random initial points as an additional stability check.

For every reported computation, if $AE=b$ denotes the sparse linear system arising
from Lemma~\ref{lem:finite-hitting-system} for the parameter vector being evaluated,
then the residual norm satisfied
$
\|AE-b\|_\infty\leq 10^{-7}
$
for the computations with $d=1,\ldots,5$, and
$
\|AE-b\|_\infty\leq 10^{-9}
$
for the computations with $d= 6$.
By Lemma~\ref{lem:WeakPeekLowerBound},
$
\inf_{a\in[0,1]^{n-1}}L^{\mathrm{wp}}_{d,n}(a)\leq R_d.
$
Thus the level-$n$ weak-peek chain gives a computable finite lower-bound model for
the rendezvous value.

For each reported pair $(d,n)$, we list the smallest weak-peek value found by the optimization procedure together with the corresponding parameter vector. Whenever the reported solution satisfies
$
a_k=0
$
for some $d\leq k<n$, we also report the smallest such index $k$. By Lemma~\ref{lem:WeakPeekCertificate}, if the reported solution is a global minimizer of the level-$n$ weak-peek problem, then
$$
R_d
=
U_{d,k}(a_1,\ldots,a_{k-1}),
$$
and the corresponding level-$k$ truncated strategy is optimal.

By Lemma~\ref{lem:finite-hitting-system}, each evaluation of
$
L^{\mathrm{wp}}_{d,n}
$
requires solving a sparse linear system with
$
(n+d)^2
$
unknowns. Thus the computation may be viewed as a nonlinear optimization problem with
$
n-1
$
box-constrained variables together with
$
(n+d)^2
$
auxiliary variables linked by
$
(n+d)^2
$
linear constraints.

Lemma~\ref{lem:WeakPeekCertificate} shows that if a minimizer of the level-$n$
weak-peek problem satisfies $a_k=0$ for some $d\leq k<n$, then the corresponding
level-$k$ truncated strategy is optimal.
In computations, however, the weak-peek boundary is not neutral.  It gives the
agents an artificial advantage once distance $n$ is reached, because the boundary
rule reveals the direction toward rendezvous.  Therefore, when $n$ is too close to
the true truncation level, the optimizer may be incentivized to push probability
mass toward reaching the weak-peek boundary, rather than setting an earlier
parameter equal to zero.  In that case, the numerical minimizer may keep $a_k>0$
even though the eventual truncated strategy is the structure we want to detect.

For this reason, it can be useful to run the weak-peek optimization at levels $n$
larger than the expected truncation index.  Moving the aided boundary farther away
makes it less beneficial to exploit the weak-peek information, and the computation
is then more likely to reveal the earlier vanishing parameter. Thus larger values
of $n$ are not needed because of the certificate itself; they are useful because
they make the numerical search less biased toward strategies that use the
artificial help available at the weak-peek boundary.
We emphasize that these computations do not establish global optimality. The resulting truncated-strategy upper bounds are rigorous, while optimality of the corresponding strategies remains a conjecture.

Table~\ref{tab:weakpeek-summary} reports the main numerical results obtained from
the weak-peek optimization for the tested values of $d$.  For each value of $d$,
we report the selected weak-peek computation together with the smallest value of
$E_{0,0}$ found.  The reported index $k$ denotes the first parameter numerically
reported as zero.  Under the certificate property proved earlier, if the reported
solution is in fact a global minimizer of the corresponding weak-peek problem, then
the induced level-$k$ truncated strategy is optimal, and the reported value gives
matching upper and lower bounds for $R_d$.
The corresponding optimized parameter values are listed in
Table~\ref{tab:weakpeek-alphas}, in Appendix~\ref{sec:CandOptSupport}.

\begin{table}[!h]
\centering
\begin{tabular}{c c c c c c c} 
\hline 
& $d=1$ & $d=2$ & $d=3$ & $d=4$ & $d=5$ & $d=6$ \\ 
\hline $n$ & 3 & 6 & 10 & 14 & 18 & 26 \\ 
$k$ & 2 & 5 & 9 & 13 & 17 & 21 \\ 
$E_{0,0}$ & 6.467098 & 30.083326 & 69.418625 & 124.447569 & 195.190221 & 281.650397 \\ 
$E_{0,0}/d^2$ & 6.467098 & 7.520831 & 7.713180 & 7.777973 & 7.807608 & 7.823622 \\ \hline \end{tabular}
\caption{Numerically computed weak-peek candidates for the tested values of $d$.
For each distance $d$, the table reports the weak-peek level $n$, the first numerically vanishing index $k$, the corresponding rigorous truncated-strategy upper bound $E_{0,0}$, and $E_{0,0}/d^2$.}
\label{tab:weakpeek-summary}
\end{table}

\section{Oblivious Rendezvous Bounds for Asymptotic Values of $d$}
\label{sec:AsymptoticBounds}
In contrast to the finite-state constructions (known-distance setting) of Section~\ref{sec:SmallDFrameworks}, the lower bound developed here applies to both the known-distance and unknown-distance settings.
It is universal, as it holds for every admissible strategy and every value of $d$, including strategies that are allowed to depend on $d$ in the known-distance setting.
For upper bounds, we give a universal strategy independent of $d$ for the unknown-distance setting, and a separate $d$-dependent finite-support strategy yielding a quadratic bound in the known-distance setting.

\subsection{Electrical Representation of the Induced Chains}
\label{sec:ElectricalRepresentationInducedChains}

Oblivious self-distance strategies induce Markov chains whose states record the current distance from the origin, or more generally the relative separation between the two agents. In both the lower and upper bound arguments below, the analysis reduces to a random walk on a path. For the lower bound, this arises by tracking the repeated returns to the origin that must occur after unsuccessful rendezvous attempts. For the upper bound, we restrict attention to a specific family of configurations leading to rendezvous, thereby reducing the analysis again to a one-dimensional path in the two-agent state space. So at a high level, in both cases the relevant quantity is a hitting time of a random walk on $\{0,1,\dots\}$ with transitions only between adjacent states. Such \emph{birth-death} chains admit a useful electrical interpretation, allowing hitting times to be controlled through \emph{effective resistance estimates} (for a general treatment, see~\cite{LevinPeresWilmer}). We briefly recall this correspondence, which will be invoked repeatedly in what follows.

More precisely, consider a birth-death chain on $\{0,1,\dots\}$ with transition probabilities
$
P(i,i+1)=p_i
$
and
$
P(i,i-1)=q_i.
$
Whenever there exists a positive measure $\pi$ satisfying
$
\pi(i)p_i=\pi(i+1)q_{i+1},
$
the chain is called \emph{reversible} with respect to $\pi$.
If $p_i,q_{i+1}>0$ for every $i$, then the chain is reversible. Indeed, one may choose $\pi_0=1$ and define recursively
$
\pi_{i+1}
=
\pi_i
\tfrac{p_i}{q_{i+1}},
$
for all $i\geq 0$.

When the chain is reversible, one may define edge \emph{conductances}
$
c(i,i+1)=\pi(i)p_i,
$
which induce an electrical network on the same path, and with \emph{total conductance} $\sum_e c(e)$.

For an electrical network with edge conductances $c(e)$, the \emph{resistance} of an edge $e$ is $1/c(e)$. The \emph{effective resistance} between two states $a,b$ is the electrical resistance measured between $a$ and $b$ in the resulting network. Since the network is a path, this is given by
$$
R(a\leftrightarrow b)
=
\sum_{e\in P(a,b)}
\frac{1}{c(e)},
$$
where $P(a,b)$ denotes the unique path from $a$ to $b$.
The significance of the effective resistance comes from the fact that, for reversible Markov chains, hitting and commute times can be expressed in terms of resistance and total conductance. Thus probabilistic questions about the random walk may be reduced to deterministic calculations in the associated electrical network.

The next lemma will be applied to reversible birth-death chains on a path in the lower-bound argument of Section~\ref{sec:QuadraticLowerBounds}. We state it in its full generality, as the finite commute-time identity (Proposition~10.6 of~\cite{LevinPeresWilmer}), because the upper-bound argument in Section~\ref{sec:QuadraticUpperBound} will require an extension to infinite networks.

\begin{lemma}
\label{lem:ElectricalCommuteIdentity}
Let $X_t$ be a finite reversible Markov chain, and let $c(e)$ be the edge conductances induced by a reversible measure. Denote by 
$c_G$
the total conductance of the network, and for any two states $a,b$, let
$
R(a\leftrightarrow b)
$
denote the effective resistance between $a$ and $b$.
Then
$$
\E[T_{a\to b}]
+
\E[T_{b\to a}]
=
2 c_G ~ R(a\leftrightarrow b),
$$
where 
$T_{x\to y}
$
denotes the hitting time of state $y$ when the chain is started from state $x$.
\end{lemma}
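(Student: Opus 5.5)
The plan is the standard commute-time argument, organized around Green's functions and the voltage interpretation. Fix $a\neq b$. For a finite irreducible reversible chain, $\pi(x)=c(x)/c_G$ where $c(x)=\sum_{y}c(x,y)$ and $c_G$ is normalized so that $\sum_x c(x)=c_G$ (the paper's total conductance, counting each edge from both endpoints, which is what makes the factor $2$ appear when $c_G$ is taken as $\sum_e c(e)$). If the chain is not irreducible, I would restrict to the communicating class of $a$ and $b$, which is harmless since both hitting times are finite only there. The proof has three steps.

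\emph{Step 1: Green's function.} Let $G_{\tau}(a,x)$ be the expected number of visits to $x$ before the commute time $\tau_{a,b}$, defined as the first return to $a$ after first visiting $b$. Since this commute is a renewal cycle at $a$, the standard cycle identity gives $G_\tau(a,x)=\pi(x)\,\E_a[\tau_{a,b}]$. At $x=a$ we have $G_\tau(a,a)=G_{T_b}(a,a)$, because after hitting $b$ the walk makes no further counted visits to $a$ before $\tau$. Hence $\E[T_{a\to b}]+\E[T_{b\to a}]=\E_a[\tau_{a,b}]=G_{T_b}(a,a)/\pi(a)$.

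\emph{Step 2: escape probability.} $G_{T_b}(a,a)$ is the expected number of visits to $a$ before $T_b$, counting time $0$. It is geometric with mean $1/\Pr_a(T_b<T_a^+)$.

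\emph{Step 3: electrical input.} Use the electrical identity $\Pr_a(T_b<T_a^+)=\frac{1}{c(a)R(a\leftrightarrow b)}$. This comes from the unit voltage $h(x)=\Pr_x(T_a<T_b)$, which is harmonic off $\{a,b\}$. The current out of $a$ is $\sum_y c(a,y)(1-h(y))=c(a)\Pr_a(T_b<T_a^+)$, and the effective resistance is voltage divided by current. Combining the three steps gives
$$\E[T_{a\to b}]+\E[T_{b\to a}]=\frac{c(a)R(a\leftrightarrow b)}{\pi(a)}=c_G\,R(a\leftrightarrow b)$$
in the vertex-sum normalization, which equals $2\sum_e c(e)\,R(a\leftrightarrow b)$.

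The main obstacle is bookkeeping rather than ideas. The factor $2$ must be matched against the paper's definition of $c_G$ as $\sum_e c(e)$, since each edge is counted twice in $\sum_x c(x)$. Self-loops, if present, must be counted consistently. The uniqueness and harmonicity of the voltage function rely on the finite maximum principle. For the birth–death applications there are no loops and $R$ is the series sum already stated. Because this is Proposition~10.6 of~\cite{LevinPeresWilmer}, I would cite it and include the above sketch only for the normalization.
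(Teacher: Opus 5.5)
Your proposal is correct and follows the paper's route: the paper gives no argument of its own and simply invokes Proposition~10.6 of~\cite{LevinPeresWilmer}, whose textbook proof is your Green's-function identity at $a$ combined with the escape probability $\Pr_a(T_b<T_a^+)=1/(c(a)R(a\leftrightarrow b))$, and your bookkeeping $\sum_x c(x)=2\sum_e c(e)$ correctly reconciles the factor $2$ with the paper's convention $c_G=\sum_e c(e)$. One small caveat: restricting to the communicating class of $a,b$ is not harmless for the identity as stated, because it replaces $c_G$ by the conductance of that class and in general gives only $\leq$; the lemma should therefore be read, as in the cited source, for a connected network, which holds wherever the paper uses the equality, and elsewhere the paper needs only the inequality.
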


The lower-bound argument in Section~\ref{sec:QuadraticLowerBounds} applies Lemma~\ref{lem:ElectricalCommuteIdentity} only to finite birth-death chains. In the upper-bound argument of Section~\ref{sec:QuadraticUpperBound}, the relevant network is infinite and the target is a set of meeting states. We will therefore use the following extension, obtained by approximating the network with finite networks.

\begin{lemma}
\label{lem:InfiniteResistanceHittingBound}
Let $X_t$ be a countable reversible Markov chain with state space $\mathcal S$ whose induced electrical network has finite total conductance $c_G$. Let $A$ be a nonempty set of states. Define $R(a\leftrightarrow A)$ as the supremum, over all finite sets $B\subseteq\mathcal S\setminus A$ containing $a$, of the effective resistance from $a$ to the vertex obtained by merging $A\cup(\mathcal S\setminus B)$ into one vertex.
Then
$$
\E[T_{a\to A}]
\leq
2c_G\,R(a\leftrightarrow A).
$$
\end{lemma}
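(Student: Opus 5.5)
The plan is to derive Lemma~\ref{lem:InfiniteResistanceHittingBound} from the finite commute-time identity of Lemma~\ref{lem:ElectricalCommuteIdentity}, applied to an exhaustion of the state space by finite collapsed networks. We may assume $R(a\leftrightarrow A)<\infty$, since otherwise there is nothing to prove.

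\textbf{Step 1: the finite collapsed networks.} Fix a finite set $B\subseteq\mathcal S\setminus A$ containing $a$. Collapse $A\cup(\mathcal S\setminus B)$ into a single vertex $z_B$. The result is a finite network on $B\cup\{z_B\}$. Edges from $B$ into the collapsed set keep their conductances; parallel edges are summed, and edges inside the collapsed set are discarded or treated as loops at $z_B$. Let $c_B$ be the total conductance of this network. Since it uses only edges of the original network, $c_B\leq c_G$. Let $\widehat X$ be the walk on this finite network. Lemma~\ref{lem:ElectricalCommuteIdentity} then gives
\[
\E[\widehat T_{a\to z_B}]\leq \E[\widehat T_{a\to z_B}]+\E[\widehat T_{z_B\to a}]=2c_B\,R_B(a\leftrightarrow z_B)\leq 2c_G\,R(a\leftrightarrow A).
\]

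\textbf{Step 2: identifying the finite walk with the original chain.} Up to the first exit from $B$, the walk $\widehat X$ has the same transitions as $X$. This holds because reversibility gives $P(x,y)=c(x,y)/\pi(x)$ with $\pi(x)=\sum_y c(x,y)$, and both quantities are unchanged for $x\in B$. Hence $\widehat T_{a\to z_B}$ has the law of $\tau_B:=\inf\{t: X_t\in A\cup(\mathcal S\setminus B)\}$ for $X$ started at $a$. Consequently $\E_a[\tau_B]\leq 2c_G R(a\leftrightarrow A)$ for every admissible $B$.

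\textbf{Step 3: passing to the limit.} Take an increasing exhaustion $B_m\uparrow\mathcal S\setminus A$ by finite sets containing $a$. Then $\tau_{B_m}$ is nondecreasing in $m$ and $\tau_{B_m}\leq T_{a\to A}$. We need $\tau_{B_m}\uparrow T_{a\to A}$ almost surely. On the event $T_{a\to A}<\infty$, the path before absorption visits finitely many states, all of which lie in $B_m$ for large $m$; so $\tau_{B_m}=T_{a\to A}$ eventually.

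\textbf{The main obstacle} is the event that $A$ is never hit. Here one must show $\tau_{B_m}\to\infty$, or else rule the event out.

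If the path escapes to infinity, every state is eventually left forever, but the exit times $\tau_{B_m}$ still tend to infinity. The reason is that each fixed time $t$ lies below $\tau_{B_m}$ as soon as $B_m$ contains the finitely many states $X_0,\dots,X_t$. The same argument applies to any path that avoids $A$ forever, since at each finite time it has visited only finitely many states. So $\tau_{B_m}\uparrow T_{a\to A}$ in all cases, including the value $+\infty$.

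Monotone convergence then yields
\[
\E[T_{a\to A}]=\lim_m\E[\tau_{B_m}]\leq 2c_G R(a\leftrightarrow A).
\]
In particular $T_{a\to A}<\infty$ almost surely whenever the right-hand side is finite.

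The only genuinely delicate checks are the following:
\begin{itemize}
\item the bound $c_B\leq c_G$, together with the correct handling of loops at $z_B$, which only add conductance and never affect the transitions from states in $B$;
\item the agreement of the transition probabilities inside $B$ in Step 2.
\end{itemize}
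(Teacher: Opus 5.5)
Your proposal is correct and follows essentially the same route as the paper. Both exhaust $\mathcal S\setminus A$ by finite sets $B_m\ni a$, collapse $A\cup(\mathcal S\setminus B_m)$ to one vertex, and apply the finite commute-time identity with $c_G^{(m)}\le c_G$ and $R_m\le R(a\leftrightarrow A)$; both then identify the finite hitting time with the exit time $\tau_{B_m}$ of the original chain and conclude by monotone convergence. Your explicit argument that $\tau_{B_m}\uparrow T_{a\to A}$, including on $\{T_{a\to A}=\infty\}$ via finiteness of path prefixes, is more careful than the paper's one-line appeal to $\bigcap_m A_m^*=A$; you should, however, set aside the trivial case $a\in A$ separately, as the paper does.
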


\begin{proof}
If $R(a\leftrightarrow A)=\infty$, there is nothing to prove. If $a\in A$, then both sides are zero. Thus we may assume that $R(a\leftrightarrow A)<\infty$ and $a\notin A$.

Let now $\mathcal S$ denote the state space of the reversible chain, and let
$
B_1\subseteq B_2\subseteq\cdots
$
be finite subsets of $\mathcal S\setminus A$ such that $a\in B_m$ for every $m$,
$
\bigcup_m B_m=\mathcal S\setminus A,
$
and, after merging
$
A_m^*=A\cup(\mathcal S\setminus B_m)
$
into one vertex, the resulting finite network contains a path from $a$ to that vertex.

Then $\mathcal S=B_m\cup A_m^*$, and $B_m, A_m^*$ are disjoint. 
For each $m$, form a finite network by merging all states of $A_m^*$ into a single vertex, denoted by $A_m$, and deleting any self-loops created at $A_m$. Let $c_G^{(m)}$ denote the total conductance of this finite network, and let
$
R_m(a\leftrightarrow A_m)
$
denote the corresponding effective resistance.

By construction, $c_G^{(m)}$ is the sum of the conductances of edges with at least one endpoint in $B_m$, after parallel edges to $A_m$ have been merged and self-loops at $A_m$ have been deleted. Hence
$
c_G^{(m)}\leq c_G.
$
By the definition of $R(a\leftrightarrow A)$,
$
R_m(a\leftrightarrow A_m)\leq R(a\leftrightarrow A).
$

Applying now Lemma~\ref{lem:ElectricalCommuteIdentity} (Proposition~10.6 of~\cite{LevinPeresWilmer}) to this finite reversible network gives
$$
\E[T_{a\to A_m}^{(m)}]
=
2c_G^{(m)}R_m(a\leftrightarrow A_m) - 
\E[T_{A_m\to a}^{(m)}]
\leq 2c_G^{(m)}R_m(a\leftrightarrow A_m).
$$

The finite network defines a reversible Markov chain through its conductances. For every $x,y\in B_m$, its transition probability from $x$ to $y$ is the same as in the original chain, while its transition probability from $x\in B_m$ to $A_m$ is the total original probability of moving from $x$ to $A_m^*$. Therefore, the hitting time of $A_m$ in the finite merged chain has the same distribution as
$$
\tau_m=\inf\{t\geq 0:X_t\in A_m^*\}
$$
for the original chain started from $a$. Therefore,
$
\E[\tau_m]
=
\E[T_{a\to A_m}^{(m)}].
$

Since
$
A_m^*\supseteq A_{m+1}^*
$
and
$
\bigcap_m A_m^*=A,
$
we have
$
\tau_m\leq \tau_{m+1}
$
for every $m$, and
$
\tau_m \rightarrow T_{a\to A},
$
from below. 
By the Monotone Convergence Theorem, it follows that 
$$
\E[T_{a\to A}]
=
\lim_{m\to\infty} \E[\tau_m]
= 
\lim_{m\to\infty}	 \E[T_{a\to A_m}^{(m)}]
\leq
2c_G R(a\leftrightarrow A).
$$
\end{proof}

In the arguments below, we will identify suitable measures $\pi$ for the induced chains arising from oblivious self-distance strategies, and then apply Lemma~\ref{lem:ElectricalCommuteIdentity} to estimate the corresponding hitting times.

\subsection{Lower Bound for Oblivious Strategies (Known-Distance Setting)}
\label{sec:QuadraticLowerBounds}

In this section we establish a quadratic lower bound that holds uniformly over all admissible strategies, for every value of $ d\geq 1$, even when the strategy itself is allowed to depend on $d$.
\begin{theorem}
\label{thm:QuadraticLowerBound}
For every $d\geq 1$, we have $R_d = \Omega\left(d^2\right)$.
\end{theorem}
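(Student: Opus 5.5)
The plan is to reduce everything to the self-distance process of a single agent, which under any admissible strategy $(a_i)_{i\geq 0}$ is a birth-death chain on $\{0,1,2,\dots\}$. From $i\geq 1$ it moves to $i+1$ with probability $a_i$ and to $i-1$ with probability $1-a_i$. From $0$ it moves to $1$ with probability $1$. Two facts then need to be combined:
\begin{itemize}
\item[(i)] a geometric constraint: at the meeting time $\mathcal T_d$ we have $X-Y=2d$, hence $|X_{\mathcal T_d}|+|Y_{\mathcal T_d}|\geq 2d$, so at least one agent is at self-distance at least $d$;
\item[(ii)] a symmetry constraint: each excursion of an agent away from her origin has a fair, independent direction, because $a_0=\tfrac12$ and nothing afterwards distinguishes the two sides. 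In particular, while agent~1 is on the negative side and agent~2 on the positive side, $X-Y<0$ and rendezvous is impossible.
\end{itemize}
The target is to show that either reaching distance $d$ is slow, or, once distance $d$ is reached, coming back is slow. A wrong-direction excursion then costs order $d^2$ with constant probability.

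The quantitative core is Lemma~\ref{lem:ElectricalCommuteIdentity} applied to the finite birth-death chain on $\{0,\dots,d\}$. This is the self-distance chain stopped or reflected at $d$, and its hitting times from $0$ to $d$ and from $d$ to $0$ coincide with those of the original chain before passing $d$. Let $h=\E[T_{0\to d}]$ and $r=\E[T_{d\to 0}]$, with conductances $c(i,i+1)=\pi(i)p_i$ for $i=0,\dots,d-1$. We would like to apply the commute identity $h+r=2c_G R(0\leftrightarrow d)$ with $c_G=\sum_{i<d}c(i,i+1)$ and $R(0\leftrightarrow d)=\sum_{i<d}1/c(i,i+1)$. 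Two points must be dealt with first:
\begin{itemize}
\item[(a)] To make the commute identity applicable, first dispose of strategies with some $a_i=0$ for $i<d$ (then distance $d$ is never reached, so $\mathcal T_d=\infty$) and of those with some $a_i=1$, $1\le i<d$ (then $q_{i+1}$ may vanish; either perturb $a_i$ slightly and take limits, or note that then no return from beyond $i$ is possible, which by (ii) gives infinite expected time with probability $\ge 1/4$).
\item[(b)] The reflection at $d$ adds a self-loop or boundary conductance that can only increase $c_G$, and hence can only help the bound below.
\end{itemize}
Cauchy--Schwarz then gives $c_G\,R(0\leftrightarrow d)\geq d^2$, so $h+r\geq 2d^2$. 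Consequently either $h\geq d^2$ or $r\geq d^2$.

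Case $r\geq d^2$: consider the first time an agent's excursion reaches distance $d$. With probability at least $1/4$ both agents' current excursions point the wrong way, that is, agent~1 negative and agent~2 positive. On this event meeting cannot occur until one of them returns to her origin. By the strong Markov property this takes expected time $\ge$ the return time from $d$, giving $\E[\mathcal T_d]\geq r/4\cdot(\text{const})$. A detail to check: only one agent need be at distance $d$, while the other may be closer to her origin. I would handle this by requiring the far agent to return. Meeting is impossible while she is on the wrong side and the other agent is at distance $<d$ on the wrong side. If the other agent crosses over, I would argue separately that she must herself reach distance $\geq d$, recursing on the event.

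Case $h\geq d^2$: here the difficulty is that meeting needs only the \emph{first} of two independent agents to reach distance $d$, and the minimum of two i.i.d.\ hitting times can have much smaller mean than $h$. I would decompose $T_{0\to d}$ into i.i.d.\ excursions from $0$, each succeeding (reaching $d$) with probability $s$. Each failed excursion has mean length $\ell$, so $h\approx \ell/s$ up to the successful excursion's duration. The minimum over two agents still needs a geometric number of excursions with success rate about $2s$, giving mean $\gtrsim \ell/(2s)\approx h/2$. The successful excursion's own climbing time must be handled separately; it is itself bounded via the same resistance calculation restricted to excursions conditioned to succeed. This step is the main obstacle. I expect it to need a careful split of $h$ into a part due to repeated failures and a part due to the successful climb, with the commute-time bound applied to whichever dominates. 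In all cases the conclusion is $\E[\mathcal T_d(a)]\geq c\,d^2$ for an absolute constant $c>0$, uniformly in $a$, hence $R_d=\Omega(d^2)$.
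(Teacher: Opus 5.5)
\textbf{There is a genuine gap.} Every quantitative step in your plan controls only \emph{expectations}, but both of your cases need a constant-probability (anti-concentration) statement.

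In the case $h\ge d^2$ you need $\mathbb{E}[\min(T^{(1)}_{0\to d},T^{(2)}_{0\to d})]\ge c\,h$. This fails for heavy-tailed passage times, and you leave it open. The excursion count does not close it, because the successful climb is itself a birth--death passage time, and its tail is exactly what is uncontrolled.

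The case $r\ge d^2$ has the same problem in less visible form. On the event that both current excursions point the wrong way, meeting is blocked only until the \emph{first} of the two agents returns to her origin, and the near agent may do so quickly. Afterwards she can meet the far agent, who is still on the wrong side, by reaching self-distance $2d+1$. Your case hypothesis gives no control on $h$, so nothing prevents this from being fast. Hence the strong Markov property does not give $\mathbb{E}[\mathcal T_d]\gtrsim r$. You would need the far agent's return time to exceed $c\,d^2$ with constant probability, and $\mathbb{E}[T_{d\to0}]\ge d^2$ says nothing about that. The proposed ``recursion'' does not supply it either.

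\textbf{The missing ingredient} is a second-moment bound for birth--death passage times. Write $T_{d\to0}=D_d+\cdots+D_1$ as a sum of independent one-level descents, and set $A_i=a_i/(1-a_i)$ and $B_i=1/(1-a_i)$. The means and variances then satisfy $m_i=B_i+A_im_{i+1}$ and $v_i=A_iv_{i+1}+A_iB_i(1+m_{i+1})^2$. Backward induction gives $v_i\le m_i^2+2m_iS_{i+1}$ with $S_{i+1}=\sum_{j>i}m_j$, hence $\operatorname{Var}(T_{d\to0})\le(\mathbb{E}T_{d\to0})^2$, and likewise for $T_{0\to d}$. Combined with Paley--Zygmund, this turns your commute-time/Cauchy--Schwarz bound into $\mathbb{P}[\sigma_d\le d^2/2]\le 7/8$, where $\sigma_d$ is the full commute $d\to0\to d$ of the reflected chain. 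No split into $h$ versus $r$ is needed.

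The paper then avoids any race between the agents by fixing the horizon $T=d^2/2$. Meeting by time $T$ forces $\max_{s\le T}X_s+\max_{s\le T}(-Y_s)\ge 2d$. Let $H_d(T)$ count the excursions reaching level $d$ by time $T$. Your fair-sign observation (ii) gives $\mathbb{P}[\tau_d^+\le T]\le\tfrac12+\tfrac12\mathbb{P}[H_d(T)\ge2]$. Two such excursions force a full commute within time $T$, so $\mathbb{P}[H_d(T)\ge 2]\le\mathbb{P}[\sigma_d\le T]\le 7/8$, and therefore $\mathbb{P}[\tau_d^+\le T]\le 15/16$. Independence of the two agents then yields $\mathbb{P}[\mathcal T_d>T]\ge 1/256$, i.e.\ $\mathbb{E}[\mathcal T_d]\ge d^2/512$.

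With the variance bound in hand, your case $h\ge d^2$ would also close, since $\mathbb{P}[T_{0\to d}>h/2]\ge 1/8$ gives $\mathbb{E}[\min]\ge h/128$. Your case $r\ge d^2$, however, would still require the race argument that the fixed-horizon formulation sidesteps.
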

Recall that
$
R_d=\inf_{a \in \mathcal A} \E[\mathcal T_d],
$
where $\mathcal T_d=\mathcal T_d(a)$ for an admissible sequence $(a_i)_{i\geq0}$ with $a_0=1/2$.
Therefore, it suffices to prove the existence of a universal constant $c>0$ such that
$
\E[\mathcal T_d]\geq c d^2
$
for every admissible sequence $(a_i)_{i\geq0}$.
All claims in Section~\ref{sec:QuadraticLowerBounds} are stated for an arbitrary fixed
$d\geq 1$ and an arbitrary fixed admissible sequence $(a_i)_{i\geq0}$. Accordingly,
all probabilities and expectations in this section are taken with respect to that fixed
strategy $(a_i)_{i\geq0}$ and that fixed value of $d$.
Next, we fix $T=\frac12d^2$.\footnote{The choice $T=\frac12 d^2$ is made for simplicity. Optimizing over $T=\theta d^2$ with $\theta\in(0,1)$ slightly improves the numerical constant of Theorem~\ref{thm:QuadraticLowerBound} in the final lower bound, currently equal to $1/512$, but the purpose here is to establish the quadratic order of growth.
Moreover, the proof is based on a sequence of relaxation arguments,
and is therefore not expected to yield a sharp quadratic constant.}
Let $(X_t)$ and $(Y_t)$ denote the displacement processes of the left and right agents, respectively. Define also 
$$
M_{X,T}^+=\max_{0\leq s\leq T}X_s,
\qquad
M_{Y,T}^-=\max_{0\leq s\leq T}(-Y_s).
$$
If rendezvous occurs by time $T$, then for some $s\leq T$,
$
X_s-Y_s=2d
$, and therefore, 
$$
2d=X_s-Y_s\leq M_{X,T}^+ + M_{Y,T}^-.
$$
It follows that, if
$
M_{X,T}^+ + M_{Y,T}^- < 2d,
$
then rendezvous does not occur by time $T$.
Consequently, it is enough to show that this event has probability bounded below by a positive constant.

To obtain such a bound, we study the motion of a single agent. Let $(X_t)_{t\geq 0}$ denote the displacement process of one agent, started from $X_0=0$, and define
$$
\tau_d^+=\inf\{t\geq 0:X_t=d\}.
$$
We compare the signed process $X_t$ with the absolute-value process
$
R_t=|X_t|.
$
Each excursion of $R_t$ away from $0$ corresponds either to a ``positive'' excursion or to a ``negative'' excursion of $X_t$, with the sign (right or left, respectively) chosen independently and uniformly. Hence, after an excursion of $R_t$ reaches level $d$, the probability that the corresponding excursion of $X_t$ reaches $+d$ is exactly $1/2$. 

Hence, conditional on exactly one excursion of $R_t$ reaching level $d$ by time $T$,
the event $\{\tau_d^+\leq T\}$ has probability exactly $1/2$.
Any additional contribution to $\mathbb P \left[ \tau_d^+\leq T \right]$
must therefore come from trajectories in which at least two excursions of $R_t$ reach level $d$ by time $T$.
Two such excursions force the absolute-value chain, after first reaching $d$, to return to $0$ and then reach $d$ again. Thus the key point is to control the probability of completing the commute
$
d\to0\to d
$
too quickly. We formalize this using a truncated version of the absolute-value chain.

We now begin the formal argument. For the displacement process $(X_t)_{t\geq 0}$ of one agent, started from $X_0=0$, let
$
R_t=|X_t|.
$
Let $(\widetilde R_t)_{t\geq 0}$ be the Markov chain on $\{0,1,\ldots,d\}$ obtained from $(R_t)$ by replacing the transition from $d$ to $d+1$ with a deterministic move from $d$ to $d-1$. We call $(\widetilde R_t)$ the reflected chain.
Let $\sigma_d$ be the time for $(\widetilde R_t)$, started from $d$, to hit $0$ and then return to $d$.
Our main technical lemma is the following fast commute obstruction.

\begin{lemma}
\label{lem:FastCommuteObstruction}
$
\mathbb P\left[\sigma_d\leq \frac12 d^2\right]\leq \frac78.
$
\end{lemma}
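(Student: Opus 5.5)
We argue by contradiction. Lemma~\ref{lem:ElectricalCommuteIdentity}, combined with Cauchy--Schwarz, gives a lower bound of $2d^2$ on the expected commute time $\mathbb E[\sigma_d]$. We then show that if the commute were completed within time $T=\frac12 d^2$ with probability larger than $7/8$, a restart argument would force $\mathbb E[\sigma_d]$ to be much smaller than $2d^2$.

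\textbf{Step 1 (degenerate cases).} If $a_i=0$ for some $1\leq i<d$, the reflected chain started at $0$ never reaches $d$, so $\sigma_d=\infty$ almost surely and the claim is trivial. The same holds if the chain started at $d$ cannot reach $0$; this cannot happen, because every downward probability on $\{1,\dots,d\}$ is either $1-a_i$ or the forced move at $d$. So we may assume $p_i>0$ for $0\leq i<d$. Any remaining zero downward probabilities $q_i$ (when $a_i=1$) make $\sigma_d=\infty$ almost surely, and we discard these cases as well. The reflected chain $(\widetilde R_t)$ is then an irreducible birth-death chain on $\{0,\dots,d\}$ with $p_0=1$ and $q_d=1$. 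It is reversible, with conductances $c_i=c(i-1,i)>0$ for $i=1,\dots,d$.

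\textbf{Step 2 (expected commute time).} By Lemma~\ref{lem:ElectricalCommuteIdentity},
\[
\mathbb E[\sigma_d]=\mathbb E[T_{d\to0}]+\mathbb E[T_{0\to d}]
=2c_G\,R(0\leftrightarrow d)
=2\Big(\sum_{i=1}^d c_i\Big)\Big(\sum_{i=1}^d \tfrac1{c_i}\Big).
\]
By Cauchy--Schwarz the product of the two sums is at least $d^2$, so $\mathbb E[\sigma_d]\geq 2d^2$. This holds for every admissible strategy, and it is the only place where the strategy enters quantitatively.

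\textbf{Step 3 (restart argument).} Suppose, for contradiction, that $p:=\mathbb P[\sigma_d\leq T]>7/8$. We view the commute as a two-phase process, where the state records both the position and whether $0$ has already been visited.

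We first need a monotonicity fact. For a birth-death chain with nearest-neighbour steps, the walk started at $d$ must pass through every $x<d$ before reaching $0$. By the strong Markov property, $T_{x\to 0}$ is therefore stochastically dominated by $T_{d\to 0}$. Symmetrically, $T_{x\to d}$ is dominated by $T_{0\to d}$ for every $x$.

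Consequently, from any position and any phase, the remaining time to complete the commute is stochastically dominated by a fresh copy of $\sigma_d$ started at $d$. This is the point that needs the most care. In the first phase, the domination comes from coupling the walk started at $d$ so that it passes $x$. In the second phase, it comes from the fact that the time to hit $d$ from $x$ is dominated by the time to hit $d$ from $0$, which is itself dominated by the full commute $\sigma_d$.

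We now partition time into blocks of length $T$. By the domination, in each block the commute is completed with conditional probability at least $p$, whatever happened in earlier blocks. Hence $\sigma_d\leq T\cdot G$, where $G$ is geometric with success probability $p$. This gives
\[
\mathbb E[\sigma_d]\leq \frac{T}{p}<\frac{8}{7}\cdot\frac{d^2}{2}=\frac{4}{7}d^2<2d^2,
\]
which contradicts Step 2. Therefore $\mathbb P[\sigma_d\leq \frac12 d^2]\leq 7/8$.

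The main obstacle is the uniform-restart domination in Step 3, in particular justifying it cleanly across the phase boundary, that is, after $0$ has been hit but before $d$ is reached. The expectation bound in Step 2 is routine. We also expect the argument to give a bound much better than $7/8$, since it actually yields $p\leq 1/4$. We will keep the stated constant so that it matches the subsequent use of the lemma.
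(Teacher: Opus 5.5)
Your proof is correct, and it takes a genuinely different route from the paper's.

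\textbf{How the two routes differ.} The paper uses the first moment only in the weak form $\E[\sigma_d]\ge d^2$. It then proves a second-moment bound $\E[\sigma_d^2]\le 2(\E[\sigma_d])^2$ from the decomposition $T_{d\to0}=D_d+\cdots+D_1$ and the recursions for $m_i,v_i$ (Lemmas~\ref{lem:recursionOnMiVi} and~\ref{lem:ViUpperBoundLocal}). Paley--Zygmund with $\theta=\frac12$ then gives $7/8$.

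You replace all of that variance bookkeeping with a new-better-than-used property of $\sigma_d$. Because the chain is skip-free, the residual commute time from any (position, phase) state is stochastically dominated by a fresh copy of $\sigma_d$. Your two-phase justification of this is right: $T_{x\to0}\preceq T_{d\to0}$ in phase one, and $T_{x\to d}\preceq T_{0\to d}\preceq\sigma_d$ in phase two. The Markov property of the augmented chain at block boundaries then gives $\mathbb P[\sigma_d>kL]\le(1-p)^k$, and hence $\mathbb P[\sigma_d\le t]\le t/\E[\sigma_d]$.

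\textbf{What each route buys.} Your argument is shorter and needs only the commute-time identity. It is also quantitatively stronger. With the correct bound $\E[\sigma_d]\ge 2d^2$ (the paper's own computation gives this before it drops the factor $2$), you get $\mathbb P[\sigma_d\le\frac12d^2]\le\frac14$ instead of $\frac78$. Propagated through Lemmas~\ref{lem:OneSidedHittingObstruction} and~\ref{lem:RangeObstruction}, this would improve the constant in Theorem~\ref{thm:QuadraticLowerBound} from $1/512$ to $9/128$. The paper's route, in exchange, yields a reusable second-moment estimate for $\sigma_d$, which carries more information than a lower-tail bound.

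\textbf{Small fixes.}
\begin{itemize}
\item In Step~1, you assert that the chain started at $d$ always reaches $0$. This is false when some $a_i=1$. Your next sentence handles exactly that case, so simply delete the assertion.
\item Since $T=\frac12d^2$ need not be an integer, use blocks of length $L=\lfloor T\rfloor$. Because $\sigma_d$ is integer-valued, $\mathbb P[\sigma_d\le L]=p$, and the tail-sum formula gives $\E[\sigma_d]\le L/p\le T/p$. Note that $p>0$ forces $L\ge 2d\ge2$, since $\sigma_d\ge 2d$.
\item The contradiction framing is unnecessary. The argument directly proves $p\le T/\E[\sigma_d]$.
\end{itemize}
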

We prove Lemma~\ref{lem:FastCommuteObstruction} later in Section~\ref{sec:ProofLemmaFastCommuteObstruction}. First, we show how this fast commute obstruction implies the desired quadratic lower bound.

\begin{lemma}
\label{lem:OneSidedHittingObstruction}
$
\mathbb P\left[\tau_d^+\leq \tfrac12d^2\right]\leq \frac{15}{16}.
$
\end{lemma}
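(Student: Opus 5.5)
We will derive Lemma~\ref{lem:OneSidedHittingObstruction} from Lemma~\ref{lem:FastCommuteObstruction} by decomposing the event $\{\tau_d^+\leq T\}$, $T=\tfrac12 d^2$, according to the excursions of $R_t=|X_t|$ that reach level $d$. Let $\rho_1$ be the first time $R_t=d$. If $\rho_1>T$ then $\tau_d^+>T$. On $\{\rho_1\leq T\}$, the excursion of $R$ containing $\rho_1$ reaches $+d$ or $-d$. By the symmetry of the move from the origin ($a_0=\tfrac12$) and the fact that $b_{-i}=1-a_i$, the law of $R$ does not depend on the excursion signs. These signs are i.i.d.\ fair and independent of $R$. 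Hence, conditional on the path of $R$, this first excursion is positive with probability $\tfrac12$.

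If the first excursion reaching $d$ is negative, then $X$ can only hit $+d$ by time $T$ if $R$ returns to $0$ after $\rho_1$ and then reaches $d$ again, all before time $T$. So we bound
\[
\mathbb P[\tau_d^+\leq T]\leq \tfrac12\,\mathbb P[\rho_1\leq T]+\tfrac12\,\mathbb P[\rho_1\leq T,\ \text{$R$ completes } d\to0\to d \text{ within time } T-\rho_1].
\]
Here the second term uses the independence of the sign from $R$.

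The key step is comparing the commute of $R$ after $\rho_1$ with $\sigma_d$. Up to the return to $d$, the chain $R$ may exceed $d$, while the reflected chain $\widetilde R$ is forced back from $d$ to $d-1$. We couple $R$ and $\widetilde R$ by the strong Markov property at $\rho_1$, so that they move identically while strictly below $d$. Each visit of $R$ above $d$ is then a detour that only delays $R$ relative to $\widetilde R$, whose detour there is a single step. Under this coupling, the time for $R$ to go $d\to0\to d$ is at least $\sigma_d$; a cleaner formulation is that the hitting time of $0$ from $d$ stochastically dominates the reflected one. We expect this domination, and in particular checking that excursions above $d$ never help, to be the main technical point. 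One possible concern is that $R$ starting from $d$ could exceed $d$ before descending, but the coupling handles this since such moves only add time.

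Given this comparison, the second term is at most $\mathbb P[\sigma_d\leq T]\leq \tfrac78$ by Lemma~\ref{lem:FastCommuteObstruction}. It follows that
\[
\mathbb P[\tau_d^+\leq T]\leq \tfrac12+\tfrac12\cdot\tfrac78=\tfrac{15}{16}.
\]
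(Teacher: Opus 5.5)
Your proof is correct and follows essentially the paper's argument. You condition on the absolute-value path, use the fair, independent excursion sign to get the $\frac12$ term, and bound the rest by the probability of a fast $d\to0\to d$ commute, compared to the reflected chain $\widetilde R$ via the excursion-deletion coupling and Lemma~\ref{lem:FastCommuteObstruction}. The only cosmetic difference is that you condition on the sign of the first excursion reaching $d$, whereas the paper counts all such excursions $H_d(T)$; both give the same bound $\frac12+\frac12\mathbb P[\sigma_d\le T]$.
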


\begin{proof}
Let $H_d(T)$ be the number of excursions of $R_t$ away from $0$ that reach level $d$
by time $T$. Conditional on a realization of the absolute trajectory
$(R_s)_{0\leq s\leq T}$, these excursions are fixed. If this realization contains exactly
$h$ such excursions, then $\{\tau_d^+\leq T\}$ occurs exactly when at least one of
these $h$ excursions has positive sign. Since departures from $0$ are symmetric and
independent, this conditional probability is
$
1-2^{-h}.
$
Therefore, whenever the realized trajectory $(R_s)_{0\leq s\leq T}$ contains exactly
$h$ excursions that reach level $d$, we have
$$
\mathbb E\left[
{\bf 1}_{\{\tau_d^+\leq T\}}
\mid
R_0,\ldots,R_T
\right]
=
1-2^{-h}.
$$
Therefore
\begin{align*}
\mathbb P\left[\tau_d^+\leq T\right]
=&
\sum_{h\geq 0}
\left(1-2^{-h}\right)\mathbb P[H_d(T)=h] \\
=& 
\frac12\mathbb P\left[H_d(T)=1\right]
+
\sum_{h\geq 2}
\left(1-2^{-h}\right)\mathbb P[H_d(T)=h] \\
\leq &
\frac12\mathbb P\left[H_d(T)=1\right]
+
\sum_{h\geq 2}
\mathbb P\left[H_d(T)=h\right] \\
= &
\frac12\mathbb P\left[H_d(T)=1\right]
+
\mathbb P\left[H_d(T)\geq 2\right] \\
\leq  &
\frac12\left( 1 - \mathbb P\left[H_d(T)\geq 2\right] \right)
+
\mathbb P\left[H_d(T)\geq 2\right] \\
= &
\frac12 + \frac12 \mathbb P\left[H_d(T)\geq 2\right].
\end{align*}

It remains to bound the second term. If $H_d(T)\geq 2$, then after the first visit of
the absolute chain to level $d$, the chain must hit $0$ and then hit $d$ again before
time $T$. Compare this motion with the reflected chain on $\{0,1,\ldots,d\}$, with the
same transition probabilities on $\{1,\ldots,d-1\}$, and with every attempted move
from $d$ to $d+1$ replaced by a move from $d$ to $d-1$. 
Starting from $d$, couple
the reflected chain and the original chain so that they make identical moves until
the original chain attempts to move from $d$ to $d+1$. At that moment, the reflected
chain moves from $d$ to $d-1$ instead. Thereafter, every excursion of the original
chain above $d$ is replaced in the reflected chain by the shorter path obtained by
immediately moving from $d$ to $d-1$. Consequently, the reflected chain reaches $0$
no later than the original chain. 
After returning to $0$, the evolution up to the next visit to $d$ is identical in
both chains.
Hence the reflected commute time
$\sigma_d$ is stochastically no larger than the corresponding time required by the
original absolute chain, that is
$
\mathbb P\left[H_d(T)\geq 2\right]
\leq
\mathbb P\left[\sigma_d\leq T\right].
$
Taking
$
T=\frac12d^2
$
and applying Lemma~\ref{lem:FastCommuteObstruction}, we obtain
$
\mathbb P\left[H_d(T)\geq 2\right]\leq \frac78
$, and the claim follows.
\end{proof}

The preceding lemma controls the probability that one walk reaches distance $d$ in a
specified direction by time $\frac12d^2$.
We now use this to obtain a positive lower bound on the probability that two independent walks together fail to cover distance $2d$ by that time.

\begin{lemma}
\label{lem:RangeObstruction}
Let $M_{1,t}^+$ and $M_{2,t}^+$ be two independent copies of
$
M_t^+:=\max_{0\leq s\leq t}X_s.
$
Then, 
$$
\mathbb P\left[M_{1,\frac12d^2}^+ + M_{2,\frac12d^2}^+ < 2d\right]
\geq
\frac1{256}.
$$
\end{lemma}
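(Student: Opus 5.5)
The plan is to reduce the two-walk statement to the one-walk bound of Lemma~\ref{lem:OneSidedHittingObstruction} by independence, and then take a product. Set $T=\frac12 d^2$. Since $X_0=0$ and the walk moves by unit steps, the running maximum $M_T^+$ is an integer, and $M_T^+<d$ holds exactly when $X_s\neq d$ for every $s\leq T$, that is, exactly when $\tau_d^+>T$. By Lemma~\ref{lem:OneSidedHittingObstruction},
$$
\mathbb P\left[M_T^+<d\right]=\mathbb P\left[\tau_d^+>T\right]\geq 1-\frac{15}{16}=\frac1{16}.
$$

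Next, if both $M_{1,T}^+<d$ and $M_{2,T}^+<d$, then, since both are integers, $M_{1,T}^+ + M_{2,T}^+\leq 2d-2<2d$. So the target event contains the intersection $\{M_{1,T}^+<d\}\cap\{M_{2,T}^+<d\}$. The two copies are independent, so the probability of this intersection is the product, at least $\frac1{16}\cdot\frac1{16}=\frac1{256}$, which is exactly the constant claimed.

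The only point needing care is applying this lemma to the actual rendezvous. In the theorem the second agent enters through $M_{Y,T}^-=\max_{s\leq T}(-Y_s)$, not through a running maximum. By the reflection symmetry of the model ($a_0=\frac12$, with the signed sequence satisfying $b_{-i}=1-b_i$), the process $-Y_t$ has the same law as $X_t$. Moreover, the two agents' displacement processes are independent before meeting; formally, one runs the two walks independently and defines $\mathcal T_d$ from them. Hence $M_{Y,T}^-$ is an independent copy of $M_T^+$, so the lemma applies directly.

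I do not expect a genuine obstacle here, since the heavy lifting is in Lemma~\ref{lem:FastCommuteObstruction}. The one subtlety is whether $T=\frac12d^2$ is an integer, which only requires reading time indices as $s\leq\lfloor T\rfloor$ consistently with the previous lemmas.
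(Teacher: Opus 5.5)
Your proposal is correct and follows the paper's proof essentially step for step. You identify $\{M_T^+<d\}$ with $\{\tau_d^+>T\}$ via unit steps, apply Lemma~\ref{lem:OneSidedHittingObstruction} to get $1/16$ per walk, and multiply using independence. Your remark about $-Y_t$ having the law of $X_t$ is exactly how the paper then applies the lemma in the proof of Theorem~\ref{thm:QuadraticLowerBound}.
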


\begin{proof}
For $T=\frac12d^2$, and since $M_{1,T}^+$ and $M_{2,T}^+$ are independent, we have
$$
\mathbb P\left[M_{1,T}^+ + M_{2,T}^+ < 2d\right]
\geq
\mathbb P\left[M_{1,T}^+<d,\ M_{2,T}^+<d\right] 
=
\mathbb P\left[M_{1,T}^+<d\right]
\mathbb P\left[M_{2,T}^+<d\right].
$$
Since the walk moves in unit steps,
$
M_{i,T}^+\geq d
$
holds if and only if the walk hits level $d$ by time $T$, that is,
$
\{M_{i,T}^+\geq d\}=\{\tau_d^+\leq T\}
$. 
Therefore, for $i=1,2$, and using Lemma~\ref{lem:OneSidedHittingObstruction},
we have
$$
\mathbb P \left[M_{i,T}^+<d\right]
=
\mathbb P \left[\tau_d^+> T\right]
=
1- \mathbb P \left[\tau_d^+\leq T\right]
\geq 1/16.
$$
Combining the previous inequalities, we get
$\mathbb P\left[M_{1,T}^+ + M_{2,T}^+ < 2d\right] \geq (1/16)^2=1/256$, as wanted.
\end{proof}

We are now ready to prove Theorem~\ref{thm:QuadraticLowerBound}. 
\begin{proof}[Proof of Theorem~\ref{thm:QuadraticLowerBound}] 
Let
$
T=\frac12d^2
$,
and recall that
$
M_{X,T}^+=\max_{0\leq s\leq T}X_s,
$
and 
$
M_{Y,T}^-=\max_{0\leq s\leq T}(-Y_s)
$.
If rendezvous occurs by time $T$, then for some $s\leq T$,
we have 
$
X_s-Y_s=2d
$, and therefore
$$
2d=X_s-Y_s\leq M_{X,T}^+ + M_{Y,T}^-.
$$
It follows that 
$
\{\mathcal T_d > T\}
\supseteq
\{M_{X,T}^+ + M_{Y,T}^- < 2d\}.
$
The one-agent signed displacement process is invariant under reflection: if $(X_t)_{t\geq0}$ follows the process, then $(-X_t)_{t\geq0}$ has the same distribution. Indeed, the transition probability to move right from state $i>0$ equals $a_i$, while from state $-i$ it equals $1-a_i$. Consequently,
$
M_{Y,T}^-=\max_{0\leq s\leq T}(-Y_s)
$
has the same distribution as an independent copy of
$
M_{X,T}^+.
$
Applying Lemma~\ref{lem:RangeObstruction}, we get
$$
\E[\mathcal T_d]
\geq
T~\mathbb P [\mathcal T_d>T]
\geq
T~\mathbb P [M_{X,T}^+ + M_{Y,T}^- < 2d]
\geq
\frac{T}{256}.
$$
Since $T=\tfrac12 d^2$, we just showed that 
$R_d = \inf_{a \in \mathcal A} \E[\mathcal T_d] \geq \tfrac{1}{512}d^2$, as wanted. 
\end{proof}

\subsubsection{The Proof of Lemma~\ref{lem:FastCommuteObstruction}}
\label{sec:ProofLemmaFastCommuteObstruction}
In this section we prove Lemma~\ref{lem:FastCommuteObstruction}. 
If $a_i=0$ or $a_i=1$ for some $1\leq i\leq d-1$, 
then the reflected chain cannot complete the commute $d\to0\to d$, and that would imply $\sigma_d=\infty$, hence the conclusion would be trivial. Therefore, we may assume that $0<a_i<1$, for all $i=1,\ldots,d-1$. 

The proof of Lemma~\ref{lem:FastCommuteObstruction} further reduces to the following claim.
\begin{lemma}
\label{lem:BoundsEsigmaEsigma2}
The following bounds hold:
\begin{enumerate}[(a)]
\item
\label{item:EsigmaLowerBound}
$
\E[\sigma_d]
\geq
d^2.
$
\item
\label{item:Esigma2Bound}
$
\E[\sigma_d^2]
\leq
2\left(\E[\sigma_d]\right)^2.
$
\end{enumerate}
\end{lemma}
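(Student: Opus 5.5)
For part~(a) we compute $\E[\sigma_d]$ exactly through the commute-time identity of Lemma~\ref{lem:ElectricalCommuteIdentity}. We may assume $0<a_i<1$ for $1\le i\le d-1$.

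The reflected chain $(\widetilde R_t)$ is then a finite birth-death chain on $\{0,1,\ldots,d\}$. Its transitions are $p_0=1$, $q_d=1$, and $p_i=a_i$, $q_i=1-a_i$ for $1\le i\le d-1$. All the relevant $p_i$ and $q_{i+1}$ are positive, so the chain is reversible with $\pi$ defined recursively as in Section~\ref{sec:ElectricalRepresentationInducedChains}. Write $c_i:=c(i,i+1)=\pi(i)p_i>0$ for $0\le i\le d-1$. The total conductance is $c_G=\sum_{i=0}^{d-1}c_i$, and since the network is a path, $R(0\leftrightarrow d)=\sum_{i=0}^{d-1}1/c_i$.

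By the strong Markov property, $\sigma_d=A+B$, where $A$ is the hitting time of $0$ from $d$ and $B$ is the hitting time of $d$ from $0$; these two times are independent. Hence $\E[\sigma_d]=\E[T_{d\to0}]+\E[T_{0\to d}]=2c_G\,R(0\leftrightarrow d)$. By Cauchy--Schwarz, $\bigl(\sum_i c_i\bigr)\bigl(\sum_i 1/c_i\bigr)\ge d^2$, so $\E[\sigma_d]\ge 2d^2\ge d^2$. This gives (a), with room to spare.

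For part~(b) we use the decomposition $\sigma_d=A+B$ with $A,B$ independent:
$$\E[\sigma_d^2]=\E[A^2]+\E[B^2]+2\E[A]\E[B].$$
It therefore suffices to prove $\E[A^2]\le 2(\E A)^2$ and $\E[B^2]\le 2(\E B)^2$. Summing these and comparing with $2(\E A+\E B)^2$ then gives (b).

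For a hitting time $T$ of a set started from $x$, we use the standard identity
$$T^2=\sum_{t<T}\bigl(2(T-t)-1\bigr)\le 2\sum_{t\ge0}\mathbf 1_{\{T>t\}}(T-t).$$
Taking expectations and conditioning on $\mathcal F_t$, the Markov property gives $\E[T-t\mid\mathcal F_t]=\E_{Z_t}[T]$ on $\{T>t\}$. Hence
$$\E_x[T^2]\le 2\,\E_x[T]\cdot\max_y \E_y[T],$$
where the maximum runs over states reachable before $T$.

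The key observation, which I expect to need the most care, is that on a path this maximum is attained at the starting state. For $T=T_0$ started from $d$, every state $y\in\{1,\ldots,d\}$ satisfies $\E_y[T_0]\le \E_d[T_0]$. This holds because the walk from $d$ must pass through $y$ before reaching $0$, so $\E_d[T_0]=\E_d[T_y]+\E_y[T_0]\ge\E_y[T_0]$. Symmetrically, for $T=T_d$ started from $0$, every $y\in\{0,\ldots,d-1\}$ satisfies $\E_y[T_d]\le\E_0[T_d]$. Both expectations are finite because the chain is finite and irreducible on $\{0,\ldots,d\}$. This yields $\E[A^2]\le2(\E A)^2$ and $\E[B^2]\le 2(\E B)^2$, completing (b).
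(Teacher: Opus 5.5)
Your proof is correct. Part (a) is the paper's argument: the commute-time identity followed by Cauchy--Schwarz. You are also right that it actually gives $\E[\sigma_d]\ge 2d^2$; the paper simply discards the factor $2$.

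For part (b) you take a genuinely different route. The paper splits $T_{d\to 0}=D_d+\cdots+D_1$ into independent one-level descent times. It writes each $D_i$ as a geometric random sum of failed excursions and derives the recursions $m_i=B_i+A_i m_{i+1}$ and $v_i=A_i v_{i+1}+A_iB_i(1+m_{i+1})^2$. It then proves $v_i\le m_i^2+2m_iS_{i+1}$ by backward induction, and this sums to $\operatorname{Var}(T_{d\to0})\le(\E[T_{d\to0}])^2$.

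You instead use the general Markov-chain bound $\E_x[T^2]\le 2\,\E_x[T]\max_y \E_y[T]$, obtained from $T^2\le 2\sum_{t<T}(T-t)$ and the Markov property at fixed times. You then observe that on a nearest-neighbour path the starting state maximizes the expected hitting time over all states visited before absorption, via $\E_d[T_0]=\E_d[T_y]+\E_y[T_0]$.

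Both routes reach exactly the same per-leg inequality $\E[T^2]\le 2(\E[T])^2$. Both then combine the two legs through independence: the paper via additivity of variances, you via expanding $\E[(A+B)^2]$, which is equivalent.

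Your route is shorter and needs no explicit mean or variance computations. It applies to any chain in which the start state maximizes the expected hitting time over the pre-absorption range, and iterating it bounds all higher moments of $T$. The paper's route is more hands-on and specific to birth--death chains, but it yields exact recursions for the level-crossing means and variances and gives level-by-level variance control.
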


We prove Lemma~\ref{lem:BoundsEsigmaEsigma2} below, but first we demonstrate how that implies Lemma~\ref{lem:FastCommuteObstruction}. 

\begin{proof}[Proof of Lemma~\ref{lem:FastCommuteObstruction}]
Our goal is to show that $
\mathbb P\left[\sigma_d\leq \frac12 d^2\right]\leq \frac78.
$
Using the bound
$
\E[\sigma_d]\geq d^2
$
from Lemma~\ref{lem:BoundsEsigmaEsigma2}, we have
$$
\mathbb P\left[\sigma_d\leq \tfrac12 d^2\right]
\leq
\mathbb P\left[\sigma_d\leq \tfrac12 \E[ \sigma_d] \right]
= 
1-
\mathbb P\left[\sigma_d> \tfrac12 \E[ \sigma_d] \right]
$$

To further upper bound the last expression, we apply the Paley-Zygmund inequality to the nonnegative random variable $\sigma_d$, and use also $\E[\sigma_d^2]\leq 2(\E[\sigma_d])^2$ from Lemma~\ref{lem:BoundsEsigmaEsigma2}, to get
$$
\mathbb P\left[\sigma_d>\theta\E[\sigma_d]\right]
\geq
\frac{(1-\theta)^2(\E[\sigma_d])^2}{\E[\sigma_d^2]}
\geq
\frac{(1-\theta)^2}{2}.
$$
The desired bound is now obtained by choosing $\theta=1/2$.
\end{proof}

Therefore, it remains to prove Lemma~\ref{lem:BoundsEsigmaEsigma2}.
Part~\ref{item:EsigmaLowerBound} is an easy application of Lemma~\ref{lem:ElectricalCommuteIdentity}. 

\begin{proof}[Proof of Lemma~\ref{lem:BoundsEsigmaEsigma2}~\ref{item:EsigmaLowerBound}.]

Recall that $\sigma_d$ is the commute time from $d$ to $0$ and back to $d$ in the truncated absolute-value chain $(\widetilde R_t)$. Hence
$
\E[\sigma_d]
=
\E[T_{d\to0}]
+
\E[T_{0\to d}].
$

Consider the chain $(\widetilde R_t)_{t\geq0}$ on $\{0,1,\ldots,d\}$ obtained from $(R_t)$ by replacing the transition from $d$ to $d+1$ with a deterministic move from $d$ to $d-1$, as defined before Lemma~\ref{lem:FastCommuteObstruction}. Its transition probabilities are
$
P(0,1)=1
$
and
$
P(d,d-1)=1,
$
while, for $1\leq i\leq d-1$,
$
P(i,i+1)=a_i >0
$
and
$
P(i,i-1)=1-a_i >0
$
Therefore, the birth-death chain is reversible. 
Let $c_i$ denote the induced conductance of the edge $(i,i+1)$, so that the total conductance is
$
c_G
=
\sum_{i=0}^{d-1}c_i,
$
and the 
effective resistance
$
R(0\leftrightarrow d)
=
\sum_{i=0}^{d-1}
\frac{1}{c_i}.
$
Lemma~\ref{lem:ElectricalCommuteIdentity} then gives 
$$
\E[\sigma_d]
=
\E[T_{d\to0}]
+
\E[T_{0\to d}]
=
2
c_G
R(0\leftrightarrow d)
\geq 
\left(\sum_{i=0}^{d-1}c_i\right)
\left(\sum_{i=0}^{d-1}\frac1{c_i}\right)
\geq
\left(\sum_{i=0}^{d-1}\sqrt{c_i}\frac1{\sqrt{c_i}}\right)^2
=
d^2,
$$
where the inequality follows from the Cauchy--Schwarz inequality.
\end{proof}

The remaining of the section builds up towards the proof of Lemma~\ref{lem:BoundsEsigmaEsigma2}~\ref{item:Esigma2Bound}. 
For $1\leq i\leq d$, let $D_i$ denote the hitting time of $i-1$ for the reflected chain $(\widetilde R_t)$ started from $i$.
Starting from $d$, the chain must successively move from $d$ to $d-1$, then from $d-1$ to $d-2$, and so on until it reaches $0$. 
Therefore, by the strong Markov property,
$$
T_{d\to0}=D_d+D_{d-1}+\cdots+D_1,
$$
where the random variables $D_1,\ldots,D_d$ are independent.
Define
$$
m_i=\E[D_i],
~~~\text{and}~~
v_i=\operatorname{Var}(D_i).
$$
Since the chain moves deterministically from $d$ to $d-1$, we have
$D_d=1, m_d=1$, and $v_d=0$.

Now fix $1\leq i\leq d-1$. Starting from $i$, the chain either moves directly to $i-1$, which completes $D_i$ in one step, or it first moves to $i+1$. In the latter case, it must return from $i+1$ to $i$ before it can try again to move from $i$ to $i-1$. Thus $D_i$ is obtained by repeating independent attempts until the first successful move from $i$ to $i-1$.
Let 
$
A_i=\frac{a_i}{1-a_i}, 
B_i=\frac{1}{1-a_i}
$.
We have the following recursive statements. 
\begin{lemma}
\label{lem:recursionOnMiVi}
For $1\leq i\leq d-1$, the following recursive identities hold:
\\
(a) $
m_i
=
1+A_i\left(1+m_{i+1}\right)
=
B_i+A_i m_{i+1}.
$ \\
(b) 
$
v_i
=
A_i v_{i+1}
+
A_iB_i\left(1+m_{i+1}\right)^2$.
\end{lemma}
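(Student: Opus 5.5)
We use a first-step (renewal) decomposition of $D_i$. Fix $1\leq i\leq d-1$. Starting from $i$, the reflected chain moves to $i-1$ with probability $1-a_i$, completing $D_i$ in one step. Otherwise, with probability $a_i$, it moves to $i+1$. It then needs a time distributed as $D_{i+1}$ to return to $i$, after which, by the strong Markov property, the remaining time to reach $i-1$ is a fresh copy of $D_i$. This gives the distributional identity
$$
D_i \stackrel{d}{=} 1+\xi\left(D_{i+1}+D_i'\right),
$$
where $\xi\sim\mathrm{Bernoulli}(a_i)$, and $\xi$, $D_{i+1}$, $D_i'$ are independent with $D_i'\stackrel{d}{=}D_i$. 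Equivalently, $D_i=\sum_{k=1}^{G}W_k$, where $G$ is geometric with success probability $1-a_i$ counting attempts (so $\E[G]=B_i$), and the failed attempts have length $1+D_{i+1}$ while the final successful one has length $1$.

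For (a), take expectations in the identity: $m_i=1+a_i(m_{i+1}+m_i)$, so $m_i(1-a_i)=1+a_im_{i+1}$. Dividing by $1-a_i$ gives $m_i=B_i+A_im_{i+1}$. This equals $1+A_i(1+m_{i+1})$ because $B_i=1+A_i$. To justify dividing, we need $m_i<\infty$. This follows by downward induction from $m_d=1$: the chain on $\{i-1,\dots,d\}$ is finite and irreducible, with $0<a_j<1$ for all $j$, so the expected hitting time is finite. The same argument gives finite second moments.

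For (b), the cleanest route is the compound (random-sum) representation. Write $D_i=G+\sum_{k=1}^{G-1}D_{i+1}^{(k)}$, where $G$ is the number of attempts and $N=G-1$ is the number of failures. Since $N$ is geometric on $\{0,1,\dots\}$,
$$
\E[N]=\frac{a_i}{1-a_i}=A_i,\qquad \operatorname{Var}(N)=\frac{a_i}{(1-a_i)^2}=A_iB_i.
$$
Then $D_i=1+\sum_{k=1}^{N}\left(1+D_{i+1}^{(k)}\right)$ is a random sum of i.i.d.\ terms with mean $1+m_{i+1}$ and variance $v_{i+1}$, independent of $N$. The law of total variance (Wald's second identity) gives
$$
v_i=\E[N]\,v_{i+1}+\operatorname{Var}(N)\left(1+m_{i+1}\right)^2=A_iv_{i+1}+A_iB_i\left(1+m_{i+1}\right)^2,
$$
which is exactly (b).

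There is no real obstacle in this argument. The only points needing care are checking the independence structure via the strong Markov property at successive returns to $i$, and establishing finiteness of moments before solving the linear identities for $m_i$ and $v_i$.
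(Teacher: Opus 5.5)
Your proof is correct and follows the paper's argument. Part (b) is exactly the paper's random-sum representation $D_i=1+\sum_{\ell=1}^{G_i}(1+D_{i+1}^{(\ell)})$, with a geometric number of failures of mean $A_i$ and variance $A_iB_i$, combined with the variance formula for a random sum. The only difference is in (a): you solve the first-step equation $m_i=1+a_i(m_{i+1}+m_i)$, which is why you need $m_i<\infty$ in advance, whereas the paper takes expectations in the same random-sum identity and obtains $m_i=1+A_i(1+m_{i+1})$ directly, with no prior finiteness check.
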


\begin{proof}
Before treating each of (a), (b), individually, we need some preparatory observations. 
Let $G_i$ be the number of failed attempts before the chain first succeeds in moving from $i$ to $i-1$. 
Since $G_i$ counts the number of failed attempts before the first successful move from $i$ to $i-1$, it is a geometric random variable with parameter $1-a_i$. Thus
$
\mathbb P\left[G_i=k\right]=a_i^k(1-a_i),
$
for $k=0,1,2,\ldots$. Therefore
$
\E[G_i]
=
\frac{a_i}{1-a_i}
=
A_i,
$
and
$
\operatorname{Var}(G_i)
=
\frac{a_i}{(1-a_i)^2}
=
A_iB_i.
$

For each failed attempt $\ell$, let $D_{i+1}^{(\ell)}$ be the time needed for the chain to return from $i+1$ to $i$ after that failed attempt. The random variables
$
D_{i+1}^{(1)},D_{i+1}^{(2)},\ldots
$
are independent copies of $D_{i+1}$, and they are independent of $G_i$.
We claim that
\begin{equation}
\label{equa:RecOnDi}
D_i
=
1+\sum_{\ell=1}^{G_i}\left(1+D_{i+1}^{(\ell)}\right).
\end{equation}
Indeed, the final $1$ is the successful step from $i$ to $i-1$, while each failed attempt contributes one step from $i$ to $i+1$ and then a return time from $i+1$ to $i$, hence the claim follows. 

We are now ready to prove part (a) of the claim. Taking expectations in~\eqref{equa:RecOnDi} we get 
$
m_i
=
1+\E[G_i]\left(1+m_{i+1}\right)
$.
Since $\E[G_i]=A_i$, and $B_i=A_i+1$, claim (a) follows. 

For claim (b), we take the variance on both hand-sides of~\eqref{equa:RecOnDi}. 
Since the random variables
$1+D_{i+1}^{(1)},
1+D_{i+1}^{(2)},
\ldots
$
are independent and identically distributed, and are independent of $G_i$, the variance formula for a random sum gives
\begin{align*}
v_i 
&=
\operatorname{Var}\left(1+\sum_{\ell=1}^{G_i}\left(1+D_{i+1}^{(\ell)}\right)\right)\\
&=
\operatorname{Var}\left(\sum_{\ell=1}^{G_i}\left(1+D_{i+1}^{(\ell)}\right)\right)\\
&=
\E[G_i]\operatorname{Var}\left(1+D_{i+1}\right)
+
\operatorname{Var}(G_i)\left(\E\left[1+D_{i+1}\right]\right)^2\\
&=
\E[G_i]\operatorname{Var}\left(D_{i+1}\right)
+
\operatorname{Var}(G_i)\left(1+\E[D_{i+1}]\right)^2\\
&=
\E[G_i]v_{i+1}
+
\operatorname{Var}(G_i)\left(1+m_{i+1}\right)^2\\
&=
A_i v_{i+1}
+
A_iB_i\left(1+m_{i+1}\right)^2. 
\end{align*}

\end{proof}

We now use the above recurrences to establish an upper bound on the variance of the time needed to go from $i$ to $i-1$. For this, we let $S_i=\sum_{j=i}^d m_j$.
\begin{lemma}
\label{lem:ViUpperBoundLocal}
For $i=1,\ldots, d$, we have 
$
v_i\leq m_i^2+2m_iS_{i+1}
$, where $S_{d+1}=0$. 
\end{lemma}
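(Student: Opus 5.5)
We argue by backward induction on $i$, from $i=d$ down to $i=1$. For the base case $i=d$ we have $v_d=0$, $m_d=1$ and $S_{d+1}=0$, so the required inequality reads $0\leq 1$.

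For the inductive step, fix $1\leq i\leq d-1$ and assume $v_{i+1}\leq m_{i+1}^2+2m_{i+1}S_{i+2}$. We start from the recursion of Lemma~\ref{lem:recursionOnMiVi}(b),
$$
v_i=A_iv_{i+1}+A_iB_i(1+m_{i+1})^2,
$$
and insert the inductive hypothesis to obtain
$$
v_i\leq A_i m_{i+1}^2+2A_im_{i+1}S_{i+2}+A_iB_i(1+m_{i+1})^2 .
$$
We then compare this with the target $m_i^2+2m_iS_{i+1}$, expressing everything through Lemma~\ref{lem:recursionOnMiVi}(a), $m_i=1+A_i(1+m_{i+1})$. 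Write $u=1+m_{i+1}$, so that $m_i=1+A_iu$. Two facts are used:
\begin{itemize}
\item $A_i\leq B_i$;
\item $S_{i+1}=m_{i+1}+S_{i+2}\geq m_{i+1}$, and $m_i\geq A_i$ since $m_{i+1}\geq 1$.
\end{itemize}

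The term $2A_im_{i+1}S_{i+2}$ is dominated by $2m_iS_{i+2}$, because $m_i\geq A_i u\geq A_i m_{i+1}$. For the remaining terms, expand the target using $S_{i+1}=m_{i+1}+S_{i+2}$:
$$
m_i^2+2m_iS_{i+1}=m_i^2+2m_im_{i+1}+2m_iS_{i+2}.
$$
After removing $2m_iS_{i+2}$ from both sides, it suffices to show
$$
A_im_{i+1}^2+A_iB_iu^2\leq m_i^2+2m_im_{i+1}.
$$
We substitute $B_i=1+A_i$ and $m_i=1+A_iu$. The left side becomes $A_i(u-1)^2+A_iu^2+A_i^2u^2$. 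The right side becomes $(1+A_iu)^2+2(1+A_iu)(u-1)$, which expands to $1+2A_iu+A_i^2u^2+2(u-1)+2A_iu(u-1)$.

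The $A_i^2u^2$ terms cancel. The inequality then reduces to
$$
A_i(2u^2-2u+1)\leq 2u-1+2A_iu^2 ,
$$
that is, $A_i(1-2u)\leq 2u-1$. This holds because $u\geq 2$ makes the left side negative and the right side positive, so the inductive step closes.

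The main obstacle is this final algebraic comparison, in particular making sure the cross term $2m_im_{i+1}$ is enough to absorb the $A_iB_iu^2$ contribution. The computation above indicates that it is, with room to spare. If a term did not match, the fallback would be to use the cruder bound $S_{i+1}\geq m_{i+1}$ more generously, still inside the same induction.
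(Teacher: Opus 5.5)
Your proof is correct and takes essentially the paper's route: backward induction, inserting the hypothesis into the recursion of Lemma~\ref{lem:recursionOnMiVi}(b), then comparing with the target via $m_i=B_i+A_im_{i+1}$ and $B_i=A_i+1$. The only difference is bookkeeping: the paper computes the whole slack $B_i(1+2m_{i+1}+2S_{i+2})$ in one line, whereas you first split off the $2B_iS_{i+2}$ part and then verify that the remaining slack $(1+A_i)(2u-1)=B_i(1+2m_{i+1})$ is nonnegative.
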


\begin{proof}
We prove the claim by backward induction on $i$. 
For $i=d$, this says $0\leq 1$, which is true.

For the inductive step, assume the inequality holds for $i+1$. 
For notational simplicity, set 
$
A=A_i,
B=B_i,
m=m_{i+1}
$,
and 
$
R=S_{i+2}
$. 
Then we have 
$
S_{i+1}=m+R,
$
and the induction hypothesis gives
$
v_{i+1}\leq m^2+2mR
$.
Since $A\geq 0$, multiplying both hand sides of the induction hypothesis
by $A$ , and then adding the same non-negative term
$
AB(1+m)^2
$
gives
$$
A v_{i+1}
+
AB(1+m)^2
\leq
A\left(m^2+2mR\right)
+
AB(1+m)^2.
$$
By Lemma~\ref{lem:recursionOnMiVi}(b), the left-hand side of the last inequality equals $v_i$. 
Hence, it remains to further upper bound the right hand side of the last inequality by $m_i^2+2m_iS_{i+1}$. To that end, using Lemma~\ref{lem:recursionOnMiVi}(a), we have 

\begin{align*}
&m_i^2+2m_iS_{i+1}
-
\left(
A\left(m^2+2mR\right)
+
AB(1+m)^2
\right)\\
&=
(B+Am)^2+2(B+Am)(m+R)
-
\left(
A\left(m^2+2mR\right)
+
AB(1+m)^2
\right)\\
&=
B^2+2Bm+2BR-AB\\
&=
B(B-A+2m+2R)\\
&=
B(1+2m+2R),
\end{align*}
which is non-negative. Here we used $B=A+1$ both to cancel the $m^2$ terms and, in the last equality, to obtain $B-A=1$.
\end{proof}

We are ready to conclude the remaining technical claim.  
\begin{proof}[Proof of Lemma~\ref{lem:BoundsEsigmaEsigma2}~\ref{item:Esigma2Bound}]
Summing the inequalities of Lemma~\ref{lem:ViUpperBoundLocal}, from $i=1$ to $d$, we obtain
$$
\operatorname{Var}(T_{d\to 0})
=
\sum_{i=1}^d v_i
\leq
\sum_{i=1}^d \left(m_i^2+2m_iS_{i+1}\right)
=
\left(\sum_{i=1}^d m_i\right)^2
=
\left(\E [T_{d\to 0}]\right)^2.
$$
The same argument applied to $T_{0\to d}$ gives
$
\operatorname{Var}(T_{0\to d})
\leq
\left(\E [T_{0\to d}]\right)^2
$.
Since now $T_{d\to 0}$ and $T_{0\to d}$ are independent,
$$
\operatorname{Var}(\sigma_d)
=
\operatorname{Var}(T_{d\to 0})
+
\operatorname{Var}(T_{0\to d})
\leq
\left(\E [T_{d\to 0}]\right)^2+
\left(\E [T_{0\to d}]\right)^2
\leq 
\left(\E [T_{d\to 0}]+ \E [T_{0\to d}] \right)^2
=
\left(\E[\sigma_d]\right)^2.
$$
But then, 
$$
\E[\sigma_d^2]
=
\operatorname{Var}(\sigma_d)
+
\left(\E[\sigma_d]\right)^2
\leq 
2\left(\E[\sigma_d]\right)^2
$$
as wanted.

\end{proof}

\subsection{Nearly Quadratic Upper Bound for Oblivious Strategies (Unknown-Distance Setting)}
\label{sec:QuadraticUpperBound}

We next prove that oblivious self-distance strategies can achieve nearly quadratic expected rendezvous time. The strategy used in the proof is independent of $d$, and therefore it is applicable to the unknown-distance setting. 

\begin{theorem}
\label{thm:NearQuadraticUpperBound}
For every $\eta>0$,
there is an admissible strategy $a\in \mathcal A$ such that for all $d\geq 1$,
$
\E[\mathcal T_d(a)]=O\left(d^{2+\eta}\right).
$
\end{theorem}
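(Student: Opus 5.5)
The plan is to build one explicit $d$-independent strategy whose single-agent walk is positive recurrent with a stationary measure that decays polynomially. Then I bound the rendezvous time using the infinite-network bound of Lemma~\ref{lem:InfiniteResistanceHittingBound} on the two-agent chain of Lemma~\ref{lem:MarkovianReformulation}.

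\textbf{Step 1: the strategy.} Fix $\eta>0$ and write $\beta=1+\eta$. Choose $a_i=\frac12$ for $i<i_0$ and $a_i=\frac12-\frac{\beta}{4i}$ for $i\ge i_0$, with $i_0=i_0(\eta)$ large enough that $a_i\in[\frac14,\frac12]$. The signed single-agent walk on $\integers$ is a birth--death chain with all transition probabilities in $[\frac14,\frac34]$, so it is reversible. Its reversible measure satisfies
$$\pi(i+1)/\pi(i)=b_i/(1-b_{i+1}),$$
which for $i\ge i_0$ equals $1-\beta/i+O(i^{-2})$. Hence $\pi(i)\asymp |i|^{-\beta}$, symmetric in $\pm i$, and $\sum_i\pi(i)<\infty$. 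If needed I will tweak the $O(i^{-2})$ terms so that the ratio is exactly $(i/(i+1))^{\beta}$.

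\textbf{Step 2: the two-agent chain.} The two agents move independently, so the chain of Lemma~\ref{lem:MarkovianReformulation} (before absorption) is the product of two independent reversible walks. It is therefore reversible with respect to $\Pi(x,y)=\pi(x)\pi(y)$. The conductance of the edge $(x,y)\to(x\pm1,y\pm1)$ is $\Pi(x,y)$ times the transition probability, so it is $\asymp \pi(x)\pi(y)$. The total conductance satisfies
$$c_G\le \sum_{x,y}\pi(x)\pi(y)=\Big(\sum_i\pi(i)\Big)^2=O(1),$$
with the constant depending only on $\eta$. Making $A=\mathcal A_d$ absorbing only deletes edges, so it does not increase $c_G$. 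Lemma~\ref{lem:InfiniteResistanceHittingBound} then gives $\E[\mathcal T_d]\le 2c_G\,R(S_{0,0}\leftrightarrow A)$.

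\textbf{Step 3: resistance via one explicit path.} By Rayleigh monotonicity, deleting all edges except a single path from $(0,0)$ to $A$ can only increase resistance. This holds in every finite merged network from the definition of $R(a\leftrightarrow A)$, so the supremum is bounded by the resistance of that path. I use the zigzag path
$$(0,0)\to(1,1)\to(2,0)\to(3,1)\to\cdots\to(2d,0)\in A.$$
Along it the second agent oscillates between $0$ and $1$, while the first walks out to $2d$. The $k$-th edge has conductance $\gtrsim \pi(k)\pi(1)\gtrsim k^{-\beta}$. Hence
$$R\le \sum_{k=0}^{2d} O(k^{\beta})=O(d^{1+\beta})=O(d^{2+\eta}),$$
which combined with Step 2 proves the theorem. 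I also need to check the technical hypothesis of Lemma~\ref{lem:InfiniteResistanceHittingBound}, namely the existence of exhausting finite sets $B_m$ whose merged networks contain a path to the merged vertex. This is immediate because the zigzag path lies in $B_m$ for $m$ large.

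\textbf{Main obstacle.} The delicate point is Step 1: getting exact (or two-sided up to constants) control of $\pi(i)\asymp i^{-\beta}$ from the recursion while keeping every $a_i$ in $[0,1]$ and bounded away from $0$ and $1$. If the first-order choice of $a_i$ gives only $\pi(i)\asymp i^{-\beta}e^{O(1)}$, that is still enough, since only constants are affected.
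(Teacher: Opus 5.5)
Your proposal is correct and is essentially the paper's proof. It uses the same inward-drift strategy $a_i=\frac12-\frac{c}{i}$ with $4c>1$ (your $\beta$ is the paper's $4c=1+4\epsilon$), the product reversible measure $\Pi(x,y)=\pi(x)\pi(y)$ with $c_G=O(1)$, Lemma~\ref{lem:InfiniteResistanceHittingBound} applied to the non-absorbed product chain (whose first hitting time of $\mathcal A_d$ is exactly $\mathcal T_d$, which is cleaner than your remark about absorption deleting edges), and a single zigzag path of resistance $\sum_{k\le 2d}O(k^{\beta})=O(d^{2+\eta})$.

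The differences are cosmetic. The paper takes $c=\frac14+\epsilon$ with $\epsilon<\frac14$ and $4\epsilon<\eta$, so it needs no cutoff $i_0$ and computes $\pi$ exactly via Gamma functions. Its path keeps the second agent in $\{0,-1\}$ rather than your $\{0,1\}$. Your explicit check that the single-path bound holds in every finite merged network defining $R(a\leftrightarrow A)$ is slightly more careful than the paper's one-line Rayleigh argument.
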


It is immediate that no bounded-support admissible strategy results in feasible rendezvous for big enough $d$. Hence, the proof of the theorem naturally relies on the following infinite-support \emph{asymptotically balanced inward-drift} admissible strategy, which is independent of the value of $d$, but is tailored to some fixed $\epsilon\in(0,1/4)$.
\begin{definition}
\label{def:BalancedInward}
For $\epsilon \in \left(0,\tfrac14\right)$, let $c=\tfrac14+\epsilon$. We set
$a_0=\frac12$, and $a_i=\frac12-\frac{c}{i}$, for $i\geq1$. 
\end{definition}
For the fixed $\epsilon>0$, we show that for some constant $\gamma=\gamma(\epsilon)$, the strategy $a=a(\epsilon)$ of Definition~\ref{def:BalancedInward} satisfies
$
\E[\mathcal T_d]
\leq
\gamma~d^{2+4\epsilon}.
$
This implies Theorem~\ref{thm:NearQuadraticUpperBound} by choosing $\epsilon$ so that $4\epsilon<\eta$.

An important feature of this construction is that the strategy is defined independently of $d$. Thus the same upper bound also applies in the more general oblivious self-distance model, where rendezvous is attempted without prior knowledge of the initial distance $2d$. A more careful tracking of the estimates also shows that the hidden constant grows at the order of
$
O\left(\epsilon^{-2}\right).
$
Since this dependence is not important for our purposes, we do not optimize it here.

The strategy of Definition~\ref{def:BalancedInward} introduces a slowly decaying inward drift. When an agent is at distance $i$ from its starting point, the probability of moving further away is reduced by an amount of order $1/i$. 
This drift is weak enough to preserve exploration, yet strong enough to force repeated returns to the origin while still allowing polynomially small mass at large distances.

The proof proceeds in two steps. We first analyze the one-agent displacement chain and show that its mass decays polynomially with the distance from the origin. This decay determines the conductances of the associated two-agent electrical network, and therefore quantifies how costly it is to move through distant configurations.
We then use this information in the two-agent chain. The key point is that we do not need to analyze all possible ways in which rendezvous may occur. In the electrical-network formulation, it is enough to upper bound the effective resistance from the initial state to the meeting set. This can be done by restricting attention to one explicit route leading to rendezvous, since removing all other possible routes can only increase the resistance. The polynomial decay obtained from the one-agent analysis is then strong enough to show that the resistance of this route grows only polynomially in $d$, which yields the desired upper bound.

We start by studying the one-agent displacement chain induced by the admissible strategy of Definition~\ref{def:BalancedInward}.
By Section~\ref{sec:ElectricalRepresentationInducedChains}, this chain is reversible. Let $\pi$ denote a reversible measure. Then
$
\pi(i)a_i=\pi(i+1)(1-a_{i+1})
$
for every $i\geq 0$.
The following lemma controls the growth of $\pi(i)$. 

\begin{lemma}
\label{lem:PolynomialTailInvariantMeasure}
For every fixed $\epsilon \in (0,1/4)$, and for all $i\in\mathbb Z\setminus\{0\}$, we have
$
\pi(i) = \Theta\left( 
|i|^{-(1+4\epsilon)}
\right)
$.
\end{lemma}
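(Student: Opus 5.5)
By the reflection symmetry $b_{-i}=1-b_i$, the one-agent displacement chain on $\mathbb Z$ has $P(i,i+1)=b_i$ and $P(-i,-i-1)=1-b_{-i}=b_i$. Hence the map $i\mapsto -i$ is an automorphism of the chain, and the reversible measure can be taken symmetric, $\pi(-i)=\pi(i)$. It therefore suffices to treat $i\geq 1$ and normalize $\pi(0)=1$.

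The detailed balance relation $\pi(i)a_i=\pi(i+1)(1-a_{i+1})$ gives the product formula
$$
\pi(i)=\pi(0)\,\frac{a_0}{1-a_i}\prod_{j=1}^{i-1}\frac{a_j}{1-a_j},\qquad i\geq1 .
$$
Since $a_j=\tfrac12-\tfrac cj$ and $1-a_j=\tfrac12+\tfrac cj$, each factor equals $\frac{j-2c}{j+2c}$. This factor is positive for every $j\geq1$ because $2c=\tfrac12+2\epsilon<1$. The prefactor satisfies $\frac{a_0}{1-a_i}=\frac{1/2}{1/2+c/i}\in\left[\tfrac{1}{1+2c},1\right]$, so it is $\Theta(1)$. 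The whole problem thus reduces to showing
$$
\prod_{j=1}^{i-1}\frac{j-2c}{j+2c}=\Theta\!\left(i^{-4c}\right)=\Theta\!\left(i^{-(1+4\epsilon)}\right),
$$
using $4c=1+4\epsilon$.

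There are two routes to this estimate, and I would take the first. The product telescopes into Gamma functions:
$$
\prod_{j=1}^{i-1}\frac{j-2c}{j+2c}=\frac{\Gamma(i-2c)\,\Gamma(1+2c)}{\Gamma(1-2c)\,\Gamma(i+2c)}.
$$
The standard asymptotic $\Gamma(x+\alpha)/\Gamma(x+\beta)\sim x^{\alpha-\beta}$ then gives the constant $\Gamma(1+2c)/\Gamma(1-2c)$ times $i^{-4c}(1+o(1))$. This constant is finite and positive because $1-2c>0$.

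As a self-contained alternative, I would take logarithms:
$$
\log\frac{j-2c}{j+2c}=-\frac{4c}{j}+O\!\left(\frac1{j^2}\right)
$$
for $j\geq 2$, where the $O$-constant depends only on $c$. Summing gives $-4c\log i+O(1)$, since the harmonic sum is $\log i+O(1)$ and $\sum 1/j^2$ converges. The $j=1$ term, $\log\frac{1-2c}{1+2c}$, is a finite constant that is absorbed into the $O(1)$.

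The only delicate point is that all $\Theta$-constants are uniform in $i$, while being allowed to depend on $\epsilon$. This holds because every factor is strictly positive, in particular because $a_1=\tfrac14-\epsilon>0$. For $|i|$ bounded the claim is trivial, since $\pi(i)$ is then a positive finite quantity.
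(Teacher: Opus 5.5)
Your proof is correct and follows essentially the same route as the paper: detailed balance gives a product formula for $\pi(i)$, which is rewritten with Gamma functions and estimated by the standard asymptotic $\Gamma(x+\alpha)/\Gamma(x+\beta)\sim x^{\alpha-\beta}$ to obtain $\Theta(|i|^{-4c})=\Theta(|i|^{-(1+4\epsilon)})$. The differences are cosmetic. You group the factors as $\frac{a_0}{1-a_i}\prod_{j=1}^{i-1}\frac{a_j}{1-a_j}$, which gives a single ratio $\Gamma(i-2c)/\Gamma(i+2c)$ instead of the paper's two, and your logarithmic-sum argument is a valid elementary replacement for the Gamma asymptotics.
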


\begin{proof}
By symmetry, it is enough to define $\pi$ on $\integers_{\geq0}$ and then set $\pi(-i)=\pi(i)$. For arbitrarily fixed $\pi(0)>0$, set
$
(1-a_1)\pi(1)=a_0\pi(0),
$
and, for $j\geq1$, define
$
\frac{\pi(j+1)}{\pi(j)}
=
\frac{a_j}{1-a_{j+1}}
=
\frac{(j-2c)(j+1)}{j(j+1+2c)}.
$
Then, for $j\geq1$, we have
$$
\frac{\pi(j)}{\pi(0)}
=
\frac{a_0}{1-a_1}
\prod_{k=1}^{j-1}\frac{k-2c}{k}
\prod_{k=1}^{j-1}\frac{k+1}{k+1+2c}
=
\frac{\Gamma(1+2c)}{\Gamma(1-2c)}
~\frac{\Gamma(j-2c)\Gamma(j+1)}
{\Gamma(j)\Gamma(j+1+2c)},
$$
where the equalities used that $a_0=1/2$, $1-a_1=1/2+c$, and the Gamma product identity (note that all terms are well defined, because $0<c<1/2$).
By Stirling's approximation, we obtain
$$
\frac{\Gamma(j-2c)}{\Gamma(j)}
=
\Theta\left(j^{-2c}\right),
\quad
\frac{\Gamma(j+1)}{\Gamma(j+1+2c)}
=
\Theta\left(j^{-2c}\right),
$$
and therefore 
$
\pi(j)
=
\Theta\left(j^{-4c}\right)
=
\Theta\left(j^{-(1+4\epsilon)}\right).
$
\end{proof}

We are now ready to conclude with the proof of Theorem~\ref{thm:NearQuadraticUpperBound}.

\begin{proof}[Proof of Theorem~\ref{thm:NearQuadraticUpperBound}]
Fix $\eta>0$, and choose $\epsilon\in(0,1/4)$ such that $4\epsilon<\eta$.
For the strategy of Definition~\ref{def:BalancedInward}, we consider the path
$$
\mathcal P: ~~S_{0,0}
\to
S_{1,-1}
\to
S_{2,0}
\to
S_{3,-1}
\to
\cdots
\to
S_{2d-1,-1}.
$$
The final state belongs to $\mathcal A_d$, since
$
(2d-1)-(-1)=2d.
$
Let
$
\Pi(x,y)=\pi(x)\pi(y),
$
and consider the two-agent chain in which the two coordinates continue to move independently according to the strategy even after a meeting state is reached. Since the one-agent chain is reversible with respect to $\pi$, this two-agent chain is reversible with respect to $\Pi$. Moreover, its first hitting time of $\mathcal A_d$ is exactly $\mathcal T_d$.
Applying Lemma~\ref{lem:InfiniteResistanceHittingBound} with starting state $S_{0,0}$ and target set
$
A=\mathcal A_d
$
gives
$$
\E[\mathcal T_d]
\leq
\E[T_{S_{0,0} \to A}]
\leq
2c_G R(S_{0,0}\leftrightarrow\mathcal A_d).
$$
Next we show that $c_G=O(1)$ and $R(S_{0,0}\leftrightarrow\mathcal A_d)=O\left(d^{2+4\epsilon}\right)$ concluding the theorem. 
For this, we first bound the total conductance of the network, and then bound the conductances of the edges along the chosen path.

We start with the total conductance. 
Since every transition probability is at most $1$, we get
$$
2c_G
\leq
\sum_{S_{x,y},S_{u,v}}
\Pi(x,y)P(S_{x,y},S_{u,v})
\leq
\sum_{S_{x,y}}
\Pi(x,y)
=
\sum_{x,y\in\mathbb Z}\pi(x)\pi(y)
=
\left(\sum_{k\in\mathbb Z}\pi(k)\right)^2.
$$
By Lemma~\ref{lem:PolynomialTailInvariantMeasure},
$
\pi(k)
=
\Theta\left(|k|^{-(1+4\epsilon)}\right)
$,
and therefore
$
\sum_{k\in\mathbb Z}\pi(k)<\infty
$, implying that 
$
c_G=O(1)
$.

Next we compute the total effective resistance. 
Removing all edges not belonging to this path can only increase the effective resistance, that is 
$
R(S_{0,0}\leftrightarrow\mathcal A_d)
\leq
\sum_{e \in \mathcal P}
\tfrac1{c_e},
$
where the sum is over the edges of the above path.
Since $a_0=1/2$, the first edge has conductance 
$
c_e=\Pi(0,0)/4
=
\pi(0)^2/4,
$
which is bounded below by a positive constant depending only on $\epsilon$.
Now consider any other edge. If the larger coordinate is indexed by $k$, then the other coordinate belongs to $\{0,-1\}$. 
Since $\pi(0)$ and $\pi(-1)$ are fixed positive constants, and by Lemma~\ref{lem:PolynomialTailInvariantMeasure},
$
\pi(k)=\Theta(k^{-(1+4\epsilon)}),
$
we obtain that 
$
\Pi(x,y)
=
\Theta\left(k^{-(1+4\epsilon)}\right).
$
Moreover, apart from the first edge already considered, every transition probability along the path is either $b_k(1-b_0)$ or $b_kb_{-1}$, for some $k\geq1$. Since
$
b_k\geq \frac14-\epsilon,
1-b_0=\frac12,
b_{-1}=\frac34+\epsilon,
$
these probabilities are bounded below by a positive constant depending only on $\epsilon$, and above by $1$. Therefore
$
c_e
=
\Pi(x,y)P(S_{x,y},S_{u,v})=
\Theta\left(k^{-(1+4\epsilon)}\right)
$.
Then, we have 
$$
R(S_{0,0}\leftrightarrow\mathcal A_d)
\leq
\sum_{e \in \mathcal P} \frac1{c_e}
=
O\left(1+\sum_{k=1}^{2d-1} k^{1+4\epsilon}\right)
=
O\left(d^{2+4\epsilon}\right),
$$
as wanted. 
\end{proof}

The significance of the chosen constants of the admissible strategy of Definition~\ref{def:BalancedInward} is now self-evident. 
More generally, if one takes
$
a_i=\frac12-\frac{c}{i},
$
then the corresponding reversible measure satisfies
$
\pi(i)=\Theta(i^{-4c}).
$
Hence the total conductance is finite only when $4c>1$, that is,
$c>1/4$. In that case, our arguments show an upper bound of order
$
d^{1+4c}.
$
Thus $c=1/4$ is the critical threshold, and choosing
$
c=\frac14+\epsilon
$
gives the smallest exponent obtainable from this method, namely
$
d^{2+4\epsilon}.
$

\subsection{Quadratic Upper Bound for Oblivious Strategies (Known-Distance Setting)}
\label{sec:AbsQuadraticUpperBound}

In this section we prove a pure quadratic upper bound in the known-distance setting. 
For this, we use the \emph{level-$n$ uniform truncated strategy}, defined by
$
a_i=\frac12
$
for $0\leq i<n$, 
and $a_i=0$ for $i\geq n$.
This is an admissible strategy, and it induces the level-$n$ truncated chain of
Definition~\ref{def:n-TruncatedChain}.  By Lemma~\ref{lem:UpperBoundFrame}, we have that $R_d \leq U_{d,n}(a)$, and therefore, our pure quadratic bound is immediate by the following theorem.

\begin{theorem}
\label{thm:PureQuadraticTruncatedUniform}
For $d\geq 1$, let $a$ be the level-$2d$ uniform truncated strategy. Then
$
U_{d,2d}(a)=O(d^2).
$
\end{theorem}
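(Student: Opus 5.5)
We will reuse the electrical-network approach of Theorem~\ref{thm:NearQuadraticUpperBound}, now on the finite level-$2d$ truncated chain with $n=2d$. Under the uniform truncated strategy, each coordinate performs a lazy-free simple random walk on $\{-n,\ldots,n\}$, reflected at $\pm n$. Interior moves are $\pm1$ with probability $1/2$, and the boundary moves inward deterministically. This one-agent chain is reversible with respect to
\[
\pi(i)=2\ \text{for } |i|<n,\qquad \pi(\pm n)=1 .
\]
Indeed, $\pi(i)\cdot\tfrac12=\pi(i+1)\cdot\tfrac12$ in the interior, and $\pi(n-1)\tfrac12=\pi(n)\cdot 1$ at the boundary. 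All edge conductances therefore equal $1$.

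As in the proof of Theorem~\ref{thm:NearQuadraticUpperBound}, we let both coordinates keep moving independently even after a meeting state is reached. The product chain on $\{-n,\ldots,n\}^2$ restricted to the parity class $x\equiv y \pmod 2$ is then reversible with respect to $\Pi(x,y)=\pi(x)\pi(y)$. Its first hitting time of $\mathcal A^{\mathrm{tr}}_{d,n}$ from $S_{0,0}$ equals $U_{d,n}(a)$. The chain is finite, so we may apply Lemma~\ref{lem:ElectricalCommuteIdentity}, after merging the absorbing set into a single vertex as in the proof of Lemma~\ref{lem:InfiniteResistanceHittingBound}. This gives
\[
U_{d,2d}(a)\le 2c_G\,R\bigl(S_{0,0}\leftrightarrow\mathcal A\bigr).
\]
The total conductance satisfies $c_G\le \tfrac12\sum_{x,y}\Pi(x,y)\le \tfrac12(4n+2)^2=O(d^2)$. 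Consequently, the whole task reduces to showing that the effective resistance is $O(1)$, which is the only nontrivial step.

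A single path, as used in Theorem~\ref{thm:NearQuadraticUpperBound}, will not suffice: its resistance is of order $d$ because every edge has conductance about $1$. We therefore need many parallel routes, and we plan to bound the resistance by Thomson's principle, i.e.\ by exhibiting a unit flow from $S_{0,0}$ to $\mathcal A$ with $O(1)$ energy. We use the rotated coordinates $u=x-y$ and $w=x+y$. Each move of the product chain changes $(u,w)$ by $(\pm2,0)$ or $(0,\pm2)$, so in these coordinates the chain is essentially a walk on a two-dimensional grid. The target is the line $u=2d$, which inside the box is a segment of length about $2n-2d=2d$. The start $(0,0)$ lies at distance $d$ from this segment, and the box has side of order $d$. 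We send the unit flow through a fan, or cone, that spreads from the start toward the target segment. At distance $r$ from the start the flow crosses about $r$ edges with flow about $1/r$ each, so the energy is $\sum_{r\le d} r\cdot r^{-2}=O(\log d)$.

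This gives $R=O(\log d)$ and hence only $U_{d,2d}(a)=O(d^2\log d)$, which is the main obstacle. In two dimensions a point-to-set resistance is genuinely of order $\log d$, so the product of $c_G$ and $R$ loses a logarithm. To recover the pure $O(d^2)$ bound we would instead argue directly. Consider the one-dimensional process $u_t=X_t-Y_t$. Away from the reflecting boundaries its increment is $+2$ or $-2$ with probability $1/4$ each and $0$ with probability $1/2$, so it is a martingale with variance $2$ per step. Reflection at $x=-n$ or $y=n$ only pushes $u$ upward. Hence $u_t$ is a submartingale bounded below by $-2n$, and $u_t^2-2t$ is, up to the boundary corrections, a submartingale. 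Applying optional stopping to the exit time of $u$ from $[-2n,2d]$ would give $\mathbb E[\tau]\le \mathbb E[u_\tau^2]/2=O(d^2)$. The delicate points are that $u$ stalls when both walkers sit at their boundaries, and that the boundary pushes must be shown to contribute nonnegatively. We would handle the stalls by noting that each boundary visit forces an inward step of $u$ of size at least $+1$, so they cannot accumulate beyond $O(d^2)$ expected time.
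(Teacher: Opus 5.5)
\textbf{There is a genuine gap.} Your electrical set-up is sound, and you correctly diagnose that it can only give $O(d^2\log d)$, since the point-to-set resistance in the $(u,w)$ grid really is $\Theta(\log d)$. The proof therefore rests entirely on your last paragraph, and that argument fails as stated.

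\emph{$u_t$ is not a submartingale.} You only considered the reflections at $x=-n$ and $y=n$. The walkers are also reflected at $x=+n$ and $y=-n$, and such states occur before absorption. For $d\ge 2$, the state $(2d,2)$ has $x-y=2d-2<2d$. There $X$ is forced to $2d-1$ while $Y$ moves to $1$ or $3$, so $\mathbb E[\Delta u]=-1$; the same happens at $(-2,-2d)$.

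\emph{The upward pushes also break $u_t^2-2t$.} At $x=-n$ with $|y|<n$, the increment $\Delta u\in\{0,2\}$ is equally likely, so $\mathbb E[\Delta(u^2)\mid\mathcal F_t]=2u_t+2$. Since $u_t=-n-y<0$ there, the submartingale property fails by $|2u_t|$, which can be of order $d$, at every such visit. Each coordinate sits at a given endpoint roughly a $1/(4n)$ fraction of the time. These deficits can therefore add up to something of the same order as $\mathbb E[\tau]$. They are not a lower-order correction, and $\mathbb E[\tau]\le\mathbb E[u_\tau^2]/2$ does not follow.

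\emph{The remedy for stalls is incorrect.} At the corners $(-n,-n)$ and $(n,n)$ the increment of $u$ is $0$ deterministically, and at $x=n$ the push is downward. So boundary visits do not force $u$ up by at least $1$.

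\textbf{The missing idea} is to avoid the two-dimensional analysis and use the independence of the coordinates, as the paper does.
\begin{itemize}
\item Run the first coordinate until it hits $n=2d$, the right agent's origin. This is a hitting time of the one-dimensional reflected walk. By the unit-conductance computation you already have, it is at most $2c_GR\le 2\cdot 2n\cdot 2n=O(d^2)$.
\item This time $H$ depends only on $X$, while $Y$ is independent and symmetric about $0$. Hence $\mathbb P[Y_H\le 0]\ge\frac12$.
\item On that event $X_H-Y_H\ge 2d$. Since $X-Y$ moves in steps of $0$ or $\pm2$, it has already equalled $2d$, so rendezvous has occurred.
\item Otherwise, wait an expected $O(d^2)$ for $Y$ to return to $0$ and start a new trial.
\end{itemize}
By the strong Markov property, each trial has expected length $O(d^2)$ and succeeds with probability at least $\frac12$. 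This gives $U_{d,2d}(a)=O(d^2)$.
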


We use the following standard estimate for the one-dimensional coordinate walk
induced by the level-$n$ uniform truncated strategy.

\begin{lemma}
\label{lem:QuadraticHittingTime}
Consider the Markov chain on $\{-n,\ldots,n\}$ with transition probabilities
$
P(i,i+1)=P(i,i-1)=\frac12
$
for $-n<i<n$,
and
$
P(n,n-1)=1, P(-n,-n+1)=1.
$
The expected hitting time between any two states of this chain is $O(n^2)$.
\end{lemma}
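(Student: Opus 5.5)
This is a statement about a reflected simple random walk on a path with $2n+1$ vertices. I will obtain it from the commute-time identity of Lemma~\ref{lem:ElectricalCommuteIdentity}, in the same way as in the proof of Lemma~\ref{lem:BoundsEsigmaEsigma2}\ref{item:EsigmaLowerBound}.

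\textbf{Reversibility and conductances.} Relabel the states as $0,1,\ldots,2n$ via $i\mapsto i+n$. The chain is then a birth-death chain with $p_0=1$, $q_{2n}=1$, and $p_i=q_i=\tfrac12$ otherwise. All the relevant $p_i,q_{i+1}$ are positive, so by Section~\ref{sec:ElectricalRepresentationInducedChains} the chain is reversible. Take $\pi(0)=1$. The recursion $\pi(i+1)=\pi(i)p_i/q_{i+1}$ gives $\pi(1)=2$, $\pi(i)=2$ for $1\le i\le 2n-1$, and $\pi(2n)=1$. Every edge $(i,i+1)$ then has conductance $c(i,i+1)=\pi(i)p_i=1$: for $i=0$ this is $1\cdot 1$, and otherwise it is $2\cdot\tfrac12$. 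Hence the total conductance is $c_G=2n$.

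\textbf{Effective resistance.} For any two states $a,b$, the network is a path of unit resistors, so $R(a\leftrightarrow b)=|a-b|\le 2n$.

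\textbf{Conclusion.} Lemma~\ref{lem:ElectricalCommuteIdentity} gives
\[
\E[T_{a\to b}]\le \E[T_{a\to b}]+\E[T_{b\to a}]=2c_G\,R(a\leftrightarrow b)\le 2\cdot 2n\cdot 2n=8n^2,
\]
so every expected hitting time is $O(n^2)$. The only step needing care is the reversible-measure computation at the two reflecting endpoints, where $\pi$ is halved. This is routine.

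As a sanity check without electrical networks, one could instead solve $h(i)=1+\tfrac12 h(i-1)+\tfrac12 h(i+1)$ with the reflecting boundary conditions. This gives an explicit quadratic, for example $h(i)=(2n)^2-i^2$ in the shifted coordinates for hitting an endpoint, which is consistent with the $8n^2$ bound.
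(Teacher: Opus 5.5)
Your proof is correct and takes essentially the same approach as the paper. Both use the reversible measure $\pi=(1,2,\ldots,2,1)$, which gives unit conductance on all $2n$ edges, so $c_G=2n$ and $R(u\leftrightarrow v)\le 2n$; both then apply the commute-time identity of Lemma~\ref{lem:ElectricalCommuteIdentity} to bound $\E[T_{u\to v}]$ by $O(n^2)$. Your sanity check $h(i)=(2n)^2-i^2$ for the hitting time of $2n$ (reflecting at $0$) also satisfies the recursion and both boundary conditions.
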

\begin{proof}
The chain is reversible with respect to $\pi(-n)=\pi(n)=1$ and $\pi(i)=2$ for $|i|<n$. Indeed, for every edge $(i,i+1)$, we have
$
c(i,i+1)=\pi(i)P(i,i+1)=1,
$
since this equals $1\cdot 1$ on the edge $(-n,-n+1)$, equals $2\cdot \frac12$ on all interior edges, and equals $2\cdot \frac12$ on the edge $(n-1,n)$.
That is, every edge conductance satisfies
$
c_e=1
$.

Let $u,v\in\{-n,\ldots,n\}$ be arbitrary states.
Applying Lemma~\ref{lem:ElectricalCommuteIdentity}, we obtain
$$
\E[T_{u\to v}]
+
\E[T_{v\to u}]
=
2c_G R(u\leftrightarrow v).
$$
Next we show that
$
c_G=O(n)
$
and
$
R(u\leftrightarrow v)=O(n).
$
Since for every edge $e$ we have $c_e=1$, and the network contains $2n$ edges, we get
$
c_G
=
\sum_e c_e
=
2n.
$

Similarly, every edge resistance equals $1/c_e=1$. Since the distance between any two
states is at most $2n$,
$$
R(u\leftrightarrow v)
=
\sum_{e\in Path(u,v)}
\frac1{c_e}
\leq
2n,
$$
where $Path(u,v)$ denotes the unique path between $u$ and $v$.
It follows that both $\E[T_{u\to v}], \E[T_{v\to u}]$ are finite, and 
$$
\E[T_{u\to v}]
=
2c_G R(u\leftrightarrow v)
-\E[T_{v\to u}]
\leq 
2c_G R(u\leftrightarrow v)
=
O(n^2).
$$
\end{proof}

The proof of Theorem~\ref{thm:PureQuadraticTruncatedUniform} is based on a repeated-trial argument for the level-$2d$ uniform truncated strategy. In each trial, we wait until the left agent reaches physical location $d$, equivalently until her displacement coordinate reaches $2d$. By Lemma~\ref{lem:QuadraticHittingTime}, this requires expected time $O(d^2)$. Since the right agent starts the trial at her own origin, symmetry implies that, at this stopping time, she is at or to the left of her origin with probability at least $1/2$. On this event, rendezvous must already have occurred. Otherwise, we wait until the right agent returns to her origin, which again requires expected time $O(d^2)$, and begin the next trial. Thus each trial has expected duration $O(d^2)$ and succeeds with probability at least $1/2$. Summing over the resulting geometric sequence yields $U_{d,2d}(a)=O(d^2)$. We now formalize this argument.

\begin{proof}[Proof of Theorem~\ref{thm:PureQuadraticTruncatedUniform}]
Set $n=2d$.  For the level-$n$ uniform truncated strategy, each coordinate $X_t,Y_t$ of the
chain of Definition~\ref{def:n-TruncatedChain} evolves as the one-dimensional
chain of Lemma~\ref{lem:QuadraticHittingTime}, until the absorbing condition
$x-y=2d$ is reached.

We use repeated trials.  At the beginning of each trial, the second coordinate (right agent) is initially at $0$ (physical coordinate $d$). The next argument holds true for any initial displacement of the first coordinate.
Starting from the current value of the first coordinate (left agent) starting at any physical location $j<d$ (initially $j=-d$), we wait until the first
coordinate hits $n=2d$.  By Lemma~\ref{lem:QuadraticHittingTime}, this takes expected
time $O(d^2)$, for any starting value $j$. 

Let $H$ denote this hitting time.  The time $H$ depends only on the first coordinate of the chain.
Since the second coordinate starts the trial at $0$, and since the one-dimensional
coordinate walk is symmetric around $0$, we have
$
\mathbb P[Y_H\leq 0]\geq \frac12.
$
On this event, rendezvous has already occurred.  Indeed, at time $H$ the first
agent is at physical position
$
-d+n=d,
$
while the second agent is at physical position
$
d+Y_H\leq d.
$
Equivalently, in the chain notation, $X_H-Y_H\geq 2d$, and since the difference
changes by steps of size $-2,0,$ or $2$, the absorbing state must have been hit
by time $H$.

If $Y_H>0$ (which happens with probability at most $1-\mathbb P[Y_H\leq 0]\leq 1/2$), we wait until the second coordinate returns to $0$. 
Again, by Lemma~\ref{lem:QuadraticHittingTime}, this extra waiting time has expectation $O(d^2)$. 
At this point the second coordinate is again $0$, hence the same trial argument applies again by the strong Markov property.
Therefore each trial has expected length $O(d^2)$, and each trial succeeds with probability at least $1/2$. Repeating the argument after each failed trial, and applying the strong Markov property, we obtain
$$
U_{d,2d}(a)
\leq
\sum_{k\geq0} O(d^2)\,2^{-k}
=
O(d^2).
$$
\end{proof}

\subsubsection{Numerical Estimates of the Constants in the Quadratic Upper Bound}
\label{sec:NumEstimatesQuadraticConstants}

Theorem~\ref{thm:PureQuadraticTruncatedUniform} established that
$
R_d\leq U_{d,2d}(a)=O(d^2),
$
where $a$ is the level-$2d$ uniform truncated strategy.
The proof was deliberately qualitative, and did not attempt to optimize the hidden
constant in the quadratic bound.
On the other hand, together with the lower bound of Theorem~\ref{thm:QuadraticLowerBound} we obtain that $R_d = \Theta(d^2)$. 
The numerical investigations of
Sections~\ref{sec:UpperBoundsSmallD1} and~\ref{sec:NumericalComputations}
suggest that the corresponding constant of the expected rendezvous time is relatively small.
Our goal in this section is therefore not to estimate the optimal rendezvous value
$R_d$, but rather to investigate bounds to the asymptotic constant obtainable within the much simpler family of uniform truncated strategies.

More precisely, for each fixed $d\geq1$, we restrict attention in this numerical study to truncation levels $n>d$, and consider the level-$n$ uniform truncated strategy. The boundary case $n=d$, although allowed by Definition~\ref{def:n-TruncatedChain}, does not affect the asymptotic question considered here and changes only the $d=1$ entry among the values reported below.
For this strategy, the expected rendezvous time is exactly
$U_{d,n}(a)$, as defined in Section~\ref{sec:TruncatedChains}.
Theorem~\ref{thm:PureQuadraticTruncatedUniform} corresponds to the particular
choice $n=2d$.
The natural question then is whether a different truncation level produces a smaller
quadratic constant. More specifically, if 
$$
n^\star(d)
=
\arg\min_{n>d} U_{d,n}(a),
$$
where $n^\star=n^\star(d)$, 
does the ratio
$
n^\star/d,
$
approach a constant as $d\to\infty$?
If so, what is the limiting value
$
\lambda^\star
=
\lim_{d\to\infty}\frac{n^\star}{d},
$
and what quadratic constant is obtained by the corresponding strategy?

To investigate these questions, we computed
$
U_{d,n}(a)
$
for all
$
1\leq d\leq 500
$. As an additional large-scale verification, we also computed the value corresponding to $d=1000$. 

Unlike the weak-peek computations of
Section~\ref{sec:NumericalComputations}, no nonlinear optimization is required.
For every fixed pair $(d,n)$, the strategy is completely determined, and the value
$U_{d,n}(a)$ is obtained by solving the finite system of hitting-time equations
from Lemma~\ref{lem:finite-hitting-system} for the level-$n$ truncated chain of
Definition~\ref{def:n-TruncatedChain}.
The computations were implemented in Julia.
Rather than constructing the full sparse matrix of the linear system,
we exploited the block-tridiagonal structure of the resulting linear system.
This allows the system defining the values
$E_{x,y}$
to be solved by repeated block elimination, reducing both memory usage and running
time substantially.
For each fixed pair $(d,n)$, the resulting value
$
U_{d,n}(a)=E_{0,0}
$
was computed up to numerical linear-algebra precision.
For every distance $d$, we then searched over a range of truncation levels
$n>d$, recording both the minimizing value
$n^\star$
and the corresponding normalized expected rendezvous time
$
U_{d,n^\star}(a)/d^2.
$

The computations reveal two phenomena. 
First, the optimal truncation ratio
$
n^\star/d
$
stabilizes rapidly as $d$ grows.
Second, the normalized expected rendezvous time
$
U_{d,n^\star}(a)/d^2
$
also stabilizes rapidly.
Table~\ref{tab:uniform-truncation-summary} reports representative values of the
optimal truncation level and the corresponding normalized expected rendezvous time.
The complete dataset for $1\leq d\leq 500$ is omitted for brevity.  In addition,
we include several larger values of $d$ in order to illustrate the stability of the
observed asymptotic regime.

\begin{table}[ht]
\centering
\small
\begin{tabular}{c|cccccccccc}
\hline
$d$
&
1
&
2
&
5
&
10
&
20
&
50
&
100
&
200
&
500
&
1000
\\
\hline
$n^\star$
&
2
&
3
&
9
&
18
&
36
&
90
&
181
&
361
&
903
&
1806
\\
\hline
$n^\star/d$
&
2.000
&
1.500
&
1.800
&
1.800
&
1.800
&
1.800
&
1.810
&
1.805
&
1.806
&
1.806
\\
\hline
$U_{d,n^\star}(a)/d^2$
&
7.6316
&
8.0771
&
8.5373
&
8.6446
&
8.6872
&
8.7069
&
8.7118
&
8.7137
&
8.7146
&
8.7149
\\
\hline
\end{tabular}
\caption{
Representative values of the optimal truncation level $n^\star=n^\star(d)$,
the optimal truncation ratio $n^\star/d$, and the corresponding
normalized expected rendezvous time
$U_{d,n^\star}(a)/d^2$
for the level-$n$ uniform truncated strategy.
}
\label{tab:uniform-truncation-summary}
\end{table}

The data suggest that both
$
n^\star/d
$
and
$
U_{d,n^\star}(a)/d^2
$
stabilize rapidly.
In particular, the optimal truncation ratio appears to converge to a value close to
$
\lambda^\star \approx 1.806,
$
while the corresponding normalized expected rendezvous time appears to converge to
$
C^\star \approx 8.715.
$
Although these computations do not determine the optimal rendezvous value $R_d$,
they provide numerical evidence that the level-$n$ uniform truncated strategy admits a well-defined asymptotic quadratic constant.

\section{Conclusion \& Future Directions}
\label{sec:Conclusion}
We introduced oblivious self-distance strategies for rendezvous on the line and studied both the known-distance and unknown-distance settings. For known distance, we proved that the optimal expected rendezvous time satisfies
\[
R_d=\Theta(d^2).
\]
The lower bound also applies to the unknown-distance setting. For the latter, we established a universal strategy, independent of $d$, with expected rendezvous time
\[
O(d^{2+\eta})
\]
for every fixed $\eta>0$.

For $d=1$, our truncated and weak-peek models determine the exact optimum. For $d=2,\ldots,6$, these models give rigorous finite-support upper bounds, while extensive numerical optimization repeatedly identifies the same truncation structure and objective values. We conjecture that the corresponding strategies are optimal, although global optimality of the weak-peek minimizers is not proved. The finite-state analysis is based on the weak-peek relaxation, while our asymptotic results are obtained separately using probabilistic arguments based on birth-death chains and electrical-network methods.

Several natural questions remain open. The most immediate concerns the unknown-distance setting: does there exist a universal oblivious strategy, independent of $d$, with expected rendezvous time
$
\Theta(d^2)?
$
Even in the known-distance setting, the optimal quadratic constant is unknown. Our numerical results suggest that the true asymptotic constant is substantially smaller than the bounds obtained by our proofs. We conjecture that the finite-support strategies reported for $d=2,\ldots,6$ are optimal and, more generally, that every $d$ admits an optimal bounded-support strategy. Certifying global optimality of the corresponding weak-peek minimizers remains open, as does extending the weak-peek methodology to obtain exact solutions for larger values of $d$.

More broadly, the weak-peek framework provides a systematic way to derive finite-state relaxations of infinite-state rendezvous problems. It would be interesting to understand whether similar relaxations can be applied to other rendezvous and search models.


\bibliography{ObliviousRend-arXiv-clean2}

\appendix

\section{Computer-Assisted Verification for Lemma~\ref{lem:d1level3Minimizer}}
\label{sec:MathematicaCode}

For completeness, we give the exact Mathematica input used to exclude
interior critical points in the proof of Lemma~\ref{lem:d1level3Minimizer}.
First, we compute $E_{0,0}(a_1,a_2)$ as defined in the lemma, using the following code.

\begin{verbatim}
d = 1;
n = 3;

b[k_] := Which[k == 0, 1/2, k == 1, a1, 
		k == 2, a2, k == -1, 1 - a1, k == -2, 1 - a2];

pX[i_] := If[Abs[i] >= n, 1, b[i]];
pY[j_] := If[Abs[j] >= n, 0, b[j]];

states =   Select[Tuples[{Range[-n, n + 2 d], Range[-n - 2 d, n]}],
				 #[[1]] - #[[2]] <= 2 d &&  EvenQ[#[[1]] - #[[2]]] &];

abs = Select[states, #[[1]] - #[[2]] == 2 d &];
non = Complement[states, abs];

vars = Association[Table[s -> e[s[[1]], s[[2]]], {s, non}]];

eqs = Table[Module[{i = s[[1]], j = s[[2]], px, py, trans}, px = pX[i];
    py = pY[j];
    trans = {{{i + 1, j + 1}, px py}, {{i + 1, j - 1}, 
       px (1 - py)}, {{i - 1, j + 1}, (1 - px) py}, {{i - 1, 
        j - 1}, (1 - px) (1 - py)}};
    vars[s] == 
     1 + Total[
       If[KeyExistsQ[vars, #[[1]]], #[[2]] vars[#[[1]]], 0] & /@ trans]], 
       {s, non}];

sol = First@Solve[eqs, Values[vars]];

E00 = FullSimplify[Factor[e[0, 0] /. sol]]
\end{verbatim}

The output to the following code is \texttt{False}. 
It shows that for $E_{0,0}(a_1,a_2)$, there is no point
$(a_1,a_2)\in(0,1)^2$ at which both partial derivatives vanish and the
denominator is nonzero.
\begin{verbatim}
testmath[a1_, a2_] := 
(	-a1  (5 + a1  (11 + (-3 + a1)  a1)) - 14  (-1 + a2) 
	+  a1  (-7 + (-2 + a1)  a1  (-26 + a1^2))  a2 
	-   2  (-1 + a1)  (-7 +  a1  (27 + 2  a1  (-7 + 2  (-3 + a1)  a1)))  a2^2 
	+  2  (-1 + a1)  (-7 +  a1  (14 + a1  (15 + a1  (-29 + 10  a1))))  a2^3 
	- (-1 +  a1)^2  a1  (21 + a1  (-37 + 20  a1))  a2^4 
	+  7  (-1 + a1)^4  a1  a2^5
)
/(	(-1 + a1)  
	(	(-1 + a1)  (2 + a1  (5 + a1)) 
		+ (2 + a1  (7 + a1  (-7 + a1  (-13 + 3  a1))))  a2 
		-   2  (-1 + a1  (3 + a1  (6 + a1  (-14 + a1 + a1^2))))  a2^2 
		+  2  (-1 + a1)  (1 + a1 + (-2 + a1)  a1^2  (4 + 3  a1))  a2^3 
		- (-1 + a1)^2  a1  (-3 - 2  a1 + 6  a1^2)  a2^4 
		+ (-1 + a1)^3  a1  (1 + 2  a1)  a2^5
	)
);

Npoly = Numerator[Together[testmath[a1, a2]]];
Dpoly = Denominator[Together[testmath[a1, a2]]];

Epoly = FullSimplify[Dpoly/((a1 - 1) (a2 - 1))];

G1 = Expand[D[Dpoly, a1] Npoly - D[Npoly, a1] Dpoly];
G2 = Expand[D[Dpoly, a2] Npoly - D[Npoly, a2] Dpoly];

Reduce[{G1 == 0, G2 == 0, Epoly != 0, 0 < a1 < 1, 0 < a2 < 1}, 
  {a1, a2}, Reals, Method -> "CylindricalDecomposition"]
\end{verbatim}

\section{The Support of the Oblivious Strategies, for $d=1,\ldots,6$}
\label{sec:CandOptSupport}

\begin{table}[!h]
\centering
\scriptsize
\setlength{\tabcolsep}{3pt}
\begin{tabular}{c c c c c c c}
\hline
& $d=1$ & $d=2$ & $d=3$ & $d=4$ & $d=5$ & $d=6$ \\
\hline
$a_1$ & 0.199256999224 & 0.510592701206 & 0.512931318666 & 0.512722350241 & 0.511638743100 & 0.510482689385 \\
$a_2$ & 0 & 0.358187800391 & 0.479226061268 & 0.497610517542 & 0.503169833641 & 0.505130229674 \\
$a_3$ &  & 0.332727973539 & 0.423967871873 & 0.475390850025 & 0.491415084361 & 0.497945272494 \\
$a_4$ &  & 0.258214997695 & 0.391472964185 & 0.448089184416 & 0.476514466580 & 0.488952954463 \\
$a_5$ &  & 0 & 0.378469909722 & 0.425810342760 & 0.459931363441 & 0.478414859081 \\
$a_6$ &  &  & 0.345664995034 & 0.411628362935 & 0.445033080053 & 0.467193361811 \\
$a_7$ &  &  & 0.246783051357 & 0.405015811676 & 0.433647076748 & 0.456692278305 \\
$a_8$ &  &  & 0.051759013397 & 0.385764474436 & 0.426126189355 & 0.447918575447 \\
$a_9$ &  &  & 0 & 0.336371427261 & 0.422201209540 & 0.441288401885 \\
$a_{10}$ &  &  &  & 0.264259249658 & 0.409453194378 & 0.436801813261 \\
$a_{11}$ &  &  &  & 0.163241017065 & 0.379651156085 & 0.434158570964 \\
$a_{12}$ &  &  &  & 0.034589169986 & 0.339997988607 & 0.425016462575 \\
$a_{13}$ &  &  &  & 0 & 0.287848984107 & 0.404955658349 \\
$a_{14}$ &  &  &  &  & 0.221796997736 & 0.379757882382 \\
$a_{15}$ &  &  &  &  & 0.143701927823 & 0.348162173931 \\
$a_{16}$ &  &  &  &  & 0.052098019372 & 0.309113945143 \\
$a_{17}$ &  &  &  &  & 0 & 0.262259907796 \\
$a_{18}$ &  &  &  &  &  & 0.208138846566 \\
$a_{19}$ &  &  &  &  &  & 0.146486977763 \\
$a_{20}$ &  &  &  &  &  & 0.079495834116 \\
$a_{21}$ &  &  &  &  &  & 0 \\
\hline
\end{tabular}
\caption{Numerically computed strategy parameters for the weak-peek optimization.
For each fixed $d$, the column lists the optimized values of the parameters $a_i$ up to the first reported zero.  The vanishing index is therefore identified by the first zero appearing in each column.}
\label{tab:weakpeek-alphas}
\end{table}

\end{document}